\theoremstyle{plain}
\tikzset{join/.code=\tikzset{after node path={%
\ifx\tikzchainprevious\pgfutil@empty\else(\tikzchainprevious)%
edge[every join]#1(\tikzchaincurrent)\fi}}}
\tikzset{>=stealth',every on chain/.append style={join},
         every join/.style={->}}
\tikzset{
    >=stealth',
    punkt/.style={
           rectangle,
           rounded corners,
           draw=black, very thick,
           text width=6.5em,
           minimum height=2em,
           text centered},
    pil/.style={
           ->,
           thick,
           shorten <=2pt,
           shorten >=2pt,}
}
\newcommand{\bee}{\begin{enumerate}}
\newcommand{\eee}{\end{enumerate}}
\newcommand{\benn}{\begin{equation*}}
\newcommand{\eenn}{\end{equation*}}
\newcommand{\be}{\begin{equation}}
\newcommand{\ee}{\end{equation}}
\newcommand{\bean}{\begin{eqnarray}}
\newcommand{\eean}{\end{eqnarray}}
\newcommand{\bea}{\begin{eqnarray*}}
\newcommand{\eea}{\end{eqnarray*}}
\newcommand{\w}{\wedge}
\newcommand{\p}{\partial}
\newcommand{\Ci}{C^{\infty}}
\newcommand{\N}{\mathbb{N}}
\newcommand{\R}{\mathbb{R}}
\newcommand{\Q}{\mathbb{Q}}
\newcommand{\lp}{\left(}
\newcommand{\rp}{\right)}
\newcommand{\op}[1]{\!\!\mathop{\rm ~#1}\nolimits}
\newcommand{\mbi}{\mathbb{I}}
\newcommand{\dd}{\op{d}}
\newcommand{\cF}{{\cal F}}
\newcommand{\Diff}{\operatorname{Diff}}
\mathchardef\za="710B  
\mathchardef\zb="710C  
\mathchardef\zg="710D  
\mathchardef\zd="710E  
\mathchardef\zve="710F 
\mathchardef\zz="7110  
\mathchardef\zh="7111  
\mathchardef\zy="7112 
\mathchardef\zi="7113  
\mathchardef\zk="7114  
\mathchardef\zl="7115  
\mathchardef\zm="7116  
\mathchardef\zn="7117  
\mathchardef\zx="7118  
\mathchardef\zp="7119  
\mathchardef\zr="711A  
\mathchardef\zs="711B  
\mathchardef\zt="711C  
\mathchardef\zu="711D  
\mathchardef\zf="711E 
\mathchardef\zq="711F  
\mathchardef\zc="7120  
\mathchardef\zw="7121  
\mathchardef\ze="7122  
\mathchardef\zvy="7123  
\mathchardef\zvw="7124  
\mathchardef\zvr="7125 
\mathchardef\zvs="7126 
\mathchardef\zvf="7127  
\mathchardef\zG="7000  
\mathchardef\zD="7001  
\mathchardef\zY="7002  
\mathchardef\zL="7003  
\mathchardef\zX="7004  
\mathchardef\zP="7005  
\mathchardef\zS="7006  
\mathchardef\zU="7007  
\mathchardef\zF="7008  
\mathchardef\zW="700A  
\newcommand{\cyclic}{\mathop{\kern0.9ex{{+}
\kern-2.15ex\raise-.25ex\hbox{\Large\hbox{$\circlearrowright$}}}}\limits}
 \newcommand{\cS}{{\cal S}}
 \newcommand{\cP}{{\cal P}}
 \newcommand{\cC}{{\cal C}}
 \newcommand{\cA}{{\cal A}}
 \newcommand{\cD}{{\cal D}}
 \newcommand{\cO}{{\cal O}}
 \newcommand{\cB}{{\cal B}}
 \newcommand{\cQ}{{\cal Q}}
 \newcommand{\cI}{{\cal I}}
 \newcommand{\cR}{{\cal R}}
\newtheorem{rem}{Remark}
\newtheorem{theo}[rem]{Theorem}
\newtheorem{prop}[rem]{Proposition}
\newtheorem{lem}[rem]{Lemma}
\newtheorem{cor}[rem]{Corollary}
\newtheorem{ex}[rem]{Example}
\newtheorem{defi}[rem]{Definition}
\newcommand{\ul}{\underline}
\newcommand{\h}{\op{Hom}}
\newcommand{\0}{\otimes}
\newcommand{\id}{\op{id}}
\DeclareMathAlphabet{\mathpzc}{OT1}{pzc}{m}{it}
\newcommand{\Hom}{\mathrm{Hom}}
 \newcommand{\cJ}{\mathcal{J}}
\begin{document}
\title{\bf On Koszul-Tate resolutions and Sullivan models}
\date{}
\author{Damjan Pi\v{s}talo and Norbert Poncin}
\maketitle

\begin{abstract} We report on Koszul-Tate resolutions in Algebra, in Mathematical Physics, in Cohomological Analysis of {\small PDE}-s, and in Homotopy Theory. Further, we define an abstract Koszul-Tate resolution in the frame of $\cD$-Geometry, i.e., geometry over differential operators. We prove Comparison Theorems for these resolutions, thus providing a dictionary between the different fields. Eventually, we show that all these resolutions are of the new $\cD$-geometric type.

\end{abstract}

\vspace{2mm} \noindent {\bf MSC 2010}: 18G55, 16E45, 35A27, 32C38, 16S32\medskip

\noindent{\bf Keywords}: Koszul-Tate resolution, relative Sullivan algebra, gauge theory, partial differential equation, jet bundle, compatibility complex, model category, homotopy theory, $\cD$-module
\thispagestyle{empty}

\tableofcontents

\section{Introduction}

The present paper arose from our interest in a coordinate-free approach to the moduli space of solutions of a system of partial differential equations ({\small PDE}-s) modulo symmetries and in particular to the Batalin-Vilkovisky formalism for gauge theories. Vinogradov's Cohomological Analysis of {\small PDE}-s \cite{Sascha} is a landmark in this field. It interprets the solution space as a smooth manifold inside an infinite jet space. Beilinson and Drinfeld \cite{BD} view this solution space as a $\cD$-scheme, where $\cD$ denotes the ring of linear differential operators of a smooth scheme $X$.\medskip 

We are convinced that the best framework for the Batalin-Vilkovisky formalism is \linebreak\ul{homotopical algebraic $\cD$-geometry} \cite{BPP3}, a combination of $\cD$-Geometry and Homotopical Algebraic Geometry \cite{TV08}. This idea leads to derived $\cD$-stacks, i.e., sheaves from the category $\tt DG\cD A$ of differential graded commutative algebras over $\cD$ to the category of simplicial sets. The definition of the sheaf condition uses an appropriate model structure on $\tt DG\cD A$ \cite{BPP4}. The corresponding cofibration-trivial-fibration factorization and cofibrant replacement functor provide a \underline{minimal relative Sullivan $\cD$-algebra}, which turned out to be a good candidate for the \underline{Koszul-Tate resolution} -- the first step of the Batalin-Vilkovisky construction.\medskip

In this paper, \underline{we report on a series of Koszul-Tate resolutions}: on the Koszul resolution \cite{CE} {\it of a regular sequence}, the Koszul-Tate resolution \cite{Tate} {\it of a quotient ring}, the Koszul-Tate resolution {\it in Gauge Field Theory} \cite{HT} -- in particular on the Koszul-Tate resolution {\it in a regular first-order on-shell reducible gauge theory} \cite{BBH} and the Koszul-Tate resolution {\it in Cohomological Analysis of {\small PDE}-s} \cite{Ver}. Each one of these resolutions builds on the chro\-no\-lo\-gi\-cal\-ly preceding ones.\medskip

Next \ul{we comment on relative Sullivan algebras}. In \cite{Qui}, Quillen introduces model categories as a suitable framework for resolutions. A standard model structure on the category $\tt DG\Q A$ of differential graded commutative algebras over the field $\Q$ of rational numbers, as well as the small object argument, lead, for any morphism, to a relative Sullivan $\Q$-algebra that models this morphism \cite{Halperin} -- the relative Sullivan minimal model of the morphism. Although this relative Sullivan algebra is quasi-isomorphic to the target of the considered morphism and thus resolves this target differential graded algebra if it is concentrated in degree zero, and although it has the same structure as the Koszul-Tate resolution of a quotient ring, \ul{relative Sullivan minimal models and Koszul-Tate resolutions appeared independently} in their respective fields. We observed (see above) this similarity in structure after having defined a model structure on $\tt DG\cD A$, in order to deal with the Batalin-Vilkovisky formalism. Since this projective model structure exists only if the underlying smooth scheme $X$ (see above) is a smooth affine variety, our candidate-Koszul-Tate-resolution (see above), which we finally called the {\it cofibrant replacement} Koszul-Tate resolution \cite{BPP4}, also exists only in this case. We noticed that its main structure can nevertheless be extended to the general situation of an arbitrary smooth scheme, what leads then to a general abstract Koszul-Tate resolution, which we describe in this paper and which does {\it always} exist. Since we observed later on that an equivalent structure is used in \cite{BD} under the name of semi-free differential graded $\cD$-algebra, we refer to the latter resolution as the {\it $\cD$-geometric} Koszul-Tate resolution.\medskip

Beyond the surveys that we described in the two preceding paragraphs, we show in the present paper that \ul{all the Koszul-Tate resolutions that we reviewed are $\cD$-geometric} Koszul-Tate resolutions. This result provides additional evidence for our afore-mentioned conviction that homotopical $\cD$-geometry is the appropriate setting for the Batalin-Vilkovisky formalism.\medskip

The comparisons of the various Koszul-Tate resolutions are a major difficulty in view of the distinct languages used. Since we believe that such passages -- between the Algebra and Physics worlds \cite{Tate}, \cite{HT}, \cite{Bar}, the world of partial differential equations \cite{Sascha}, \cite{Ver}, the world of Homotopy Theory \cite{Qui}, \cite{Sul}, \cite{Halperin}, and the world of $\cD$-Geometry \cite{BD} (\cite{BPP3}, \cite{BPP4}) -- are lacking in the literature, we give precise comparison results for the Koszul-Tate resolutions that we considered, thus providing a kind of \ul{dictionary between different fields}.\medskip
 
We assume that most readers are familiar with homotopy and model categories (if not, a concise introduction can be found in the appendices of \cite{BPP1} and \cite{BPP2}), but we give a short introduction to regular first-order on-shell reducible field theories and provide in the appendix a smallest possible introduction to the jet bundle formalism in Field Theory and Cohomological Analysis.

\section{Koszul-Tate resolutions in Algebra and in Physics}\label{CKTR}

\subsection{Koszul resolution of a regular surface}\label{KRS}

Let $\zS$ be an embedded $p$-dimensional submanifold of $\R^n$. This means that, for each $x\in\zS$, there is an open neighborhood $\zW\subset\R^n$ such that $\zS\cap\zW$ is described by a {\bf regular} cartesian equation $E\in\Ci(\zW,\R^{n-p})$. By `regular' we mean that the equations $E_{\frak a}\in\Ci(\zW,\R)$ are independent, i.e., that the rank $\zr(\p_xE)$ is equal to $n-p$, for all $x\in \zS\cap \zW$. Assume for simplicity that the first $n-p$ columns of the Jacobian matrix are independent and use the decomposition $x=(x',x'')\in\R^{n-p}\times\R^p$. Then, locally, in the neighborhood of $\zS$, we have $E=E(x',x'')\Leftrightarrow x'=x'(E,x'')$. It follows that, locally, in the new coordinates $(E,x'')$, the equation of $\zS$ is $E=0$.\medskip

Adopt now the standpoint of Mathematical Physics and consider a submanifold $\zS\subset\R^n$ that is {\it globally} described by the equations $E_{\frak a}=0$, for all ${\frak a}$.\medskip

One of the fundamental consequences of regularity is the structure of the ideal $I(\zS)$ made of those smooth functions $\Ci(\R^n)$ that vanish on $\zS$. It is clear that any linear combination $F=\sum_{\frak a} F^{\frak a} E_{\frak a}$, $F^{\frak a}\in\Ci(\R^n)$, of the equations belongs to $I(\zS)$. Conversely, if $F\in I(\zS)$, we get, working in the coordinates $(E,x'')$, $$F(E,x'')=\int_0^1d_t\left(F(tE,x'')\right)\dd t=\sum_{\frak a} E_{\frak a}\int_0^1\left(\p_{E_{\frak a}}F\right)(tE,x'')\dd t\;=:\sum_{\frak a} F^{\frak a} E_{\frak a}\;.$$

We are now prepared to recall the construction of the Koszul resolution of the function algebra $\Ci(\zS)$ of
\be\label{E1}\zS:E_{\frak a}=0,\;\forall {\frak a}\in\{1,\ldots,n-p\}\;,\ee where the $E_{\frak a}$ are the first coordinates of an appropriate coordinate system $(E,x'')$ of $\R^n$.

\begin{defi} The {\bf Koszul complex of the regular surface} (\ref{E1}) is the chain complex made of the free Grassmann algebra $$\op{K}=\Ci(\R^n)\otimes \cS[\zf^{{\frak a}*}]$$ on $n-p$ odd generators $\zf^{{\frak a}*}$ -- associated to the equations (\ref{E1}) -- and of the Koszul differential \be\label{KDiff}\zd_{\op{K}}=\sum_{\frak a}E_{\frak a}\,\p_{\zf^{{\frak a}*}}\;.\ee\end{defi}

\begin{rem}{\em Notice that the base ring for the tensor products $\0$ and $\cS$ has not been specified and that these products are merely a sensible notation for graded commutative polynomials in the generators with coefficients in $\Ci(\R^n)$.}\end{rem}

\begin{prop}\label{MathPhysKR}The Koszul complex of $\zS$ is a resolution of $\Ci(\zS)$, i.e., the homology of $(\op{K},\zd_{\op{K}})$ is given by \be\label{Homology}H_0(\op{K})=\Ci(\zS)\quad\text{and}\quad H_k(\op{K})=0,\;\forall k>0\;.\ee We refer to this resolution as the {\bf Koszul resolution} of $\Ci(\zS)$.\end{prop}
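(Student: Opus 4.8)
The plan is to prove exactness of the Koszul complex by a standard induction on the number $n-p$ of generators, exploiting the special coordinate system $(E,x'')$ in which the equations become $E_{\frak a}=0$. First I would record the statement about $H_0$: since $\zd_{\op K}$ maps the degree-one piece $\Ci(\R^n)\otimes\langle\zf^{{\frak a}*}\rangle$ onto the ideal $I(\zS)=\sum_{\frak a}E_{\frak a}\Ci(\R^n)$ — which is exactly the content of the computation $F=\sum_{\frak a}F^{\frak a}E_{\frak a}$ carried out just before the definition — the cokernel of $\zd_{\op K}$ in degree zero is $\Ci(\R^n)/I(\zS)=\Ci(\zS)$, giving $H_0(\op K)=\Ci(\zS)$.

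For the acyclicity in positive degrees I would pass to the coordinates $(E,x'')$, so that $\op K=\Ci(\R^n)\otimes\cS[\zf^{1*},\dots,\zf^{(n-p)*}]$ with $\zd_{\op K}=\sum_{\frak a}E_{\frak a}\p_{\zf^{{\frak a}*}}$, i.e. literally the Koszul complex on the sequence $(E_1,\dots,E_{n-p})$ inside the ring $\Ci(\R^n)$. The key algebraic input is that $(E_1,\dots,E_{n-p})$ is a \emph{regular sequence} in $\Ci(\R^n)$ when these are coordinate functions: multiplication by $E_{\frak a}$ is injective on $\Ci(\R^n)/(E_1,\dots,E_{{\frak a}-1})$, because that quotient is a ring of smooth functions in the remaining variables (a version of Hadamard's lemma / the Malgrange division argument, exactly the integral representation used above applied successively). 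Then I would filter $\op K$ by the subcomplex $\op K'$ generated by $\zf^{1*},\dots,\zf^{(n-p-1)*}$, observe the short exact sequence of complexes $0\to\op K'\to\op K\to\op K'[-1]\to 0$ coming from $\zf^{(n-p)*}$ (where the quotient is $\op K'$ shifted, with differential again $\zd_{\op K'}$), and read off the long exact homology sequence: the connecting map $H_k(\op K'[-1])\cong H_{k-1}(\op K')\to H_{k-1}(\op K')$ is multiplication by $\pm E_{n-p}$, which by the regular-sequence property is injective on $H_0(\op K')=\Ci(\R^n)/(E_1,\dots,E_{n-p-1})$ and acts on the vanishing higher homology by induction; hence $H_k(\op K)=0$ for $k>0$. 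The base case $n-p=1$ is just the two-term complex $0\to\Ci(\R^n)\xrightarrow{E_1}\Ci(\R^n)\to 0$, acyclic in positive degree precisely because multiplication by the coordinate $E_1$ is injective.

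Alternatively, and perhaps more in the spirit of the "regular surface" setup, one can give a direct homotopy: in the coordinates $(E,x'')$ define the contracting operator $h=\sum_{\frak a}\zf^{{\frak a}*}\int_0^1(\p_{E_{\frak a}}\,\cdot\,)(tE,x'')\,\dd t$ raising Grassmann degree by one, and check by the same integral identity $F(E,x'')-F(0,x'')=\sum_{\frak a}E_{\frak a}\int_0^1(\p_{E_{\frak a}}F)(tE,x'')\dd t$ (together with its multi-generator Grassmann analogue) that $\zd_{\op K}h+h\zd_{\op K}=\id-\zvp$, where $\zvp$ is projection onto $F(0,x'')\otimes 1$; this immediately yields $H_k(\op K)=0$ for $k>0$ and $H_0(\op K)=\Ci(\zS)$ simultaneously. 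I would present the induction as the main line and mention the explicit homotopy as a remark.

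The main obstacle is the analytic point hidden in "regular sequence": that multiplication by a coordinate function $E_{\frak a}$ is injective modulo the previous ones in the ring $\Ci(\R^n)$ of \emph{smooth} (not polynomial or analytic) functions. This is where the special coordinate system $(E,x'')$ and the Hadamard-type integral formula are essential, and it is the step one must not treat as purely formal; everything else — the short exact sequence of complexes, the long exact sequence, the degree bookkeeping — is routine homological algebra.
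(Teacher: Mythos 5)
Your proof is correct in its main line, but it takes a genuinely different route from the paper's. The degree-zero computation is the same in both: the boundaries in degree $0$ are exactly $I(\zS)=\sum_{\frak a}E_{\frak a}\Ci(\R^n)$ by the Hadamard-type integral formula, so $H_0(\op K)=\Ci(\zS)$. For the vanishing of $H_k(\op K)$, $k>0$, the paper does \emph{not} induct on the number of generators: it uses the single homotopy $h=\sum_{\frak a}\zf^{{\frak a}*}\p_{E_{\frak a}}$, for which $\zd_{\op K}h+h\zd_{\op K}$ is the number operator $E_{\frak a}\p_{E_{\frak a}}+\zf^{{\frak a}*}\p_{\zf^{{\frak a}*}}$, and then inverts that operator on the positive-degree part via the joint scaling $c\mapsto c(\zl E,x'',\zl\zf^*)$ and the integral $\int_0^1\frac{\dd\zl}{\zl}(\cdots)$. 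Your main argument instead realizes $\op K$ as an iterated mapping cone ($0\to\op K'\to\op K\to\op K'[-1]\to 0$ with connecting map $\pm E_{n-p}$) and reduces everything to the statement that $(E_1,\dots,E_{n-p})$ is a regular sequence in $\Ci(\R^n)$, i.e.\ that multiplication by the coordinate function $E_{\frak a}$ is injective on $\Ci$ of the partial zero locus. That analytic input is correct (injectivity of multiplication by a coordinate on smooth functions follows from continuity, and the identification of the quotient with $\Ci$ of the zero set is again Hadamard), and your route has the merit of isolating exactly where smoothness enters; the paper's homotopy handles all degrees at once but hides the same analytic content in the integral formula. One caveat on your ``alternative'' remark: the operator $h=\sum_{\frak a}\zf^{{\frak a}*}\int_0^1(\p_{E_{\frak a}}\,\cdot\,)(tE,x'')\,\dd t$, which rescales only the $E$ variables, does \emph{not} satisfy $\zd_{\op K}h+h\zd_{\op K}=\id-\zvp$ in positive Grassmann degree (already on $G(E,x'')\,\zf^{{\frak b}*}$ one picks up the extra terms $\zf^{{\frak b}*}\int_0^1G(tE,x'')\,\dd t$ and cross terms in $\zf^{{\frak a}*}E_{\frak b}$); to make the homotopy work one must rescale $E$ and $\zf^*$ simultaneously and insert the factor $\dd\zl/\zl$, exactly as in the paper's formula. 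So either adopt the paper's joint scaling in that remark or, as you propose, keep the induction as the main proof.
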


Indeed, in degree 0, the cycles are the functions in $\Ci(\R^n)$ and the boundaries are the elements of $$\zd_{\op{K}}\{\sum_{\frak b}F^{\frak b}\,\zf^{{\frak b}*}\}=\{\sum_{\frak a}F^{\frak a}\,E_{\frak a}\}=I(\zS)\;,$$ so that $H_0(\op{K})=\Ci(\zS)\,.$ The proof that the higher homology spaces vanish is technical and not really instructive. It is based on the fact that the operator $h=\sum_{\frak{a}}\zf^{{\frak{a}}*}\p_{E_{\frak{a}}}$ is a homotopy between the Euler vector field or number operator $E_{\frak a}\p_{E_{\frak a}}+\zf^{{\frak a}*}\p_{\zf^{{\frak a}*}}$ and the zero chain map, so that any chain $c$ reads $$c(E,x'',\zf^*)=c(0,x'',0)+\int_0^1\frac{\op{d}\zl}{\zl}((\zd_{\op{K}}h+h\zd_{\op{K}})c)(\zl E,x'',\zl\zf^*)\;.$$

\subsection{Koszul resolution of a regular sequence}\label{KRRS}

Let $R$ be a commutative unital ring, $M$ a module over $R$ of finite rank $r$, and let $d\in M^*:=\h_R(M,R)$ be an $R$-linear map.

\begin{defi} The {\bf Koszul complex of the covector $d$} is the graded $R$-module $\bigwedge_R M$ endowed with the differential $\zd_{\op{K}}$ given by the extension of $d$ as a degree $-1$ derivation. We denote the Koszul chain complex of $d$ by $K[d\,]$.\end{defi}

More precisely, the Koszul differential is given by $$\zd_{\op{K}}(m_1\w\ldots\w m_k)=\sum_{\ell=1}^k(-1)^{\ell-1}(dm_\ell)\;m_1\w\ldots\widehat{\ell}\ldots\w m_k\;.$$

Assume now that $M=R^r$ is free with basis $(e_{\frak a})_{\frak a}$. In this case, the linear map $d$ is given by $d=(E_1,\ldots,E_r)$ ($E_{\frak a}\in R$). It is well-known that the Koszul complex $K[E_1,\ldots,E_r]$ coincides with the tensor product $K[E_1\,]\0\ldots\0 K[E_r\,]$ of the Koszul complexes $K[E_{\frak a}]$, where $E_{\frak a}$ is viewed as an $R$-linear map from $R e_{\frak a}$ to $R$. Indeed, in both cases the underlying $R$-module is $\bigoplus_{k=0}^rR^{\,\complement^k_r}$ ($\,\complement^k_r$ is the binomial coefficient) and the differential is defined by $$\zd_{\op{K}}(e_{{\frak a}_1}\w\ldots\w e_{{\frak a}_k})=\sum_{\ell=1}^k(-1)^{\ell-1} E_{{\frak a}_\ell}\, e_{{\frak a}_1}\w\ldots\widehat{\ell}\ldots\w e_{{\frak a}_k}\;.$$ The degree 0 Koszul homology module $H_0(K[E_1,\ldots, E_r])$ is the quotient of the kernel $R$ by the image $\{\zd_{\op{K}}(\zr)=\sum_{\frak a} \zr^{\frak a}E_{\frak a}, \zr\in R^r\}$, i.e., the quotient \be\label{OnShFun}R/(E_1,\ldots,E_r)\ee of the ring $R$ by its ideal generated by the $E_{\frak a}\,$.\medskip

The considered Koszul complex can of course be written $$K[E_1,\ldots,E_r]=R\0\cS[e_1,\ldots,e_r]\quad\quad\text{and}\quad\quad \zd_{\op{K}}=\sum_{\frak a}E_{\frak a}\,\p_{e_{\frak a}}\;,$$ provided we view the $e_{\frak a}$ as degree 1 generators: the definitions of the present subsection and the just mentioned homology result coincide with those of the preceding subsection.\medskip

As easily checked, the degree 1 homology module is (at least if $R$ is a $\mathbb Q$-algebra) the quotient of the cycles $\{\zr\in R^r:\sum_{\frak a}\zr^{\frak a}E_{\frak a}=0\}$ by the trivial cycles $$\{\zr=\Theta\,(E_1,\ldots, E_r)^{\tilde{}}:\Theta\in \op{Sk}(r,R)\}\;,$$ where `tilde' is the transpose and where $\op{Sk}(r,R)$ denotes the skew-symmetric $r\times r$ matrices with entries in $R$.\medskip

In the language of the preceding subsection this means that $H_1(K[E_1,\ldots, E_r])$ is given by the linear relations between the equations modulo the trivial relations.\medskip

If all the `relations' are trivial, as well as all the `higher relations' pertaining to the higher homology modules, the Koszul complex is a resolution of the quotient (\ref{OnShFun}) of `on-shell functions'.

\begin{defi} A sequence $(E_1,\ldots,E_r)$ of elements of $R$ is called {\bf regular}, if $E_{\frak a}$ is not a zero divisor of $R/(E_1,\ldots,E_{{\frak a}-1})$, for all ${\frak a}\in\{1,\ldots,r\}$, and if $R/(E_1,\ldots,E_r)\neq 0$.\end{defi}

To illustrate the definition, we consider the case $r=2$. The existence of a relation $$\zr^1E_1+\zr^2E_2=0$$ is equivalent to $E_2[\zr^2]=0$, with $[\zr^2]\in R/(E_1)$. It follows from the regularity assumption that $\zr^2=\zr^3E_1$, so that $E_1(\zr^1+\zr^3E_2)=0$. Applying again the regularity, we obtain $\zr^1=-\zr^3E_2$ and, finally, $$\left(
                                                                                     \begin{array}{c}
                                                                                       \zr^1 \\
                                                                                       \zr^2 \\
                                                                                     \end{array}
                                                                                   \right)=
\left(
  \begin{array}{cc}
    0 & -\zr^3 \\
    \zr^3 & 0 \\
  \end{array}
\right)
\left(
  \begin{array}{c}
    E_1 \\
    E_2 \\
  \end{array}
\right)\;,
$$
so that the linear combination $\zr^1E_1+\zr^2E_2=-\zr^3E_2E_1+\zr^3E_1E_2$ vanishes trivially. This fact that regularity implies that all relations are trivial, can be extended, first to higher $r$, and second to higher relations. Actually we have the following (well-known) mathematical variant of Proposition \ref{MathPhysKR}:

\begin{prop} If the sequence $(E_1,\ldots, E_r)$ of elements of $R$ is regular, the Koszul complex $K[E_1,\ldots,E_r]$ resolves the quotient ($\ref{OnShFun}$). More generally, if the ring $R$ is local and the $R$-module $M$ is a finitely generated, a sequence $(E_1,\ldots,E_r)$ of elements of $R$ is $M$-regular if and only if the Koszul complex $K[E_1,\ldots,E_r]$ resolves the quotient $M/(E_1,\ldots,E_r)M\,.$ \end{prop}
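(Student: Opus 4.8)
The plan is to prove both directions of the "finitely generated module over a local ring" statement, and to deduce the first assertion (arbitrary $R$, regular sequence, $M = R$) as a special case — noting that the ideal $(E_1,\ldots,E_r) \subset R$ is proper precisely because of the condition $R/(E_1,\ldots,E_r) \neq 0$ in the definition of regular sequence, so the local hypothesis is not actually needed for that direction. I would organize the whole argument around the tensor-product decomposition $K[E_1,\ldots,E_r] = K[E_1] \otimes_R \cdots \otimes_R K[E_r]$ recalled above, together with the standard long exact sequence in homology it produces.

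First, the easy direction ("regular $\Rightarrow$ resolution"). I would induct on $r$. For $r=1$, $K[E_1]$ is the two-term complex $0 \to M \xrightarrow{E_1} M \to 0$, whose homology is $\ker(E_1\colon M\to M)$ in degree $1$ and $M/E_1 M$ in degree $0$; $M$-regularity of $(E_1)$ says exactly that the degree-$1$ homology vanishes, so the complex resolves $M/E_1M$. For the inductive step, write $K[E_1,\ldots,E_r] = K[E_1,\ldots,E_{r-1}] \otimes_R K[E_r]$. Tensoring the short exact sequence of complexes $0 \to R \to K[E_r] \to R[1] \to 0$ (here $R[1]$ is $R$ placed in degree $1$) with $K' := K[E_1,\ldots,E_{r-1}]$ yields a short exact sequence $0 \to K' \to K[E_1,\ldots,E_r] \to K'[1] \to 0$, hence a long exact homology sequence in which the connecting map $H_k(K'[1]) = H_{k-1}(K') \to H_{k-1}(K')$ is multiplication by $\pm E_r$. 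By the inductive hypothesis applied to the sequence $(E_1,\ldots,E_{r-1})$ — which is $M$-regular because it is an initial segment of an $M$-regular sequence — we have $H_k(K') = 0$ for $k \geq 1$ and $H_0(K') = M/(E_1,\ldots,E_{r-1})M$. The long exact sequence then collapses to show $H_k(K[E_1,\ldots,E_r]) = 0$ for $k \geq 2$, and gives $H_1(K[E_1,\ldots,E_r]) = \ker\big(E_r \colon M/(E_1,\ldots,E_{r-1})M \to M/(E_1,\ldots,E_{r-1})M\big)$, which vanishes precisely by the regularity hypothesis on $E_r$; likewise $H_0 = M/(E_1,\ldots,E_r)M$, as desired.

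For the converse ("resolution $\Rightarrow$ regular", under $R$ local, $M$ finitely generated), I would again induct on $r$, but now I need the Noetherian-free input that a nonzero finitely generated module over a local ring $(R,\cm)$ is nonzero after reduction mod $\cm$, plus the key lemma: if $(E_1,\ldots,E_r)$ with all $E_{\frak a}\in\cm$ has a permutation that is $M$-regular, then the sequence itself is $M$-regular — or, what I actually need, the rigidity statement that vanishing of $H_1$ of the Koszul complex, together with $M/(E_1,\ldots,E_r)M\neq 0$, forces the sequence to be regular. Concretely: suppose $K[E_1,\ldots,E_r]$ is a resolution. Using the same long exact sequence as above with $K' = K[E_1,\ldots,E_{r-1}]$, the vanishing of $H_k(K[E_1,\ldots,E_r])$ for all $k\geq 1$ forces $E_r$ to act injectively on each $H_{k-1}(K')$; since $R$ is local and $E_r \in \cm$ — here I use that $(E_1,\ldots,E_r)$ is a proper ideal, so by Nakayama each $E_{\frak a}$ lies in $\cm$ (if some $E_{\frak a}$ were a unit the quotient would be zero) — Nakayama's lemma applied to the finitely generated modules $H_{k-1}(K')$ forces $H_{k-1}(K') = 0$ for $k-1 \geq 1$. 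Thus $K'$ is itself a resolution; by induction $(E_1,\ldots,E_{r-1})$ is $M$-regular, and then $H_0(K') = M/(E_1,\ldots,E_{r-1})M$ with $E_r$ acting injectively on it (from $H_1(K[E_1,\ldots,E_r])=0$), which is exactly the statement that $E_r$ is a non-zero-divisor on that quotient; together with $M/(E_1,\ldots,E_r)M \neq 0$ this completes the regularity.

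The main obstacle is the converse direction, and specifically the point where I invoke Nakayama's lemma to turn "$E_r$ acts injectively on the finitely generated module $H_{k-1}(K')$" into "$H_{k-1}(K') = 0$" — this is the only place the local hypothesis is genuinely used and it is what makes the "if and only if" true for modules (it is false for general $R$, e.g. a nonregular permutation of a regular sequence). I would make sure to state the finite-generation of the Koszul homology modules $H_\bullet(K')$ carefully: they are subquotients of the finitely generated $R$-modules $M^{\oplus \binom{r-1}{k}}$, hence finitely generated since $R$ is Noetherian — and here I should either add Noetherianity to the hypotheses or argue finiteness directly; over a local ring in this context one typically does assume Noetherian, and I would flag this. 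Everything else — the tensor decomposition, the short exact sequence of complexes, the identification of the connecting homomorphism with multiplication by $\pm E_r$, and the $r=1$ base case — is routine homological bookkeeping that I would not spell out in full.
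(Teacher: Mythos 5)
The paper does not actually prove this proposition: it is stated as a well-known fact, and the only supporting material in Subsection 2.2 is the explicit $r=2$ computation showing that regularity forces every relation $\zr^1E_1+\zr^2E_2=0$ to be trivial (i.e.\ $H_1=0$). Your argument --- the decomposition $K[E_1,\ldots,E_r]=K[E_1,\ldots,E_{r-1}]\0_R K[E_r]$, the short exact sequence of complexes, the long exact sequence with connecting map $\pm E_r$, and Nakayama for the converse --- is the standard textbook proof and is the right way to upgrade the paper's $r=2$ illustration to a complete argument. Your flag about Noetherianity is also apt: the Koszul homology modules are subquotients (not merely quotients) of $M^{\oplus\binom{r-1}{k}}$, so the finite generation needed for Nakayama requires $R$ Noetherian or a separate argument, a hypothesis the paper's statement omits.

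There is, however, one step in your converse direction that fails as written. From the segment $0=H_k(K)\to H_{k-1}(K')\xrightarrow{\pm E_r}H_{k-1}(K')$ of the long exact sequence you correctly extract that $E_r$ acts \emph{injectively} on $H_{k-1}(K')$, and you then claim that Nakayama forces $H_{k-1}(K')=0$. Nakayama does not do that: multiplication by an element of $\mathfrak{m}$ can perfectly well be injective on a nonzero finitely generated module (multiplication by $t$ on $k[[t]]$, say). What Nakayama needs is \emph{surjectivity}, i.e.\ $E_r\,H_k(K')=H_k(K')$, and this comes from the other exactness segment of the same long exact sequence, namely $H_k(K')\xrightarrow{\pm E_r}H_k(K')\to H_k(K)=0$ for $k\ge1$. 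With that substitution the induction closes exactly as you describe: $H_k(K')=0$ for $k\ge1$, so $K'$ resolves $M/(E_1,\ldots,E_{r-1})M$, the inductive hypothesis makes $(E_1,\ldots,E_{r-1})$ an $M$-regular sequence, and the injectivity of $E_r$ on $H_0(K')$ --- which is where your injectivity observation is genuinely used --- says precisely that $E_r$ is a non-zero-divisor on $M/(E_1,\ldots,E_{r-1})M$.
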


\begin{rem}{\em\label{Triv} It follows that, if the sequence $(E_1,\ldots, E_r)$ of elements of $R$ is regular, then any relation $\sum_{\frak a}\zr^{\frak a}E_{\frak a}=0$ is {\bf trivial}, in the sense that $\zr=\Theta\,(E_1,\ldots, E_r)^{\tilde{}}$, with $\Theta\in \op{Sk}(r,R)$.}\end{rem}

\subsection{Koszul-Tate resolution of a quotient ring}

In \cite{Tate}, J. Tate starts from a Noetherian commutative unital ring $R$, defines the category $\tt DGRA$ of differential graded commutative unital $R$-algebras $A$ as usual except that the $R$-module $A_0$ in degree 0 is assumed to be just $R\cdot 1_A$, he calls such a differential graded $R$-algebra $A$ free, if there exist homogeneous generators $(e_1,e_2,\ldots)$ such that $A=R\0\cS[e_1,e_2,\ldots]$ and each $R$-module $A_n$ of degree $n>0$ contains only a finite number of $e_i$, and, finally, says that $A\in\tt DGRA$ is acyclic if $H_n(A)=0$, for all $n>0$ (for most other authors acyclic means that one has in addition $H_0(A)=0$).\medskip

The paper contains two main theorems.

\begin{theo}\label{TateTheo1} For any ideal $I\subset R$ of any Noetherian commutative unital ring $R$, there exists a free resolution of $R/I$ in $\tt DGRA$.\end{theo}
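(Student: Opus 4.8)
The plan is to build the free resolution of $R/I$ by the iterative ``adjunction of variables to kill cycles'' procedure that Tate introduces, working by induction on homological degree. First I would start with the degree-$1$ stage: choose a finite generating set $(E_1,\ldots,E_r)$ of the ideal $I$ (possible since $R$ is Noetherian, hence $I$ is finitely generated) and form $A^{(1)}=R\0\cS[e_1,\ldots,e_r]$ with the $e_{\frak a}$ in degree $1$ and the Koszul-type differential $\zd(e_{\frak a})=E_{\frak a}$. By construction $H_0(A^{(1)})=R/I$, so the target homology in degree $0$ is already correct and will remain so at every later stage, since all subsequent generators are added in degrees $\ge 2$ and thus affect neither $A_0=R\cdot 1_A$ nor $A_1$ as far as $H_0$ is concerned.

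The inductive step is the heart of the argument. Suppose inductively that we have a free differential graded $R$-algebra $A^{(n)}$, with $A^{(n)}_k$ finitely generated as an $R$-module in each degree, such that $H_0(A^{(n)})=R/I$ and $H_k(A^{(n)})=0$ for $1\le k\le n-1$. The module $Z_n$ of $n$-cycles of $A^{(n)}$ need not be finitely generated in general, but here is where Noetherianity is used decisively: $A^{(n)}_n$ is a finitely generated module over the Noetherian ring $R$, hence Noetherian, so its submodule $Z_n$ is finitely generated; pick generators $z_1,\ldots,z_s$ of $Z_n$ whose classes span $H_n(A^{(n)})$. Adjoin new generators $t_1,\ldots,t_s$ in degree $n+1$, i.e. set $A^{(n+1)}=A^{(n)}\0\cS[t_1,\ldots,t_s]$ (a polynomial algebra if $n+1$ is even, an exterior algebra if $n+1$ is odd — this is exactly why Tate works with $\cS$ rather than a plain symmetric or exterior algebra), and extend the differential by $\zd(t_j)=z_j$. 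One checks $\zd^2=0$ (the $z_j$ are cycles), that the new algebra is still free with each degree finitely generated over $R$, and — the key homological point — that $H_k(A^{(n+1)})=H_k(A^{(n)})$ for $k\le n-1$ while $H_n(A^{(n+1)})=0$: the only new $(n+1)$-chains relevant to $H_n$ are $R$-linear combinations of the $t_j$, and their boundaries are precisely $R$-linear combinations of the $z_j$, which by choice generate $Z_n$, so every $n$-cycle becomes a boundary. Finally I would pass to the colimit $A=\colim_n A^{(n)}=\dlim_n A^{(n)}$ in $\tt DGRA$; because homology commutes with this filtered colimit and each homology group has stabilized at a finite stage, $H_0(A)=R/I$ and $H_k(A)=0$ for all $k>0$, and $A$ is free with $A_n$ containing only finitely many generators in each degree $n$. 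Thus $A$ is the desired free resolution.

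The main obstacle — and the only place that genuinely needs the hypotheses — is guaranteeing at each stage that the cycle module $Z_n\subset A^{(n)}_n$ can be killed by adjoining only \emph{finitely many} generators in degree $n+1$, so that the freeness condition of $\tt DGRA$ (finitely many generators in each degree) is preserved and the induction can proceed. This is exactly where ``$R$ Noetherian'' enters: it propagates to Noetherianity of the finitely generated $R$-modules $A^{(n)}_n$, hence to finite generation of the submodule $Z_n$. Without Noetherianity one would still obtain a free resolution in the weaker sense (allowing infinitely many generators in a fixed degree), but Tate's sharper statement requires this finiteness, and verifying it at every step of the recursion is the crux. A secondary, more bookkeeping-level point is the careful verification that adjoining degree-$(n+1)$ variables leaves the lower homology $H_k$ for $k\le n-1$ untouched; this follows because a degree-$(n+1)$ generator contributes to $A_k$ only through products that land in degrees $\ge n+1>k$, but it should be stated explicitly. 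Everything else — $\zd^2=0$, the colimit computation, the persistence of $H_0$ — is routine once the inductive scheme is set up.
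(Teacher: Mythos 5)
Your proposal is correct and follows essentially the same route as the paper's sketch: start with the Koszul complex on a finite generating set of $I$ (finiteness from Noetherianity), inductively adjoin degree-$(n+1)$ generators mapped onto cycles representing generators of $H_n$, observe that lower homology is untouched and $H_n$ is killed, and pass to the colimit. Your explicit justification that Noetherianity propagates to finite generation of each cycle module $Z_n$ makes precise the paper's remark that the Noetherian hypothesis yields finitely generated terms at every stage.
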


\noindent{\it Sketch of Proof.} Note first that, for a commutative ring, Noetherian, i.e, the property that any ascending chain of ideals stabilizes, is equivalent to the property that any ideal is finitely generated. Let now $(E_1,\ldots,E_r)$ be the generators of the ideal $I$, set \be\label{KT0}X^0=R\0\cS[e_1,\ldots,e_r]\;,\ee with all generators $e_{\frak a}$ in degree 1, and define the differential $d^0$ on $X^0$ by \be\label{KT1}d^0=\sum_{\frak a}E_{\frak a}\,\p_{e_{\frak a}}\;.\ee The homology module $H_0(X^0)$ is the module $R/I$. The complex $(X^0,d^0)$ is clearly the {\it Koszul complex} $(K[E_1,\ldots,E_r],\zd_{\op{K}})$ of the sequence $(E_1,\ldots,$ $E_r)$ -- see Subsection \ref{KRRS}. However, since this sequence is here not assumed to be regular, the higher homology modules do not necessarily vanish. If the module $H_1(X^0)$ does not vanish, we choose $F_1,\ldots,F_s\in\ker_1 d^0$ such that the homology classes $[F_{\frak b}]^0$ generate $H_1(X^0)$, and we set \be\label{KT2}X^1=R\0\cS[e_1,\ldots,e_r,f_1,\ldots,f_s]\;,\ee with all generators $f_{\frak b}$ in degree 2 and $d^1$ defined by \be\label{KT3}d^1=\sum_{\frak a}E_{\frak a}\,\p_{e_{\frak a}}+\sum_{\frak b}F_{\frak b}\,\p_{f_{\frak b}}\;.\ee Of course, $H_0(X^1)=H_0(X^0)=R/I$. As for $H_1(X^1)$, note that $\ker_1d^1=\ker_1d^0$ and that $\op{im}_1d^1$ (resp., $\op{im}_1d^0$) is made of the linear combinations of the type $$\sum_{\frak b}r^{\frak b}F_{\frak b}+\sum_{{\frak a}'{\frak a}''}r^{{\frak a}'{\frak a}''}(E_{{\frak a}'}e_{{\frak a}''}-E_{{\frak a}''}e_{{\frak a}'})\quad\quad\left(\text{resp., }\sum_{{\frak a}'{\frak a}''}r^{{\frak a}'{\frak a}''}(E_{{\frak a}'}e_{{\frak a}''}-E_{{\frak a}''}e_{{\frak a}'})\right)\;.$$ Let now $[c]^1\in H_1(X^1)$. Since $[c]^0\in H_1(X^0)$, we have $[c]^0=[\sum_{\frak b} r^{\frak b} F_{\frak b}]^0$, so that $$c=\sum_{\frak b} r^{\frak b} F_{\frak b}+\sum_{{\frak a}'{\frak a}''}r^{{\frak a}'{\frak a}''}(E_{{\frak a}'}e_{{\frak a}''}-E_{{\frak a}''}e_{{\frak a}'})\;$$ and $[c]^1=0$. It suffices to iterate the procedure and to construct $X^k$, such that $H_0(X^k)=R/I$ and $H_p(X^k)=0$, for all $1\le p\le k$. Then, the inductive limit $X$ of the direct system of free differential graded $R$-algebras $X^k$ is the resolving free differential graded $R$-algebra.\hspace{1cm}    $\square$\medskip

\begin{rem}{\em We refer to Tate's extension $X$ of the Koszul complex $X^0$ associated to $I=(E_1,\ldots,E_r)$ as the Koszul-Tate resolution of $R/(E_1,\ldots,E_r)$. Tate's method allows to find a resolution, even if the sequence $(E_1,\ldots,E_r)$ is not regular, i.e., if not all `relations' and `higher relations' are trivial. The procedure starts from the Koszul complex (\ref{KT0})-(\ref{KT1}), whose chain module is constructed from generators $e_{\frak a}$ associated to the equations $E_{\frak a}$ (see (\ref{KT0})) and whose differential is the corresponding characteristic differential (\ref{KT1}). Then, one associates additional generators $f_{\frak b}$ to the non-trivial 1-cycles $F_{\frak b}=\sum_{\frak a}F^{\frak a}_{\frak b}e_{\frak a}$ or non-trivial relations $d^1F_{\frak b}=\sum_{\frak a}F^{\frak a}_{\frak b}E_{\frak a}=0$ (see (\ref{KT2})) and extends the differential accordingly (see (\ref{KT3})), in order to kill the non-trivial 1-cycles or relations. The procedure is now iterated, i.e., still new generators are added and new similar extensions of the differential are considered to kill the higher non-trivial relations. The Noetherian hypothesis allows to obtain finitely generated terms $X_p$, $p\ge 0$.}\end{rem}

The second theorem of \cite{Tate} is valid without the Noetherian property:

\begin{theo}\label{TateTheo4} Let $I\subset R$ be an ideal of a commutative unital ring $R$. Assume that there exist a commutative unital ring $S$, as well as ideals $P\subset J\subset S$, which are generated by regular $S$-sequences $(P_1,\ldots,P_s)$ and $(J_1,\ldots,J_r)$, respectively, and which are such that $S/P=R$ and $J/P=I$. Denote the classes in these quotients by $\bar{\bullet}$, set $E_{\frak a}=\bar{J_{\frak a}}$, and set $P_{\frak b}=\sum_{\frak a}s_{\frak b}^{\frak a}J_{\frak a}$. Then the differential graded $R$-algebra \be\label{KT4}Y=R\0\cS[e_1,\ldots,e_r,f_1,\ldots,f_s]\;,\ee with all generators $e_{\frak a}$ (resp., $f_{\frak b}$) in degree 1 (resp., 2) and with differential $d$ defined by \be\label{KT5}d=\sum_{\frak a}E_{\frak a}\,\p_{e_{\frak a}}+\sum_{\frak b}(\sum_{\frak a}\bar s^{\frak a}_{\frak b}e_{\frak a})\,\p_{f_{\frak b}}\;,\ee is a free resolution of $R/I$.\end{theo}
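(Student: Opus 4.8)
The plan is to show that the differential graded $R$-algebra $Y$ is precisely the Koszul-Tate construction $X^1$ (in the notation of the proof of Theorem \ref{TateTheo1}) applied to the sequence $(E_1,\ldots,E_r)$ generating $I$, and that under the present hypotheses the iteration stops at this first stage. First I would observe that $H_0(Y)=R/(E_1,\ldots,E_r)=R/I$ is immediate, since $d$ restricted to degree $1$ has image $(E_1,\ldots,E_r)$ and the degree-$0$ part of $Y$ is $R$. So the content of the theorem is the vanishing of $H_n(Y)$ for all $n\ge 1$.

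The key idea is to lift the problem from $R$ back to $S$ and use the regularity of both sequences there. Since $(P_1,\ldots,P_s)$ is a regular $S$-sequence, the relation $P_{\frak b}=\sum_{\frak a}s_{\frak b}^{\frak a}J_{\frak a}$ expresses each $P_{\frak b}$ in terms of the $J_{\frak a}$, and the matrix $s=(s_{\frak b}^{\frak a})$ descends modulo $P$ to $\bar s=(\bar s_{\frak b}^{\frak a})$. I would consider the Koszul complex $K[J_1,\ldots,J_r]$ over $S$ on generators $e_1,\ldots,e_r$ in degree $1$; by the Proposition of Subsection \ref{KRRS} (regularity of the $J$-sequence) this resolves $S/J$. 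Tensoring with $\cS[f_1,\ldots,f_s]$ in degree $2$ and adjoining to the differential the term $\sum_{\frak b}(\sum_{\frak a}s_{\frak b}^{\frak a}e_{\frak a})\p_{f_{\frak b}}$, i.e.\ killing the cycles $\sum_{\frak a}s_{\frak b}^{\frak a}e_{\frak a}$ whose boundary is $P_{\frak b}$, produces a complex over $S$ that one checks (again using that these cycles map to a regular sequence $P_{\frak b}$ in $S/J$, hence generate a Koszul-acyclic situation in the next degree) resolves $S/(J+P)=S/J=R/I$ — wait, more precisely it resolves $R=S/P$'s further quotient; the cleanest route is to present this $S$-complex as the total complex of the tensor product $K[P_1,\ldots,P_s]\otimes_S K[J_1,\ldots,J_r]$ suitably filtered, or to run the two-step Koszul-Tate argument of Theorem \ref{TateTheo1} directly over $S$ and verify that regularity of the $P$-sequence makes $H_1$ vanish after the single extension. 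Then I would identify $Y$ with the reduction of this $S$-complex modulo $P$, i.e.\ $Y = (S\text{-complex})\otimes_S R$, and argue that because the $S$-complex is a complex of free $S$-modules resolving $R$, and $P$ is generated by a regular sequence, the base change $\otimes_S R = \otimes_S S/P$ does not introduce homology: one can compute via a spectral sequence or via the fact that $\mathrm{Tor}^S_{>0}(\text{acyclic free complex}, R)$ contributes nothing and the $S/P$-base change of an $S$-free resolution of $R$ is an $R$-free resolution of $R\otimes_S R = R$.

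In more down-to-earth terms, here is the argument I expect to write out: filter $Y$ by the number of $f$-generators, so $F_pY$ is spanned by monomials involving at most $p$ of the $f_{\frak b}$. The associated graded complex is $\big(R\otimes\cS[e_1,\ldots,e_r]\big)\otimes\cS[f_1,\ldots,f_s]$ with differential only $\sum_{\frak a}E_{\frak a}\p_{e_{\frak a}}$, which is the Koszul complex $K[E_1,\ldots,E_r]$ over $R$ tensored with a polynomial algebra on the $f$'s. Its homology is $H_\bullet(K[E_1,\ldots,E_r])\otimes\cS[f_1,\ldots,f_s]$. Now the crucial input: the $E_{\frak a}=\bar J_{\frak a}$, although not a regular sequence in $R$ in general, have their Koszul homology controlled, because $(J_1,\ldots,J_r)$ is $S$-regular and $(P_1,\ldots,P_s)$ is $S$-regular with $P\subset J$ — a standard depth/regular-sequence computation (or the Koszul complex of a regular sequence being a resolution, combined with the long exact sequences coming from the $P_{\frak b}$) shows that $H_i(K[E_1,\ldots,E_r])$ for $i\ge 1$ is generated exactly by the cycles coming from the relations $\sum_{\frak a}\bar s_{\frak b}^{\frak a}E_{\frak a}=0$ (which hold because $P_{\frak b}\in P$) and their higher analogues — and these are precisely what the $\p_{f_{\frak b}}$ term is designed to kill. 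Running the spectral sequence of the filtration, the $d_1$ differential is induced by $\sum_{\frak b}(\sum_{\frak a}\bar s_{\frak b}^{\frak a}e_{\frak a})\p_{f_{\frak b}}$ and it kills all of $H_{\ge1}(K[E_1,\ldots,E_r])\otimes\cS[f]$, leaving only $H_0(K[E_1,\ldots,E_r])\otimes 1 = R/I$ in total degree $0$; the spectral sequence then degenerates and $H_\bullet(Y)$ is concentrated in degree $0$ and equals $R/I$.

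The main obstacle will be making precise the claim that the first-stage extension by the $f_{\frak b}$ exactly kills all higher Koszul homology of the non-regular sequence $(E_1,\ldots,E_r)$ — i.e.\ that \emph{no further Tate generators are needed}. This is where the specific hypothesis (that $I$ and $R$ both arise as quotients by \emph{regular} sequences $P\subset J$ over a common ring $S$) does the work, and the clean way to see it is the base-change argument: take the Koszul resolution $K^S:=K[J_1,\ldots,J_r]$ of $R/I$-over-$S$... rather, present $Y$ as $K[J_1,\ldots,J_r]\otimes_S \Lambda_S^\bullet$ reduced mod $P$, using that the $P$-sequence being regular means $S/P=R$ has a finite free Koszul resolution over $S$ and that $\mathrm{Tor}^S_{>0}(S/J,S/P)$ vanishes in the relevant range by the regularity hypotheses (indeed $P\subset J$ forces the needed Tor-vanishing via a Koszul computation). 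Concretely, I would verify that the complex $K[J_1,\ldots,J_r]\otimes_S \cS[f_1,\ldots,f_s]$ with differential $\delta_K + \sum_{\frak b}(\sum_{\frak a}s_{\frak b}^{\frak a}e_{\frak a})\p_{f_{\frak b}}$ is acyclic over $S$ with $H_0 = S/J$ — this is the two-step Tate argument over $S$, where \emph{now both sequences are regular so it genuinely terminates} — and then that reducing this free $S$-complex modulo the regular ideal $P$ preserves acyclicity and yields $Y$ with $H_0(Y) = (S/J)\otimes_S(S/P) = R/I$, using once more the regularity of $(P_1,\ldots,P_s)$ to guarantee the higher Tor's vanish. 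Everything else is bookkeeping with signs and the explicit form of the differential; the regular-sequence Tor-vanishing (equivalently, that a regular sequence remains ``transverse'' to the quotient structures in play) is the one non-formal ingredient, and it is exactly the hypothesis Tate imposes.
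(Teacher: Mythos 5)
The paper does not actually prove Theorem \ref{TateTheo4}: it states that the proof is technical, refers to Tate's original article, and only discusses consequences. So there is no in-paper argument to compare against, and your proposal has to be judged on its own. Its skeleton is the right one and is essentially how the theorem is proved: filter $Y$ by the number of $f$-generators, observe that the associated graded differential is the Koszul differential of $(E_1,\ldots,E_r)$, identify the $E_1$-page as $H_\bullet(K[E_1,\ldots,E_r])\otimes\cS[f_1,\ldots,f_s]$, and show that $d_1$ kills everything in positive degree. But as written the argument contains two false claims and leaves the one computation that carries all the weight as an assertion.

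Concretely: (a) Your auxiliary $S$-complex does not exist. On $K[J_1,\ldots,J_r]\otimes_S\cS[f_1,\ldots,f_s]$ with $df_{\frak b}=\sum_{\frak a}s^{\frak a}_{\frak b}e_{\frak a}$ one gets $d^2f_{\frak b}=\sum_{\frak a}s^{\frak a}_{\frak b}J_{\frak a}=P_{\frak b}\neq 0$ in $S$; the differential squares to zero only after reduction mod $P$. Likewise, the ``two-step Tate argument over $S$'' produces no $f$'s at all, since $(J_1,\ldots,J_r)$ is $S$-regular and $K[J_1,\ldots,J_r]$ already resolves $S/J$. (b) The Tor-vanishing you invoke is the opposite of the truth: because $P_{\frak b}\in P\subset J$, tensoring the Koszul resolution $K[P_1,\ldots,P_s]$ of $S/P$ with $S/J$ kills every differential, so $\mathrm{Tor}^S_n(S/J,S/P)\cong\Lambda^n_{R/I}\big((R/I)^s\big)$, which is nonzero for $1\le n\le s$. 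This non-vanishing is the entire content of the theorem; it is exactly why $K[E_1,\ldots,E_r]$ fails to resolve $R/I$ and why the $f_{\frak b}$ must be adjoined. If those Tor's vanished there would be nothing to prove, and indeed you simultaneously assert elsewhere that $H_{\ge 1}(K[E_1,\ldots,E_r])$ is nonzero and generated by the relation cycles --- the proposal contradicts itself on the key point. (c) What the spectral sequence actually needs is the precise statement that $H_\bullet(K[E_1,\ldots,E_r])$ is the \emph{free} exterior $R/I$-algebra on the $s$ degree-one classes $\big[\sum_{\frak a}\bar s^{\frak a}_{\frak b}e_{\frak a}\big]$. You wave at ``a standard depth computation,'' but this is the crux: one identifies $H_\bullet(K[E_1,\ldots,E_r])=H_\bullet(R\otimes_S K[J_1,\ldots,J_r])=\mathrm{Tor}^S_\bullet(S/P,S/J)$ using regularity of the $J$-sequence, computes the same Tor from the other side as $\Lambda_{R/I}\big((R/I)^s\big)$ using regularity of the $P$-sequence and $P\subset J$, and then uses an explicit comparison map between the two resolutions to see that the generators are represented by the cycles $\sum_{\frak a}\bar s^{\frak a}_{\frak b}e_{\frak a}$ --- this comparison is where the matrix $(s^{\frak a}_{\frak b})$ enters the statement. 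With that in hand, your $E_1$-page with $d_1$ becomes $\Lambda_{R/I}[\ze_1,\ldots,\ze_s]\otimes\cS[f_1,\ldots,f_s]$ with $d_1f_{\frak b}=\ze_{\frak b}$, whose acyclicity in positive degrees is the standard odd/even contraction (valid with polynomial $f$'s only over $\Q$; in general Tate uses divided powers here, a point worth flagging since the theorem is stated for arbitrary commutative rings). Once (a)--(c) are repaired the argument closes, but in its present form the proof does not go through.
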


The proof \cite{Tate} of this result is technical and will not be sketched. On the other hand, the following observation is worth being emphasized.\medskip

Up to the end of this subsection, we use the language of Mathematical Physics, we interpret the generators $E_{\frak a}$ of $I$ as the equations of a shell in some ambient space, the ring $R$ as the functions of this space, and the ideal $I=(E_1,\ldots,E_r)$ as the functions that vanish on-shell. Moreover, a new concept of triviality will appear. Until now, a relation between the equations $E_{\frak a}$ was considered as trivial, if the column of its coefficients could be obtained by applying a skew-symmetric matrix to the column made of the $E_{\frak a}$ (Remark \ref{Triv}). We will refer to a relation between the equations as {\bf weakly trivial}, if all its coefficients vanish on-shell. It is clear that trivial implies weakly trivial: Theorem 10.1 in \cite{HT} shows that in the context of Physics a weakly trivial relation is always trivial.

\begin{rem}{\em\label{TateHT} The assumptions of Theorem \ref{TateTheo4} imply that $Y$ is a resolution of $R/I$, what in turn entails that there exist relations between the equations, which are not trivial and thus not weakly trivial, i.e., at least one of their coefficients does not vanish on-shell. Further, these relations are independent, in the sense that, if a linear combination between them vanishes on-shell, then all the coefficients vanish on-shell.}\end{rem}

The interest of this remark resides in the fact that, in the Mathematical Physics' literature, this context -- existence of non-weakly-trivial relations between the equations and only weakly trivial relations between these relations -- does not result as here from a Koszul-Tate resolution, which was constructed under certain assumptions, but this setting is essentially the starting point in the Physicists' attempt to build a Koszul-Tate resolution of shell functions -- which is itself the first step in the construction of the Batalin-Vilkovisky ({\small BV}) resolution (interesting ideas and a survey on {\small BV} can be found in \cite{Jim Stasheff}) of the functions of the shell modulo gauge symmetries (see Appendix A, Section \ref{Vino}).\medskip

Let us explain Remark \ref{TateHT}.\medskip

Since, for any $\frak b$, the class $\bar\bullet$ of the generator $P_{\frak b}=\sum_{\frak a}s_{\frak b}^{\frak a}J_{\frak a}\in P$ vanishes, we have between the equations the relation \be\label{1Cyc}\sum_{\frak a}\bar s_{\frak b}^{\frak a}E_{\frak a}=0\;.\ee The kernel $\ker_1d$ is made of the 1-chains $c_1=\sum_{\frak a}r^{\frak a}e_{\frak a}$ that induce a relation $\sum_{\frak a}r^{\frak a}E_{\frak a}=0$, and, as easily checked, the image $\op{im}_1d$ is made of the boundaries $dc_2$ of the 2-chains $c_2$, which are (at least if $R$ is a $\Q$-algebra) of the type \be\label{2Bound}dc_2=\sum_{\frak a}\sum_{\frak c}\zr^{{\frak a}{\frak c}}E_{\frak c}\,e_{\frak a}+\sum_{\frak a}\sum_{\frak d}\zr^{\frak d}\bar s^{\frak a}_{\frak d}\,e_{\frak a}\;,\ee where $(\zr^{{\frak a}{\frak c}})$ is a shew-symmetric $r\times r$ matrix and where $(\zr^{\frak d})$ is an $s\times 1$ matrix, both with entries in $R$. Remember that the new generators $f_{\frak d}$ have been added to make homologically non-trivial 1-cycles trivial. The 1-cycle $\sum_{\frak a}\bar s^{\frak a}_{\frak b}e_{\frak a}$ (see (\ref{1Cyc})) is visibly homologically trivial due to the adjunction of the generators $f_{\frak d}$, which are responsible for the second term in the {\small RHS} of Equation (\ref{2Bound}). In other words, it was not homologically trivial before the addition of these new generators, i.e., the $\bar s^{\frak a}_{\frak b}$ are not of the form $\sum_{\frak c}\zr^{{\frak a}{\frak c}}E_{\frak c}$, with $(\zr^{{\frak a}{\frak c}})$ skew-symmetric, or, still, the relation $\sum_{\frak a}\bar s^{\frak a}_{\frak b}E_{\frak a}=0$ is not trivial and so not weakly trivial (a (longer) direct proof of this fact can be given).\medskip

\begin{rem}{\em Similarly, still other generators have to be added, if not all relations between the just considered relations are homologically trivial. Since {\it no} additional generators were added, all relations between the relations (\ref{1Cyc}) {\it are} homologically trivial, i.e., more precisely, for any relation $\sum_{\frak b} \zs^{\frak b}\bar{s}^{\frak a}_{\frak b}=0$, ${\frak a}\in\{1,\ldots,r\}$, the corresponding 2-cycle $\sum_{\frak b}\zs^{\frak b}f_{\frak b}$ is homologically trivial.}\end{rem}

Concerning the statement that there exist only weakly trivial `relations' between the relations (\ref{1Cyc}), we now assume that $\sum_{\frak b} \zs^{\frak b}\bar{s}^{\frak a}_{\frak b}\in I$, for all ${\frak a}$, and prove that $\zs^{\frak b}\in I$, for all ${\frak b}$. Since $$\sum_{\frak b} \zs^{\frak b}\bar{s}^{\frak a}_{\frak b}=\sum_{\frak c}z^{{\frak c}{\frak a}}E_{\frak c}\;,$$ we have the weakly trivial relation $$\sum_{\frak a}\sum_{\frak c}z^{{\frak c}{\frak a}}E_{\frak c}E_{\frak a}=\sum_{\frak b} \zs^{\frak b}\sum_{\frak a}\bar{s}^{\frak a}_{\frak b}E_{\frak a}=0\;,$$ which is, in view of \cite[Theorem 10.1]{HT}, trivial, what means that the matrix $(z^{{\frak c}{\frak a}})$ can be chosen shew-symmetric. Hence, the sum $c_2$ below is a 2-chain and
$$dc_2=d(\frac{1}{2}\sum_{\frak a}\sum_{\frak c}z^{{\frak a}{\frak c}}e_{\frak a}e_{\frak c}+\sum_{\frak b}\zs^{\frak b}f_{\frak b})=$$ $$\sum_{\frak a}\sum_{\frak c}z^{{\frak a}{\frak c}}E_{\frak c}\,e_{\frak a}+\sum_{\frak a}\sum_{\frak b}\zs^{\frak b}\bar s^{\frak a}_{\frak b}\,e_{\frak a}=-\sum_{\frak a}\sum_{\frak b} \zs^{\frak b}\bar{s}^{\frak a}_{\frak b}e_{\frak a}+\sum_{\frak a}\sum_{\frak b}\zs^{\frak b}\bar s^{\frak a}_{\frak b}\,e_{\frak a}=0\;.$$ As $Y$ is acyclic, the 2-cycle $c_2$ is the boundary of a 3-chain $c_3$ -- made of terms in $e_{\frak a}e_{\frak c}e_{\frak g}$ and of the term $\sum_{\frak a}\sum_{\frak b}r^{{\frak a}{\frak b}}e_{\frak a}f_{\frak b}$. The terms of the first type induce in the boundary only terms in $e_{\frak a}e_{\frak c}$ and the terms of the second type generate, in addition to terms in $e_{\frak a}e_{\frak c}$, the terms $$\sum_{\frak b}\sum_{\frak a}r^{{\frak a}{\frak b}}E_{\frak a}\,f_{\frak b}\;.$$ In view of freeness, we deduce that $\zs^{\frak b}=\sum_{\frak a}r^{{\frak a}{\frak b}}E_{\frak a}\in I$.

\begin{rem}{\em The differential in Theorem \ref{TateTheo4} is analogous to that of Theorem \ref{TateTheo1}: in Theorem \ref{TateTheo1} we dealt with the relations $\sum_{\frak a}F^{\frak a}_{\frak b}E_{\frak a}=0$, ${\frak b}\in\{1,\ldots,s\}$, and added the term $\sum_{\frak b}(\sum_{\frak a}F^{\frak a}_{\frak b}e_{\frak a})\,\p_{f_{\frak b}}$ to the differential, and in Theorem \ref{TateTheo4} the relations are $\sum_{\frak a}\bar s_{\frak b}^{\frak a}E_{\frak a}=0$, ${\frak b}\in\{1,\ldots,s\}$, and we added the term $\sum_{\frak b}(\sum_{\frak a}\bar s^{\frak a}_{\frak b}e_{\frak a})\,\p_{f_{\frak b}}$.}\end{rem}

\subsection{Koszul-Tate resolution of shell functions in a gauge theory}\label{RIGT}

\begin{rem}{\em In the following, we use standard concepts, results, and notation of the theory of {\small PDE}-s in the jet bundle formalism \cite{Sascha} (see Appendix A, Section \ref{Vino}).}\end{rem}

\subsubsection{Regular first-order on-shell reducible gauge theory}\label{RIGTDef}

In {\bf field theory}, fields are sections $\zf\in\zG(\zp)$ of a vector bundle $\zp:E\to X$. Since we will consider gauge theories from the standpoint of Physics, we work systematically in a trivialization of $E$ (fiber coordinates $u=(u^1,\ldots,u^r)$ -- we will sometimes write $u^a$ instead of $u$) over a coordinate patch of $X$ (coordinates $x=(x^1,\ldots,x^n)$ -- we may write $x^i$ instead of $x$), or, we just assume that $E=\R^n\times\R^r$. The dynamics of the considered field theory is given by a {\bf functional} $S$ acting on compactly supported sections $\zf\in\zG(\zp)$, $$ S[\phi]=\int_X {\cal L}(x^i,u^a_\za)|_{j^{k-1}\zf} \dd x\in\R\;,$$ where $j^{k-1}$ denotes the $(k-1)$-jet and where the Lagrangian ${\cal L}$ is a function ${\cal L}\in\cF(\zp_{k-1})$ of the $(k-1)$-jet bundle of $\zp$ (jet bundle coordinates $(x^i,u^a_\za)$) such that ${\cal L}(x^i,0)=0$ (it suffices to set $\widetilde F(x^i,u^a_\za):=F(x^i,u^a_\za)-F(x^i,0),$ for any function $F\in\cF=\cF(\zp_\infty)$ of the infinite jet space of $\zp$, to see that $\cF=\Ci(X)\oplus \widetilde\cF$, where the functions in $\widetilde \cF$ vanish on the zero section). Equivalently, we may use the corresponding {\bf Euler-Lagrange equations} \be\label{EL}\zd_{u^a}{\cal L}|_{j^{k}\zf}=(-D_x)^{\za}\p_{u^a_{\za}}{\cal L}|_{j^{k}\zf}=0\;,\ee where $\zd_{u^a}$ is the algebraicized Euler-Lagrange operator and $D_{x^i}$ the total derivative with respect to $x^i$ (see Appendix A, Section \ref{Vino}).\medskip

The extended algebraicized Euler-Lagrange equations \be\label{ELE}D_x^{\za}\zd_{u^a}{\cal L}=0\ee define the constraint surface or shell $\zS$ in the infinite jet space $J^\infty(\pi)$. The solutions $\phi$ of the original Euler-Lagrange equations (\ref{EL}) are those compactly supported sections $\zf\in\zG(\zp)$ that satisfy the condition $(j^\infty\zf)(X)\subset \zS\,$ (we mostly ignore local aspects). We denote by $I(\zS)\subset \cF$ the ideal of those functions in $\cF$ that vanish on-shell. If $f\in I(\zS)$, we write $f\approx 0\,$.\medskip

As for any system of linear equations, we may find linear relations between the considered equations (\ref{ELE}) (see `compatibility complex' in Appendix A), i.e., relations of the type
\be\label{NO2} N_\za^{a}D_x^{\za}\delta_{u^a}{\cal L}\equiv 0\;,\ee with $N_\za^{a}\in \cF$. It is easy to write such relations, if we use coefficients in $I(\zS)$. Indeed, for any functions $n^{[ab]}\in\cF$ (that are antisymmetric in $a,b$), we have the linear relation $n^{[ab]}\p_{u^b}{\cal L}\;\p_{u^a}{\cal L}\equiv 0$ between the equations $\p_{u^a}{\cal L}=0$. What we actually have in mind are non-trivial linear relations, i.e., relations of the type (\ref{NO2}), but with at least one coefficient $N_\za^{a}\notin I(\zS)$. We refer to such relations as {\bf non-trivial Noether identities}.\medskip

A deep result \cite{Noether, Yvette}, which is already present in elementary Mechanics, is the 1:1 correspondence between, roughly speaking, `symmetries of the action' (resp., `gauge symmetries') and conserved currents (resp., Noether identities). It motivates the definition of a {\bf gauge theory} as a field theory (see above) with non-trivial Noether identities. \medskip

The efficient investigation of gauge theories is subject to some regularity conditions that we now describe. More precisely, the {\bf regularity conditions for first-order reducible gauge theories} can be formulated as follows:\medskip

{\bf Assumption 1}. For any $\ell\in\N$, the {\small LHS}-s $D_x^{\za}\zd_{u^a}{\cal L}$ of the equations of $\zS$, up to order $k+\ell$ (i.e., since ${\cal L}\in\cF(\zp_{k-1})$, we consider derivatives $D_x^{\za}$ up to order $\ell$), can be separated into {\bf two packages} $E_{\frak a}$ and $E_{\zD}$ (of course, the ranges of $(\za,a)$ and of $({\frak a},\zD)$ are the same) (we could even only ask that the $D_x^{\za}\zd_{u^a}{\cal L}$ and the $(E_{\frak a},E_\zD)$ be related by an invertible matrix, i.e., that $$D_x^{\za}\zd_{u^a}{\cal L}=M^{\za\frak a}_{a}E_{\frak a}+M^{\za\zD}_{a}E_{\zD}\;,$$ where the matrix $M=(M^{\za\frak a}_{a},M_{a}^{\za\zD})$, with row index $(\za,a),$ is invertible; however, to simplify, we often ignore this matrix in the following, just as we ignore, as mentioned before, a number of local aspects).\medskip

{\bf Assumption 2}. The functions $E_{\frak a}\in\cF(\zp_{k+\ell})$ are independent. This is the actual {\bf regularity condition} (see Subsection \ref{KRS}). In other words, we assume that (locally -- but we ignore this restriction) the $E_{\frak a}=E_{\frak a}(x^i,u^a_{\za})$ can be chosen as the first fiber coordinates of a new coordinate system $(x^i,E_{\frak a},u''^a_{\za})$ in $J^{k+\ell}(\zp)$: $$(x^i,u'^a_{\za},u''^a_{\za})\leftrightarrow(x^i,E_{\frak a},u''^a_{\za})\;.$$

{\bf Assumption 3}. The functions $E_{\zD}$ are linear consequences of the functions $E_{\frak a}$: $E_{\zD}=F^{\frak a}_{\zD}E_{\frak a}$, with $F^{\frak a}_{\zD}\in\cF(\zp_{k+\ell})$. It follows that $E_{\zD}=0$, if $E_{\frak a}=0$: the $E_{\frak a}$ (resp., $E_{\zD}$) are the {\bf independent (resp., dependent) equations}.\medskip

{\bf Assumption 4}. The dependent equations $E_\zD$ are total derivatives of a finite number of dependent equations $E_\zd=F_\zd^{\frak b}E_{\frak b}$ ($\zd\in\{1,\ldots,K\}$), i.e., there is a finite number $K$ of {\bf generators $E_\zd$ by differentiation}: $E_{\zD}=D_x^{\zb}E_{\zd}$.\medskip

{\bf Assumption 5}. Note that the differences $E_\zD-F^{\frak a}_\zD E_{\frak a}\equiv 0$ are non-trivial Noether identities. We assume that, if $E_\zD=D_x^\zb E_\zd$, the derivative $D_x^{\zb}$ of the Noether identity $E_\zd - F^{\frak b}_\zd E_{\frak b}\equiv 0$ is the preceding Noether identity associated to $E_\zD\,.$ If we write this requirement out, we find an invertibility condition for some matrix, which is called the {\bf first-order reducibility assumption} ({\small IA}) of the considered gauge theory. \medskip

The assumptions 1-5 are satisfied in many physically relevant examples, in particular in the Klein-Gordon case and in electromagnetism.\medskip

Consider now a {\bf regular first-order reducible gauge theory}, i.e., a field theory, which admits non-trivial Noether identities (i.e., non-trivial gauge symmetries) and satisfies the assumptions 1-5.

\begin{prop}\label{IrrSetNOProp} In a regular first-order reducible gauge theory, there exists an irreducible set of non-trivial Noether operators.\end{prop}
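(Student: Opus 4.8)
The plan is to extract the required set explicitly from Assumptions 1--5 and then to verify, in turn, that its members are non-trivial Noether operators, that the set is complete, and that no member is redundant. \emph{Candidate set and non-triviality.} By Assumption 4 there are finitely many dependent equations $E_\zd=F^{\frak b}_\zd E_{\frak b}$, $\zd\in\{1,\ldots,K\}$, which generate all dependent equations by differentiation, and Assumption 5 asserts that the associated differences $E_\zd-F^{\frak b}_\zd E_{\frak b}\equiv 0$ are non-trivial Noether identities. Rewriting each $D_x^\za\zd_{u^a}{\cal L}$ by means of the invertible matrix $M$ of Assumption 1, I would recast these $K$ identities in the standard form $\sum_{\za,a}N^{\za a}_\zd\,D_x^\za\zd_{u^a}{\cal L}\equiv 0$ and take the corresponding Noether operators $N_1,\ldots,N_K$ as the irreducible set; their non-triviality is then exactly the content of Assumption 5, the pertinent notion of ``trivial relation'' being the skew-symmetric one of Remark \ref{Triv}.

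\emph{Completeness.} I would start from an arbitrary Noether identity $\sum_{\za,a}N^{\za a}D_x^\za\zd_{u^a}{\cal L}\equiv 0$, use Assumption 1 to express each $D_x^\za\zd_{u^a}{\cal L}$ as a combination of the $E_{\frak b}$ and the $E_\zD$, and eliminate the $E_\zD$ via $E_\zD=F^{\frak a}_\zD E_{\frak a}$ (Assumption 3). What remains is an identity $\sum_{\frak b}G^{\frak b}E_{\frak b}\equiv 0$ among the \emph{independent} equations; since by Assumption 2 the $E_{\frak b}$ are part of a coordinate system, the $G^{\frak b}$ must lie in $I(\zS)$ and, by the regularity argument of Subsection \ref{KRRS} together with Remark \ref{Triv}, form a trivial (skew-symmetric) combination. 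Hence the original identity equals, modulo a trivial remainder, a combination of the identities $E_\zD-F^{\frak a}_\zD E_{\frak a}\equiv 0$, each of which is by Assumptions 4--5 the total derivative $D_x^\zb$ of one of the core identities $E_\zd-F^{\frak b}_\zd E_{\frak b}\equiv 0$. Thus every Noether identity is generated, modulo trivial ones, by $N_1,\ldots,N_K$ under $\cF$-linear combinations and total derivatives.

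\emph{Irreducibility and the main obstacle.} For irreducibility I would use the first-order reducibility assumption (IA) buried in Assumption 5: the passage from $D_x^\zb\big(E_\zd-F^{\frak b}_\zd E_{\frak b}\equiv 0\big)$ to $\big(E_\zD-F^{\frak a}_\zD E_{\frak a}\equiv 0\big)$ is governed by an invertible matrix. Were some $N_{\zd_0}$ a consequence of the remaining $N_\zd$ (via total derivatives, $\cF$-linear combinations and trivial identities), then substituting that expression and using injectivity of this matrix would realise $E_{\zd_0}$ as differentially dependent on $\{E_\zd:\zd\neq\zd_0\}$ and the independent equations, contradicting the minimality of the generating set of Assumption 4 together with the regularity of the $E_{\frak a}$; hence $\{N_1,\ldots,N_K\}$ is irreducible. (If one only knew a priori that \emph{some} complete set exists, one would instead prune redundant members; the process terminates because $K<\infty$, and non-triviality is preserved.) I expect this last step to be the main obstacle: one must first pin down what a ``relation among Noether operators'' should mean -- a relation in the module of Noether operators taken modulo the weakly trivial ones, in the spirit of Remark \ref{TateHT} and \cite[Theorem 10.1]{HT} -- and then show that the invertibility in Assumption 5 genuinely forbids it, all the while carrying along the ``modulo trivial / weakly trivial'' bookkeeping and the jet-order and locality caveats that Assumptions 1--5 systematically suppress.
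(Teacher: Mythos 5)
Your candidate set is the same as the paper's: the operators obtained by rewriting the $K$ identities $E_\zd-F^{\frak b}_\zd E_{\frak b}\equiv 0$ in the form (\ref{IrrSetNO}). Two remarks on the first part, though. The notion of ``(non-)trivial'' in force in this subsection is explicitly the \emph{weak} one -- all coefficients vanish on $\zS$ -- and not the skew-symmetric notion of Remark \ref{Triv}; since skew-trivial implies weakly trivial, proving only ``not skew-trivial'' establishes less than what is required. Non-triviality here is in fact immediate: after conjugating by the invertible matrix $M$ of Assumption 1, the coefficient of the dependent equation in $E_\zd-F^{\frak b}_\zd E_{\frak b}\equiv 0$ is a unit, hence not in $I(\zS)$. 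Your completeness paragraph, on the other hand, proves a property the proposition does not claim: that the set is \emph{generating} (Equation (\ref{univ})) is stated by the paper only later, as a consequence of $H_1(\op{KT})=0$, and is not part of Proposition \ref{IrrSetNOProp}.

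The genuine gap is in the irreducibility step. In this paper ``irreducible'' means Condition (\ref{OnShIrr}): every first-stage Noether identity $S_\zb^\zd D_x^\zb\circ R^a_{\zd\za}D_x^\za\approx 0$ forces $S_\zb^\zd\approx 0$ for \emph{all} $\zd$. This is a statement about arbitrary on-shell relations among the whole family $R^a_\zd$, not about one member being expressible through the others: from a relation $\sum_\zd S^\zd\circ R_\zd\approx 0$ with some $S^{\zd_0}\not\approx 0$ you cannot in general solve for $R_{\zd_0}$, because $S^{\zd_0}$ is a total differential operator and is not invertible. So your reductio -- ``some $N_{\zd_0}$ is a consequence of the others, hence $E_{\zd_0}$ is differentially dependent on the rest'' -- does not engage the actual definition, and the contradiction you then invoke rests on a ``minimality'' of the generating set of Assumption 4 that is nowhere assumed (Assumption 4 only posits finiteness). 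The paper itself does not prove (\ref{OnShIrr}); it asserts that it follows from the first-order reducibility assumption ({\small IA}) of Assumption 5 and cites \cite{Bar} (via \cite[Theorem 10.1]{HT}-type arguments). Your instinct that this is the main obstacle is correct, but the argument you sketch does not close it.
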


Indeed, consider the Noether identities $E_\zd - F^{\frak b}_\zd E_{\frak b}\equiv 0$ and write them in the form \be\label{IrrSetNO}R^{a}_{\zd\za}\,D_x^\za\zd_{u^a}{\cal L}\equiv 0\quad(\zd\in\{1,\ldots,K\})\;.\ee Recall that the tuple of the $D_x^\za\zd_{u^a}{\cal L}$ is given by the action of an invertible matrix $M$ on the tuple made of the $E_{\frak a},E_\zD$. We often assume for simplicity that this matrix is identity. Even if we take this matrix into account, we see easily that the Noether identities (\ref{IrrSetNO}) are {\it non-trivial}.\medskip

A compatibility operator (roughly, non-trivial linear total differential relations between the equations -- see Appendix A) can itself admit a compatibility operator (relations between the relations). Similarly, Noether identities can be related by first-stage Noether identities, which satisfy second-stage Noether identities... It is naturel to refer to the existence of non-trivial higher-stage Noether identities as the reducibility of the considered gauge theory. Since we deal in this text with a {\sl first-order reducible} gauge theory, no non-trivial first-stage Noether identity should exist, i.e., any linear total differential operator $(S_\zb^1\ldots S_\zb^K) D_x^\zb$ such that $S_\zb^\zd D_x^\zb\circ R^a_{\zd\za}D_x^\za=0$ should be trivial, should vanish, or, still, all its coefficients should vanish. As mentioned, in the present approach to the Koszul-Tate resolution, `trivial' (resp., `non-trivial') means what has been called `weakly trivial' (resp., `not weakly trivial') in the preceding subsection, i.e., it means that all the coefficients vanish (resp., at least one coefficient does not vanish) on $\zS$. Hence, we actually deal with {\bf first-order on-shell reducibility}. This means that \be\label{OnShIrr}S_\zb^\zd D_x^\zb\circ R^a_{\zd\za}D_x^\za\approx 0\quad\text{must imply that}\quad S_\zb^\zd\approx 0\quad(\forall\;\zd\in\{1,\ldots,K\})\;.\ee It can be shown \cite{Bar} that this {\it first-order on-shell reducibility condition} really holds -- in view of the above first-order reducibility assumption ({\small IA}).\medskip

In view of (\ref{IrrSetNO}) and (\ref{OnShIrr}), the linear total / horizontal differential operators $R^a_\zd=R^a_{\zd\za}D_x^\za$ are the irreducible set of non-trivial Noether operators, which has been announced in Proposition \ref{IrrSetNOProp}.

\begin{rem}\label{IrRed}{\em Observe that {\it regularity} does no longer mean, as in Subsection \ref{KRS}, that all the equations $E_{\frak a}$ are independent, but that some equations $E_{\frak a}$ are independent. The other equations $E_\zD$ are dependent and they are generated via differentiation by a finite number of dependent equations $E_\zd$. These dependent generators induce Noether identities, i.e., non-trivial relations between the equations. These relations are themselves on-shell independent, i.e., there are no non-trivial first-stage Noether identities. The latter situation is referred to as `{\it irreducibility}' in \cite{Bar}, whereas it is called `{\it first-order reducibility}' in \cite{HT}}.\end{rem}

\subsubsection{Koszul-Tate resolution in a regular first-order reducible theory}\label{KTRASF}

In this subsection, we report on a {\bf Koszul-Tate resolution} of the algebra $\Ci(\zS)=\cF/I(\zS)$ of functions of the shell $\zS$, in the case of a regular (on-shell) first-order reducible gauge theory. We are thus in the situation (\ref{IrrSetNO}) -- (\ref{OnShIrr}), which has already been described in Remark \ref{TateHT}, and we build a resolution that is similar to the one of Theorem \ref{TateTheo4}. Since the irreducible non-trivial Noether operators $R^a_\zd$, or, still, the Noether identities $R^a_{\zd\za}\,D_x^\za\zd_{u^a}{\cal L}\equiv 0$ and their extensions \be\label{NI}D_x^\zb\; R^a_{\zd\za}\,D_x^\za\zd_{u^a}{\cal L}\equiv 0\;,\ee correspond to the `irreducible non-trivial' relations $\sum_{\frak a}\bar s^{\frak a}_{\frak b}E_{\frak a}=0$ of Remark \ref{TateHT}, we do, just as in Theorem \ref{TateTheo4}, not only associate degree 1 generators $\zf^{\za*}_a$ to the equations $D_x^\za\zd_{u^a}{\cal L}=0$ of $\zS$, but we assign further degree 2 generators $C^{\zb*}_\zd$ to the (irreducible) relations (\ref{NI}) (no degree 3 generators are needed). The candidate for a Koszul-Tate resolution of $\Ci(\zS)$ is then the chain complex, whose chains are the elements of the free Grassmann algebra \be\label{KT}\op{KT}=\cF\otimes \cS[\zf^{\za*}_a,C^{\zb*}_\zd]\;\ee (see Equation (\ref{KT4})) and whose differential is defined by $$\zd_{\op{KT}}=D^\za_x\zd_{u^a}{\cal L}\;\p_{\zf^{\za*}_a}+D_x^\zb(R^a_{\zd\za}\,\zf^{\za*}_a)\;\p_{C^{\zb*}_\zd}\;$$ (see Equation (\ref{KT5})).\medskip

Just as the {\it fiber coordinates $u_\za^a$ (in the following, we denote them by $\zf_\za^a$) of the jet space} of $E$ are algebraizations of the derivatives $\p_x^\za\zf^a$ of the components of a section $\zf$ (field) of the vector bundle $\zp:E\to X$, the generators $\zf^{\za*}_a$ and $C^{\zb*}_\zd$ symbolize the total derivatives $D_x^\za\zf^*_a$ and $D_x^\zb C^*_\zd$ of the components of sections $\zf^*$ and $C^*$ (fermionic antifield and bosonic antifield) of the pullback bundles $\zp_\infty^*F_1\to J^\infty E$ and $\zp_\infty^*F_2\to J^\infty E$ of some vector bundles $F_1\to X$ and $F_2\to X$. Hence, {\it the $\zf^{\za*}_a$ and $C^{\zb*}_\zd$ can be thought of as the fiber coordinates of the horizontal jet spaces} of $\zp_\infty^*F_1$ and $\zp_\infty^*F_2\,$, respectively.\medskip

In the sequel, we thus put the antifields $\zf^*$ and $C^*$ on an equal footing with the fields $\zf$. More precisely, we extend the definition of the total derivatives by setting \be\label{ExtTotDer}\bar D_{x^{i}}=\p_{x^i}+ \zf^a_{i\za}\p_{\zf^a_\za}+ \zf^{i\za*}_a\p_{\zf^{\za*}_a}+C^{i\zb*}_\zd\p_{C^{\zb*}_\zd}\;,\ee so that they act on functions of the {\bf extended jet space}, and we finally define the Koszul-Tate differential by \be \zd_{\op{KT}}=D^\za_x\zd_{u^a}{\cal L}\;\p_{\zf^{\za*}_a}+\bar D_x^\zb(R^a_{\zd\za}\,\bar D_x^\za\zf^{*}_a)\;\p_{C^{\zb*}_\zd}\;\;.\ee

The homology of $(\op{KT},\zd_{\op{KT}})$ is actually concentrated in degree 0, where it coincides with $\Ci(\zS)$. Indeed, the $0$-cycles are the functions $\cF$ and the $0$-boundaries are the $$\zd_{\op{KT}}\left(\sum F^a_\za\zf^{\za*}_a\right)=\sum F^a_\za D_x^\za\zd_{u^a}{\cal L}\approx 0\;.$$ In view of the regularity assumption 2, the equations $E_{\frak a}$ play the same role as in Subsection \ref{KRS}, so that the ideal $I(\zS)$ of those functions of $\cF$ that vanish on $\zS$ is made of the combinations $\sum F^{\frak a} E_{\frak a}$. Therefore, not only any $0$-boundary belongs to $I(\zS)$, but, conversely, any function of $I(\zS)$ reads $$\sum F^{\frak a}E_{\frak a}=\sum F^{\frak a}(M^{-1})^a_{{\frak a}\za}\,D^\za_x\zd_{u^a}{\cal L}=\zd_{\op{KT}}\left(\sum F^{\frak a}\,(M^{-1})^a_{{\frak a}\za}\,\zf^{\za*}_a\right)\;$$ and is therefore a $0$-boundary. It follows that $H_0(\op{KT})=\cF/I(\zS)=\Ci(\zS)$. To show that the homology vanishes in higher degrees, one needs the first-order reducibility assumption ({\small IA}) \cite{Bar}.\medskip

In fact, the above {\bf irreducible set of non-trivial Noether operators} $R^a_\zd$ {\bf is generating}, in the sense that any Noether operator $(N^1_\za\ldots N^r_\za)D_x^\za$, i.e., any total differential operator such that $N^a_\za\,D_x^\za\zd_{u^a}{\cal L}\equiv 0$, uniquely reads \be\label{univ}N^a_\za\,D_x^\za=S^\zd_{\zg}\,D_x^\zg\circ R^a_{\zd\zb}\,D_x^\zb+M^{[a,b}_{\za,\zb]}\,D_x^\zb\zd_{u^b}{\cal L}\,D_x^\za\;,\ee where the coefficients belong to $\cF$ and satisfy $S^\zd_\zg\not\approx 0$ and $M^{[a,b}_{\za,\zb]}=-M^{[b,a}_{\zb,\za]}$. Hence, in a regular first-order reducible gauge theory, any Noether operator $(N^1\ldots N^r)$ coincides on-shell with a composite $(S^\zd\circ R^1_\zd\ldots S^\zd\circ R^r_\zd)$ of the irreducible set of Noether operators with some total differential operators. This result is actually a quite straightforward corollary of the fact that $H_1(\op{KT})=0$.

\subsubsection{Koszul-Tate resolution in a regular higher-order reducible theory}\label{RegOnRed}

The existence of non-trivial first- or higher-stage Noether identities is referred to as `reducibility' in \cite{Bar} and as `higher order reducibility' in \cite{HT}. The precise description of higher order reducibility and of the corresponding physical background \cite{HT} would lead far beyond the scope of this text. Let us thus just mention that, from a mathematical standpoint, higher order reducibility is similar to Verbovetsky's framework, which we describe in the next section, except that Verbovetsky considers regular off-shell reducibility.

\section{Koszul-Tate resolution in Cohomological Analysis of PDE-s}\label{CCKTR}

Below we detail some ideas of \cite{Ver} adopting a slightly different standpoint.

\begin{rem}{\em As in the preceding subsection, we will use -- now without further reference -- standard concepts, results, and notation of the cohomological analysis of {\small PDE}-s \cite{Sascha}. For a summary of the needed knowledge, we refer the reader to Appendix A, Section \ref{Vino}. In Subsection \ref{KTRCC}, we also use some ideas of the $\cD$-geometric approach to {\small PDE}-s \cite{BD}. Some details can be found in Appendix B, Section \ref{Jet}, as well as in \cite{BPP1}, \cite{BPP2}, \cite{BPP3}, \cite{BPP4}.}\end{rem}

\subsection{Triviality, regularity and off-shell reducibility assumptions}

In Subsection \ref{RIGT}, we described -- within the smooth geometric setting and for a fixed choice of coordinates -- the classical Koszul-Tate resolution used in Mathematical Physics. The starting point was made of field theoretic Euler-Lagrange equations, with Noether identities relating them, and with precise regularity and first-order on-shell reducibility assumptions.\medskip

In the present case, the context will be as well {\it smooth geometry} and, just as in Mathematical Physics, we will work {\it in local coordinates}, although some aspects are developed in a coordinate-free manner. Our springboard will be any {\it not necessarily linear {\small PDE}}, for which we formulate {\it regularity} and {\it off-shell reducibility conditions}.\medskip

More precisely, let $\zp:E\to X$ and $\zr_1:F_1\to X$ be smooth vector bundles of ranks $r$ and $r_1$, respectively, over a smooth manifold of dimension $n$. Take a not necessarily linear formally integrable {\bf {\small PDE}} $\zS^0\subset J^k(\zp)$ of order $k$, which is implemented by a not necessarily linear differential operator $D\in\op{DO}_k(\zp,\zr_1)$: $\zS^0=\ker\psi_D$, where $\psi_D\in{\tt FB}(J^k(\zp),F_1)$ is the representative fiber bundle morphism of $D$. Recall (see Section \ref{Vino}) that $$\op{DO}_k(\zp,\zr_1)\simeq {\tt FB}(J^k(\zp),F_1)\simeq \cF_k(\zp,\zr_1):=\zG(\zp_k^*(\zr_1))\subset \zG(\zp_\infty^*(\zr_1))=:\zG(R_1)=:\cR_1\;$$ (in the sequel, we often denote a vector bundle over $X$ by a Greek lower-case character, its pullback over $J^\infty(\zp)$ by the corresponding Latin capital, and the module of sections of the latter by the same calligraphic letter). As usual, we denote by $\zS\subset J^\infty(\zp)$ the infinite prolongation of $\zS^0\subset J^k(\zp)$: $\zS=\ker\psi^\infty_D$, where $\psi^\infty_D\in{\tt FB}(J^\infty(\zp),J^\infty(\zr_1))$ is the infinite prolongation of $\psi_D$.\medskip

We now describe the locality and regularity hypotheses used in \cite{Ver}. In fact, the author assumes that $\zS$ is contained in a small open subset $U\subset J^\infty(\zp)$, in which there exist coordinates $(x^i,u^a_\za)$. Also in the bundle $\zr_1$ fiber coordinates are fixed (we will not need their denotation, only its index $\zl\in\{1,\ldots,r_1\}$ will be used). In addition to these {\bf triviality} conditions, he formulates a {\bf regularity} requirement for $\zS$. Just as for the classical Koszul-Tate resolution of Mathematical Physics, it is assumed that some equations of $\zS$ can be chosen as first or last coordinates of a new system (of course, the equations of $\zS$ read in the considered trivializations $D_x^\za\psi_D^\zl=0$, for all $\za\in\N^n$ and $\zl\in\{1,\ldots,r_1\}$). More precisely, the {\it neighborhood} $U$ of $\zS$ is assumed to be a trivial bundle over $\zS$, in the sense that there is an isomorphism $\Phi:U\to \zS\times V$, where $V$ is a star-shaped {\it neighborhood} of 0 in $\R^\infty$, such that the coordinates $v=(v^1,v^2,\ldots)$ in $V$ are precisely certain equations of $\zS$ (not necessarily all of them): for any ${\frak a}\in\N$, there is an $\za_{\frak a}\in \N^n$ and  a $\zl_{\frak a}\in\{1,\ldots,r_1\}$, such that $v^{\frak a}=D_x^{\za_{\frak a}}\psi_D^{\zl_{\frak a}}$. This means that the fiber coordinates $v(\zk)$ of a point $\zk\in\zS$, which are obtained by projecting $\zF(\zk)$ on the second factor $V$, vanish. In addition, the projection of $\zF(\zk)$, $\zk\in\zS$, on the first factor $\zS$, is simply $\zk$.\medskip

Although in the following we systematically consider the open subset $U\subset J^\infty(\zp)$ instead of the whole jet space, we do not always insist on this restriction (and even write for simplicity sometimes $J^\infty(\zp)$ instead of $U$).\medskip

The latter regularity condition has the same {\it fundamental consequence} as in Subsections \ref{KRS} and \ref{KTRASF}: if a function $F\in\cF$ vanishes on $\zS$, it is a finite sum of the type $$F=\sum F_{\za_{\frak a},\zl_{\frak a}} D_x^{\za_{\frak a}}\psi_D^{\zl_{\frak a}}\;,$$ with $F_{\za_{\frak a},\zl_{\frak a}}\in\cF$. In other words, a function $F\in\cF$ belongs to the ideal $I(\zS)$ if and only if it reads $F=\Psi(\psi_D)$, for some $\Psi\in\cC\!\op{Diff}(\cR_1,\cF)$.\medskip

{\sl In Subsection \ref{RIGT}}, we assumed first-order on-shell reducibility, i.e., we assumed that there are no on-shell first stage Noether identities. More precisely, there does exist a generating irreducible set of Noether operators $R^a_\zd=R^a_{\zd\za} D_x^\za$, or, still, a horizontal linear differential operator $$\zD_1=\left(\begin{array}{ccc}R^1_1&\ldots &R^r_1\\\vdots & & \vdots\\ R^1_K&\ldots & R^r_K\end{array}\right)\;\;,$$ i.e., an operator $\zD_1\in \cC\!\op{Diff}(\zp_\infty^*(\zr_1),\zp_\infty^*(\zr_2))$ (in the considered special case of Subsection \ref{RIGT}, the bundle $\zr_1$ coincides with the bundle $\zp$). In this new notation, the relations $R^a_{\zd\za} D_x^\za\zd_{u^a}{\cal L}\equiv 0$, for all $\zd\in\{1,\ldots,K\},$ read $\zD_1(\zd_{u^\bullet}{\cal L})\equiv 0$. Note that the {\small LHS} of the algebraicized Euler-Lagrange equations $\zd_{u^\bullet}{\cal L}=0$ is the representative morphism $\psi_D$ of a not necessarily linear differential operator $D\in\op{DO}(\zp,\zr_1)$. The universal linearization of the latter is a horizontal linear differential operator $\ell_D\in\cC\!\op{Diff}(\zp^*_\infty(\zp),\zp^*_\infty(\zr_1))$. When linearizing the identity $\zD_1(\psi_D)\equiv 0$, we get $\zD_1\circ\ell_D=0$. Since $\zD_1$ is generating, it does not vanish and, for any operator $\nabla\in\cC\!\op{Diff}(\zp^*_\infty(\zr_1),\zp^*_\infty(\zr_2'))$, such that $\nabla(\psi_D)\equiv 0$, there is an operator $\square\in\cC\!\op{Diff}(\zp^*_\infty(\zr_2),\zp^*_\infty(\zr'_2))$, such that $\nabla\approx\square\circ\zD_1$, see Equation (\ref{univ}). Hence, roughly speaking, the restriction $\zD_1|_\zS$ is an on-shell compatibility operator for $\ell_D|_\zS$, and the mentioned first-order on-shell reducibility means that there is no on-shell compatibility operator for $\zD_1|_\zS$, see Equation (\ref{OnShIrr}).\medskip

{\sl We now come back to the context of \cite{Ver}}. The restricted linearization $\ell_D|_\zS$ of the considered operator $D$ admits a compatibility operator $\zD_\zS\in\cC\!\op{Diff}(\cR_1|_\zS,\cR_2|_\zS)$. One of the first results in \cite{Ver} states that $\zD_\zS$ can be extended to an operator $\zD_1\in\cC\!\op{Diff}(\cR_1,\cR_2)$, such that $\zD_1(\psi_D)=0$. Just as any other horizontal linear differential operator, the extension $\zD_1$ admits a formally exact compatibility complex. However, the latter is a priori neither finite, nor are its $\cF$-modules $\cR_i$ modules of sections of vector bundles of finite rank. One of the main assumptions of \cite{Ver} is that there exists a finite formally exact compatibility complex \be\label{CompComp}0\longrightarrow\cR_1\stackrel{\zD_1}{\longrightarrow}\cR_2\stackrel{\zD_2}{\longrightarrow}\ldots\stackrel{\zD_{k-2}}{\longrightarrow}\cR_{k-1}\longrightarrow 0\;,\ee whose $\cF$-modules $\cR_i$ are all modules $\cR_i=\zG(R_i)=\zG(\zp^*_\infty(\zr_i))$, where the $\zr_i:F_i\to X$ are rank $r_i$ smooth vector bundles, and whose arrows are horizontal operators $\zD_i\in\cC\!\op{Diff}(\cR_i,\cR_{i+1})$. This hypothesis is of course an {\bf off-shell reducibility condition}.

\subsection{Koszul-Tate resolution induced by a compatibility complex \cite{Ver}}\label{KTRCC}

Formal exactness of (\ref{CompComp}) implies in particular that, when applying the horizontal infinite jet functor $\bar{\cal J}^\infty$ to the complex (\ref{CompComp}), we get an exact sequence of $\cF$-modules: \be\label{ExSeq1}0\longrightarrow\bar{\cal J}^\infty(\cR_1)\stackrel{\bar\psi^\infty_{\zD_1}}{\longrightarrow}\bar{\cal J}^\infty(\cR_2)\stackrel{\bar\psi^\infty_{\zD_2}}{\longrightarrow}\ldots\stackrel{\bar\psi^\infty_{\zD_{k-2}}}{\longrightarrow}\bar{\cal J}^\infty(\cR_{k-1})\longrightarrow 0\;.\ee 
Next we use the left exact contravariant Hom functor $\h_\cF(-,\cF)$, what leads to the exact sequence $$\h_\cF(\bar{\cal J}^\infty(\cR_1),\cF)\stackrel{-\circ\bar\psi^\infty_{\zD_1}}{\longleftarrow}\h_\cF(\bar{\cal J}^\infty(\cR_2),\cF)\stackrel{-\circ\bar\psi^\infty_{\zD_2}}{\longleftarrow}\ldots$$ \be\label{ExSeq2}\stackrel{-\circ\bar\psi^\infty_{\zD_{k-2}}}{\longleftarrow}\h_\cF(\bar{\cal J}^\infty(\cR_{k-1}),\cF)\longleftarrow 0\;\ee of $\cF$-modules. The identification of representative morphisms with the corresponding differential operators finally gives the exact sequence \be\label{ExSeq3}\cC\!\op{Diff}(\cR_1,\cF)\stackrel{-\circ\zD_1}{\longleftarrow}\cC\!\op{Diff}(\cR_2,\cF)\stackrel{-\circ\zD_2}{\longleftarrow}\ldots\stackrel{-\circ\zD_{k-2}}{\longleftarrow}\cC\!\op{Diff}(\cR_{k-1},\cF)\longleftarrow 0\;\ee \cite[Section 5.5.5]{Sascha}. The completion \be\label{Complex}0\longrightarrow\cC\!\op{Diff}(\cR_{k-1},\cF)\stackrel{-\circ\zD_{k-2}}{\longrightarrow}\cC\!\op{Diff}(\cR_{k-2},\cF)\stackrel{-\circ\zD_{k-3}}{\longrightarrow}\ldots
\stackrel{-\circ\zD_{1}}{\longrightarrow}\cC\!\op{Diff}(\cR_{1},\cF)\stackrel{-(\psi_D)}{\longrightarrow}\cF\longrightarrow 0\;\ee of the latter sequence by $-(\psi_D)$ is a complex of $\cF$-modules for the natural grading given by the subscripts of the $\cR_i$. This complex, which is exact in all spots, except, maybe, in degrees 0 and 1, is actually made of $\cF[\cD]$-modules (see Section \ref{Jet}). Indeed, in view of Equation (\ref{LDOCoeffFun}), we have $$\cF[\cD]:=\cF\0\cD\simeq\cC\cD(J^\infty(\zp)):=\cC\!\op{Diff}(\cF,\cF)\;,$$ so that the $\cF[\cD]$-action is given by left composition (except for $\cF$). Hence, the arrows of this complex are $\cF[\cD]$-linear maps and the complex itself is a differential graded $\cF[\cD]$-module $$(\cC\!\op{Diff}(\cR_\bullet,\cF),\zd_{\op{{\frak{KT}}}})\in\tt DG\cF[\cD]M\;,$$ where $\zd_{\op{{\frak{KT}}}}$ is the direct sum of the maps in (\ref{Complex}). The graded symmetric tensor algebra functor $\cS_\cF$ sends this module to the free differential graded $\cF[\cD]$-algebra \be\label{Complex2}(\op{{\frak{KT}}},\zd_{\op{{\frak{KT}}}}):=(\cS_\cF\;\cC\!\op{Diff}(\cR_\bullet,\cF),\zd_{\op{{\frak{KT}}}})\in\tt DG\cF[\cD]A\;,\ee whose differential is a degree $-1$ graded derivation of the graded symmetric tensor product. The latter complex is the {\bf Koszul-Tate complex}, in the sense of \cite{Ver}, associated to the considered partial differential equation.\medskip

The homology space $H_0(\op{{\frak{KT}}})$ coincides with $\Ci(\zS)$ (in view of the above-mentioned fundamental consequence of the regularity condition, the standard argument goes through) and the higher homology spaces vanish (as suggested by the above sequences). To prove this statement, it suffices to show that the Koszul-Tate complex (\ref{Complex2}) coincides -- as claimed -- with the Koszul-Tate complex defined in \cite{Ver} and to use the corresponding result therein. The algebra of Koszul-Tate chains is defined in \cite{Ver} as the graded polynomial function algebra ${\cal P}\!\op{ol}(\bar J^\infty(R_\bullet))$. As usual, the polynomial functions ${\cal P}\!\op{ol}(\bar J^\infty(R_\bullet))$ are the smooth functions $\cF(\bar J^\infty(R_\bullet))$ that are polynomial along the fibers of the considered bundle -- here $\bar J^\infty(R_\bullet)\to J^\infty(\zp)$. Just as the polynomial functions of a vector bundle $G\to X$ are defined by $${\cal P}\!\op{ol}(G):=\zG(\cS G^*)\simeq \cS_{\Ci(X)}\zG(G^*)=\cS_{\Ci(X)}\h_{\Ci(X)}(\zG(G),{\Ci(X)})\;,$$ the polynomial functions considered here are defined by $${\cal P}\!\op{ol}(\bar J^\infty(R_\bullet)):=\cS_\cF \h_\cF(\bar{\cal J}^\infty(\cR_\bullet),\cF)\simeq\cS_\cF\;\cC\!\op{Diff}(\cR_\bullet,\cF)\;.$$ Hence, the Koszul-Tate chains of \cite{Ver} and those defined above do coincide. Moreover, the Koszul-Tate differential is defined in \cite{Ver} as an odd evolutionary vector field $\zd$ of $\bar J^\infty(R_\bullet)$. Such a graded derivation, when restricted as here to ${\cal P}\!\op{ol}(\bar J^\infty(R_\bullet))$, is completely defined by its values on the polynomial functions that are linear along the fibers, i.e., on $\h_\cF(\bar{\cal J}^\infty(\cR_\bullet),\cF)\simeq\cC\!\op{Diff}(\cR_\bullet,\cF)$ -- and by its values on $\cF$. But on $\nabla_i\in\cC\!\op{Diff}(\cR_i,\cF)$ (resp., $F\in\cF$), this evolutionary field is given by $\zd(\nabla_i)=\nabla_i\circ\zD_{i-1}$, if $i\ge 2$, and by $\zd(\nabla_1)=\nabla_1(\psi_D)$ (resp., $\zd(F)=0$) \cite[Proposition 5]{Ver}. Hence, the odd derivations $\zd$ and $\zd_{\op{{\frak{KT}}}}$ coincide, the Koszul-Tate complexes $({\cal P}\!\op{ol}(\bar J^\infty(R_\bullet)),\zd)$ and $(\op{{\frak{KT}}},\zd_{\op{{\frak{KT}}}})$ coincide, and so do their homologies.

\section{Koszul-Tate resolution in Homotopy Theory}\label{KTRHT}

\begin{rem}\label{ModelDiffOp}{\em In this section, we use the model structure of the category $\tt DG\cD A$ of differential non-negatively graded commutative unital algebras over the ring $\cD$ of differential operators. We aim at providing an, as far as possible, self-contained exposition. For further details on definitions, results, on $\cD$-modules, sheaves, model categories $\ldots\,,$ the reader may consult Appendix B, Section \ref{Jet}, as well as \cite{BPP1}, \cite{BPP2}, and \cite{BPP4}, and in particular the appendices therein. Note that, whereas the frame for the preceding sections was algebra or smooth geometry, the context of the mentioned papers and this section is algebraic geometry. We will work over a {\em smooth scheme}, since for an arbitrary, maybe singular, scheme $X$, the notion of left $\cD_X$-module is meaningless \cite[Remark p. 56]{BD}. }\end{rem}

\subsection{Model structure on $\tt DG\cD A$}\label{KTRHT1}

Let $X$ be a smooth scheme and let $\cO_X$ (resp., $\cD_X$) be the sheaf of rings of functions (resp., differential operators) of $X$. Denote by ${\tt qcCAlg}(\cO_X)$ (resp., ${\tt qcCAlg}(\cD_X)$) the category of commutative unital $\cO_X$-algebras (resp., commutative unital $\cD_X$-algebras, i.e., commutative unital $\cO_X$-algebras, whose $\cO_X$-module structure can be extended to a $\cD_X$-module structure, such that vector fields $\zy\in\cD_X$ act as derivations on the product) that are quasi-coherent as $\cO_X$-modules. {\bf We will refer to the objects of this category as $\cO_X$-algebras (resp., $\cD_X$-algebras)} (this convention differs from the one adopted in \cite{BPP1}, \cite{BPP2}, and \cite{BPP4}). The forgetful functor has a left adjoint \cite{BD} $${\cal J}^\infty:{\tt qcCAlg}(\cO_X)\to {\tt qcCAlg}(\cD_X):\op{For}\;,$$ called the {\bf jet functor} (see Appendix B, Subsection \ref{ConsJetFunc}).

\begin{prop}\label{VBFunAlg} Let $\zp:E\to X$ be an algebraic vector bundle of finite rank over a smooth scheme $X$ and denote by $\cO_E$ the structure sheaf of the scheme $E$. If $\zp_*$ stands for the direct image by $\zp,$ we have $\cO^E_X:=\zp_*\cO_E\in {\tt qcCAlg}(\cO_X)$ and thus ${\cal J}^\infty(\cO^E_X)\in{\tt qcCAlg}(\cD_X)$.\end{prop}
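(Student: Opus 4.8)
The statement is essentially a bookkeeping claim: it asserts that a certain pushforward sheaf is a quasi-coherent commutative $\cO_X$-algebra, and that applying the jet functor to it lands in $\cD_X$-algebras. The plan is to verify the two assertions in turn, the second being automatic once the first is established since ${\cal J}^\infty$ is defined as a functor ${\tt qcCAlg}(\cO_X)\to{\tt qcCAlg}(\cD_X)$ in the text above.

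First I would address $\cO^E_X:=\zp_*\cO_E\in{\tt qcCAlg}(\cO_X)$. Since $\zp:E\to X$ is an affine morphism --- indeed a vector bundle is the relative spectrum $E=\operatorname{\underline{Spec}}_X(\cS_{\cO_X}\zp_*\cO_E)$, or more concretely $E=\operatorname{\underline{Spec}}_X(\cS_{\cO_X}\scrE^\vee)$ for $\scrE$ the locally free sheaf of rank equal to $\operatorname{rk}\zp$ whose total space is $E$ --- the direct image $\zp_*\cO_E$ is a quasi-coherent $\cO_X$-module (pushforward along an affine morphism preserves quasi-coherence). The commutative algebra structure on $\zp_*\cO_E$ is inherited fiberwise from the ring structure on $\cO_E$, and $\cO_X\to\zp_*\cO_E$ makes it an $\cO_X$-algebra; locally over an open $U\subset X$ trivializing $E$ one has $\zp_*\cO_E|_U\simeq\cO_X|_U[t_1,\ldots,t_r]$, i.e.\ $\cS_{\cO_X|_U}(\cO_X|_U^{\oplus r})$, which exhibits both the algebra structure and quasi-coherence explicitly. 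So $\cO^E_X$ is an object of ${\tt qcCAlg}(\cO_X)$.

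Second, the membership ${\cal J}^\infty(\cO^E_X)\in{\tt qcCAlg}(\cD_X)$ is then immediate: the jet functor ${\cal J}^\infty$ was introduced just above the statement as the left adjoint of the forgetful functor ${\tt qcCAlg}(\cD_X)\to{\tt qcCAlg}(\cO_X)$, so by construction it takes any $\cO_X$-algebra in that category to a $\cD_X$-algebra in ${\tt qcCAlg}(\cD_X)$; applying it to $\cO^E_X$ gives the claim with no further work. One should perhaps add a sentence noting that $X$ is assumed smooth (so that $\cD_X$ and hence the jet functor make sense, per Remark \ref{ModelDiffOp}).

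The only genuinely substantive point --- the ``main obstacle,'' though it is mild --- is the affineness of $\zp$ and the consequent preservation of quasi-coherence under $\zp_*$. Everything else is formal. If one wanted a fully self-contained argument one could avoid invoking affineness abstractly and instead glue the local descriptions $\zp_*\cO_E|_U\simeq\cS_{\cO_X}(\scrE^\vee)|_U$ over a trivializing cover, checking that the transition isomorphisms are $\cO_X$-algebra isomorphisms, which they are since they come from the $\operatorname{GL}_r$-valued cocycle of the vector bundle acting linearly on the generators; quasi-coherence then follows because $\cS_{\cO_X}(\scrE^\vee)$ is quasi-coherent whenever $\scrE^\vee$ is, $\scrE$ being locally free of finite rank. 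I would present the short version (affine morphism $\Rightarrow$ quasi-coherent pushforward) in the main text and relegate the gluing check, if desired, to a parenthetical remark.
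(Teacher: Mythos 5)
Your argument is correct and follows essentially the same route as the paper's own proof in Appendix B: both use the affineness of $\zp$ to conclude quasi-coherence of $\zp_*\cO_E$ (the paper via the equivalence ${\tt qcMod}(\cO_E)\simeq{\tt qcMod}(\cO_X)\cap{\tt Mod}(\cO^E_X)$ from Hartshorne, you via preservation of quasi-coherence under affine pushforward), then observe that the $\cO_X$-algebra structure comes from $\zp^\sharp$ and the ring structure of $\cO_E$, and finally note that the membership of ${\cal J}^\infty(\cO^E_X)$ in ${\tt qcCAlg}(\cD_X)$ is automatic from the definition of the jet functor. The only cosmetic difference is that the paper spells out the $\cO_X(V)$-bilinearity of the multiplication section by section, whereas you package this into the local trivialization $\zp_*\cO_E|_U\simeq\cS_{\cO_X|_U}(\cO_X|_U^{\oplus r})$.
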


The $\cD_X$-algebra ${\cal J}^\infty(\cO^E_X)$ (or its total section $\cD_X(X)$-algebra ${\cal J}^\infty(\cO^E_X)(X)$) is the $\cD$-geometric counterpart of the function algebra $\cO(J^\infty E)$ $=\cF(\zp_\infty)=\cF$ of the infinite jet space of a smooth vector bundle $\zp:E\to X$. Note that we prefer in this section the notation $J^\ell E$ to the notation $J^\ell(\zp)$ ($0\le\ell\le\infty$). Proposition \ref{VBFunAlg} is rather natural. A proof can be found in Appendix B, Subsection \ref{ProofVBFunAlg}.\medskip

\begin{rem}{\em In \cite{BPP1} and \cite{BPP2}, as well as in \cite{BPP4}, we proved in particular that the category $\tt DG\cD A$ of differential non-negatively graded commutative unital algebras over $\cD=\cD_X(X)$ admits a cofibrantly generated model structure, if $X$ is a {\em smooth affine variety}. This theorem results from the transfer of the model structure on the category $\tt DG\cD M$ of differential non-negatively graded $\cD$-modules to the category $\tt DG\cD A$. Actually, the categories under investigation are the category ${\tt DG_+qcMod}(\cD_X)$ of {\sl sheaves} of differential non-negatively graded $\cO_X$-quasi-coherent $\cD_X$-modules and the category ${\tt DG_+qcCAlg}(\cD_X)$ of {\sl sheaves} of differential non-negatively graded $\cO_X$-quasi-coherent commutative unital $\cD_X$-algebras, i.e., of commutative monoids in the symmetric monoidal category ${\tt DG_+qcMod}(\cD_X)$. The restriction to a smooth affine variety (both assumptions are necessary) allows to show that the total section functor yields an equivalence of categories \be\label{EquivCat}\zG(X,-):{\tt DG_+qcCAlg}(\cD_X)\rightleftarrows\tt DG\cD A\;,\ee and similarly for ${\tt DG_+qcMod}(\cD_X)$ and $\tt DG\cD M$. These equivalences allow in turn to avoid the problem of the non-existence of a projective model structure on ${\tt DG_+qcMod}(\cD_X)$ for an arbitrary smooth scheme \cite{Gil06} and so the problem of the non-existence of a transferred structure on ${\tt DG_+qcCAlg}(\cD_X)$.}\end{rem}

Before we describe the model structure of $\tt DG\cD A$, we recall the

\begin{defi}[\cite{BPP4}]\label{DefRSDA} A {\bf relative Sullivan $\cD$-algebra} $(\,${\small RS$\cD\!$A}$\,)$ is a $\tt DG\cD A$-morphism $(\,$standard definition$\,)$
$$(A,d_A)\to(A\0 \cS V,d)\;$$ $(\,$the tensor product functor $\0$ and the graded symmetric tensor algebra functor $\cS$ are taken over the ring $\cO=\cO_X(X)$ and the differential $d$ is usually not the standard differential on a tensor product$\,)$ that sends $a\in A$ to $a\0 1_\cO\in A\0 \cS V$. Here $V$ is a free non-negatively graded $\mathcal{D}$-module $$V=\bigoplus_{\za\in J}\,\cD\cdot v_\za\;,$$ which admits a homogeneous basis $(v_\za)_{\za\in J}$ that is indexed by a well-ordered set $J$, and is such that \be\label{Lowering}d v_\za=d (1_A\0 v_\za) \in A\0 \cS V_{<\za}\;,\ee for all $\za\in J$. In the last requirement, we set $V_{<\za}:=\bigoplus_{\zb<\za}\cD\cdot v_\zb\,$. We refer to Property (\ref{Lowering}) by saying that $d$ is {\bf lowering}.\medskip

A {\small RS$\cD\!$A} with the property \be\label{minimal}\za\le\zb \Rightarrow \deg v_\za\le\deg v_\zb\;\ee $(\,$resp., with the property $d=d_A\0 \id + \id\0\, d_\cS,$ where $d_\cS$ is a differential on $\cS V$ $(\,$in particular the differential $d_\cS=0$$\,)$; over $(A,d_A)=(\cO,0)$$\,)$ is called a {\bf minimal} {\small RS$\cD\!$A} $(\,$resp., a {\bf split} {\small RS$\cD\!$A}; a {\bf Sullivan $\cD$-algebra} $(\,${\small S$\cD\!$A}$\,)$$\,)$  and it is often simply denoted by $(A\0 \cS V,d)$ $(\,$resp., $(A\otimes \cS V,d)$; $(\cS V,d)$$\,)$.
\end{defi}

The concept of relative Sullivan $\cD$-algebra is similar to the notion of relative Sullivan $\Q$-algebra, which originates from Rational Homotopy Theory.

\begin{theo}\label{FinGenModDGDA} The category $\tt DG\mathcal{D}A$ of differential non-negatively graded commutative unital algebras over the ring $\cD=\cD_X(X)$ of total sections of the sheaf $\cD_X$ of differential operators of a smooth affine variety $X$, is a finitely $(\,$and thus a cofibrantly$\,)$ generated model category $(\,$in the sense of \cite{GS} and in the sense of \cite{Hov}$\;)$. The weak equivalences are the $\tt DG\cD A$-morphisms that induce an isomorphism in homology, the fibrations are the $\tt DG\cD A$-morphisms that are surjective in all positive degrees $p>0$, and the cofibrations are exactly the retracts of the relative Sullivan $\cD$-algebras.\end{theo}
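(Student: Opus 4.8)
\medskip\noindent\emph{Sketch of the intended proof.} The plan is to produce the model structure on $\tt DG\cD A$ by \emph{transfer} (Kan's lifting lemma for cofibrantly generated model categories, as in \cite{GS} and \cite{Hov}) along the free/forgetful adjunction
$$\cS:\tt DG\cD M\;\rightleftarrows\;\tt DG\cD A:\op{For}\;,$$
starting from the projective model structure on the category $\tt DG\cD M$ of differential non-negatively graded $\cD$-modules. Here $\cS=\cS_\cO$ is the graded symmetric tensor algebra over $\cO=\cO_X(X)$, with $\cD$ acting on $\cS M$ by the Leibniz extension of its actions on $\cO$ and on $M$. Recall that $\tt DG\cD M$ carries a finitely generated model structure whose weak equivalences are the quasi-isomorphisms and whose fibrations are the maps surjective in all positive degrees, with generating cofibrations $I=\{\,S^{n-1}_\cD\hookrightarrow D^n_\cD:n\ge 1\,\}$ and generating trivial cofibrations $J=\{\,0\hookrightarrow D^n_\cD:n\ge 1\,\}$, where $S^{n-1}_\cD$ is $\cD$ in degree $n-1$ and $D^n_\cD$ is $\cD$ in degrees $n$ and $n-1$ with identity differential. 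First I would record that $\tt DG\cD A$ is complete and cocomplete (limits are computed on underlying $\cD$-modules; colimits exist because $\tt DG\cD A$ is the category of algebras over the finitary monad $\cS$), so that the machinery applies in principle.\medskip

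\noindent Next I would verify the two hypotheses of the transfer theorem. \textbf{(i) Smallness.} Since $\Hom_{\tt DG\cD M}(D^n_\cD,N)\cong N_n$ and $\Hom_{\tt DG\cD M}(S^{n-1}_\cD,N)\cong Z_{n-1}(N)$, and since both the degree-$n$ component and the cycle functors commute with filtered colimits in $\tt DG\cD M$, the objects $\cS(0)=\cO$, $\cS(S^{n-1}_\cD)$ and $\cS(D^n_\cD)$ are $\omega$-compact in $\tt DG\cD A$ by adjunction; hence the domains and codomains of $\cS(I)$ and $\cS(J)$ are finite, the small object argument applies in its finite form, and the transferred structure will be \emph{finitely} generated. \textbf{(ii) Acyclicity.} I must show every relative $\cS(J)$-cell complex is a weak equivalence. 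As $\cO$ is initial, a pushout of $\cS(0)\to\cS(D^n_\cD)$ along the unique map $\cO\to A$ is $A\to A\otimes_\cO\cS(D^n_\cD)$. The complex $D^n_\cD$ is $\cO$-linearly contractible, and applying the functor $\cS$ to the contracting homotopy shows that $\cS(D^n_\cD)$ is chain homotopy equivalent to $\cS(0)=\cO$; in particular $\cS(D^n_\cD)\to\cO$ is a quasi-isomorphism. Moreover $\cD_X$ is locally free over $\cO_X$ on a smooth variety (its order filtration has associated graded $\mathrm{Sym}_{\cO_X}T_X$), so $D^n_\cD$ is a bounded-below complex of flat $\cO$-modules and every symmetric power $\cS^k(D^n_\cD)$ is $\cO$-flat; tensoring the quasi-isomorphism $\cS(D^n_\cD)\to\cO$ with the bounded-below complex $A$ over $\cO$ therefore preserves it (Künneth, or the flat base-change spectral sequence). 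Hence $A\to A\otimes_\cO\cS(D^n_\cD)$ is a quasi-isomorphism, and transfinite composites of such maps are again quasi-isomorphisms, since homology commutes with filtered colimits. This establishes acyclicity.\medskip

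\noindent By Kan's lemma, $\tt DG\cD A$ then carries a finitely generated model structure with generating (trivial) cofibrations $\cS(I)$ (resp.\ $\cS(J)$): a morphism is a weak equivalence iff its image under $\op{For}$ is a quasi-isomorphism, i.e.\ iff it is an isomorphism in homology; a morphism is a fibration iff it has the right lifting property against $\cS(J)$, which, tested against $\cS(0)\to\cS(D^n_\cD)$, amounts exactly to surjectivity in every positive degree. It remains to identify the cofibrations. They are the retracts of relative $\cS(I)$-cell complexes; attaching a cell $\cS(S^{n-1}_\cD)\to\cS(D^n_\cD)$ to $A$ along the morphism classified by an element $z\in A_{n-1}$ produces the inclusion $A\to A\otimes_\cO\cS(\cD\cdot v)$ with $\deg v=n$ and $dv=z$, and iterating this construction along a well-ordered index set yields precisely an inclusion $A\to A\otimes_\cO\cS V$ with $V=\bigoplus_{\za}\cD\cdot v_\za$ and $dv_\za\in A\otimes_\cO\cS V_{<\za}$ for all $\za$, i.e.\ a relative Sullivan $\cD$-algebra in the sense of Definition \ref{DefRSDA}; conversely the well-ordered presentation of any relative Sullivan $\cD$-algebra exhibits it as a relative $\cS(I)$-cell complex. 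Hence the cofibrations are exactly the retracts of relative Sullivan $\cD$-algebras, as asserted.\medskip

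\noindent\emph{Main obstacle.} The delicate point is the acyclicity step (ii): controlling the homology not of $\cS(D^n_\cD)$ itself but of the base change $A\otimes_\cO\cS(D^n_\cD)$. This is where the geometric hypothesis is essential — local freeness of $\cD_X$ over $\cO_X$, available on a smooth variety, is what makes the symmetric powers $\cO$-flat and lets the Künneth argument go through; without a flatness input of this kind the transferred structure need not exist. The reduction to the affine case, which also underlies the equivalence of categories (\ref{EquivCat}), is what permits working with $\tt DG\cD A$ rather than with sheaves.
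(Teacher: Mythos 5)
Your proposal is correct and follows essentially the same route as the paper, which does not prove Theorem \ref{FinGenModDGDA} here but explicitly attributes it to a transfer of the projective model structure on $\tt DG\cD M$ along the symmetric-algebra/forgetful adjunction, carried out in \cite{BPP1}, \cite{BPP2}, \cite{BPP4}. In particular your identification of the crucial geometric input — flatness (indeed projectivity) of $\cD$ over $\cO$ on a smooth affine variety, needed to control the homology of $A\otimes_\cO\cS(D^n_\cD)$ — matches exactly the point the authors single out in Subsection \ref{CoP} as the only place where the nature of $X$ is used.
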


Further, we describe in \cite{BPP1}, \cite{BPP2}, and \cite{BPP4} explicit functorial cofibration-fibration factorizations, as well as an explicit functorial cofibrant replacement functor. These descriptions are too long to be recalled here.\medskip

When remembering that the coproduct in $\tt DG\cD A$ is the tensor product, we get from \cite{Hir2}:

\begin{prop} For any differential graded $\cD$-algebra $A$, the coslice category $A\downarrow \tt DG\cD A$ carries a cofibrantly generated model structure given by the adjoint pair $L_{\0}:{\tt DG\cD A}\rightleftarrows A\downarrow{\tt DG\cD A}:\op{For}$, in the sense that its distinguished morphism classes are defined by $\op{For}$ and its generating cofibrations and generating trivial cofibrations are given by the functor $L_\0\,$, which sends $B$ in $\tt DG\cD A$ to $A\to A\0 B$ in $\tt A\downarrow\tt DG\cD A$.\end{prop}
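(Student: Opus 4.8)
The plan is to recognize this as an instance of the standard model category machinery for coslice (under) categories applied to the cofibrantly generated model structure on $\tt DG\cD A$ supplied by Theorem~\ref{FinGenModDGDA}. The statement we must prove is, up to replacing the specific category, exactly the general principle that if $\cM$ is a cofibrantly generated model category and $A\in\cM$ is any object, then the coslice $A\downarrow\cM$ inherits a cofibrantly generated model structure, with weak equivalences, fibrations and cofibrations created by the forgetful functor $\op{For}:A\downarrow\cM\to\cM$, and whose generating (trivial) cofibrations are obtained by applying the left adjoint of $\op{For}$ to the generating (trivial) cofibrations of $\cM$. So the first thing I would do is recall this general result, citing \cite{Hir2} (Theorem~7.6.5 in Hirschhorn, \emph{Model Categories and Their Localizations}, on overcategories and undercategories), and observe that $\tt DG\cD A$ satisfies its hypotheses by Theorem~\ref{FinGenModDGDA}.

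Next I would verify the concrete identification of the left adjoint. The claim is that the left adjoint $L_\0:\tt DG\cD A\to A\downarrow\tt DG\cD A$ to $\op{For}$ sends an object $B$ to the coproduct object $A\to A\amalg B$, which in $\tt DG\cD A$ is $A\to A\0 B$ since the coproduct in $\tt DG\cD A$ is the (graded, $\cO$-linear) tensor product. This is a one-line adjunction check: a morphism in $A\downarrow\tt DG\cD A$ from $(A\to A\0 B)$ to an object $(A\to C)$ is, by the universal property of the coproduct, the same as a $\tt DG\cD A$-morphism $B\to C$ compatible with nothing extra, i.e.\ $\Hom_{A\downarrow\tt DG\cD A}(L_\0 B, (A\to C))\cong\Hom_{\tt DG\cD A}(B,\op{For}(A\to C))$. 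Having pinned down $L_\0$, I would then simply transport the generating cofibrations and generating trivial cofibrations of $\tt DG\cD A$ across $L_\0$: if $\{i_s\}$ and $\{j_t\}$ generate the cofibrations and trivial cofibrations of $\tt DG\cD A$, then $\{L_\0 i_s\}$ and $\{L_\0 j_t\}$ generate those of $A\downarrow\tt DG\cD A$.

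The reason this works with almost no further effort is that everything is created by the forgetful functor. Since $\op{For}$ detects and preserves colimits, the smallness of the domains of $\{i_s\},\{j_t\}$ needed for the small object argument is inherited; since weak equivalences, fibrations, and trivial fibrations in $A\downarrow\tt DG\cD A$ are exactly the morphisms whose underlying $\tt DG\cD A$-morphism has the corresponding property, the lifting characterizations pass through $\op{For}$, and the two-out-of-three and retract axioms are immediate. The only genuinely structural points to check — and these are where I would actually spend a sentence or two — are (i) that $A\downarrow\tt DG\cD A$ is complete and cocomplete (limits are computed in $\tt DG\cD A$ with the induced maps from $A$; colimits are the colimits in $\tt DG\cD A$ of the diagram together with $A$, using cocompleteness of $\tt DG\cD A$), and (ii) that the classes so defined are the injectives/cell-complexes of $\{L_\0 i_s\}$ and $\{L_\0 j_t\}$, which follows from the adjunction $L_\0\dashv\op{For}$ turning lifting problems against $L_\0 i_s$ in $A\downarrow\tt DG\cD A$ into lifting problems against $i_s$ in $\tt DG\cD A$.

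I expect there to be essentially no serious obstacle: the work is entirely the bookkeeping of transporting a cofibrantly generated structure along an adjunction whose right adjoint is forgetful, and the decisive input — that $\tt DG\cD A$ is itself cofibrantly (indeed finitely) generated with cofibrations the retracts of {\small RS$\cD\!$A}s — is Theorem~\ref{FinGenModDGDA}, which we may assume. If anything deserves care, it is making sure the coproduct description $A\amalg B=A\0 B$ is the right one in the differential graded $\cD$-algebra setting (the tensor product over $\cO$ with the appropriate differential and $\cD$-action), but that is exactly the ``coproduct in $\tt DG\cD A$ is the tensor product'' remark invoked just before the proposition, so it can be cited rather than reproved. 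Thus the proof reduces to: (1) cite Hirschhorn's undercategory theorem; (2) check $\tt DG\cD A$ meets its hypotheses via Theorem~\ref{FinGenModDGDA}; (3) identify $L_\0$ as $B\mapsto(A\to A\0 B)$ by a direct adjunction argument; (4) conclude.
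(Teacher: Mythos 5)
Your proposal is correct and matches the paper's own treatment: the paper proves this proposition by exactly the same route, namely observing that the coproduct in $\tt DG\cD A$ is the tensor product and then invoking Hirschhorn's undercategory result \cite{Hir2} applied to the cofibrantly generated structure of Theorem \ref{FinGenModDGDA}. Your additional verifications (identification of $L_\0$ via the coproduct adjunction, transport of generating sets, smallness) are the standard bookkeeping the paper leaves implicit.
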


\subsection{Koszul-Tate resolution implemented by a $\cD$-ideal}\label{KTRHT2}

A partial differential equation (see Appendix A, Section \ref{Vino}) of order $k$ acting on the sections $\zf$ of a smooth vector bundle $\zp:E\to X$ is a smooth fiber subbundle $\zS^0\subset J^kE$, and (at least if $\zS^0$ is formally integrable) its infinite prolongation $\zS\subset J^\infty E$ is a smooth manifold. If $\zS^0$ is implemented by a differential operator $D$ with representative morphism $\psi$, we have $\zS^0=\ker\psi$ and $\zS=\ker\psi^\infty$, where $\psi^\infty$ is the representative morphism of the infinite prolongation $j^\infty\circ D$ of $D$. In coordinates: the equation of $\zS^0$ is $\psi(x^i,u^a_\za)=0$ and the equation of $\zS$ is $(D^\zb_x\psi)(x^i,u^a_\za)=0,\forall \zb$. These equations are the algebraizations of the {\small PDE}-s $$\psi(x^i,\p_x^\za\zf^a)=0\quad\text{and}\quad d_x^\zb(\psi(x^i,\p_x^\za\zf^a))=0,\forall\zb\;.$$ Since the latter differential equations have the same solutions, we can focus on $\zS$ instead of $\zS^0$. Hence, a {\small PDE} $\zS^0$ can be thought of as a manifold $\zS$, or, in view of the space-algebra duality, as the function algebra $\Ci(\zS)$, which is (see above) the quotient of the algebra $\cO(J^\infty E)$ by the ideal $\cI$ of all functions of $\cO(J^\infty E)$ that vanish on $\zS$. A {\small PDE} acting on the sections of $E$ can thus finally be interpreted as an ideal $\cI\subset \cO(J^\infty E)$. It follows that, in our present $\cD$-geometric context, where we considered an algebraic vector bundle $\zp:E\to X$ over a smooth affine variety $X$, we think about a {\small PDE} acting on the sections of $E$, as a $\cD$-ideal (i.e., an $\cO$-ideal and a $\cD$-submodule) $\cI\subset \cJ$, where $$\cJ:={\cal J}^\infty(\cO^E_X)(X)=\zG(X,{\cal J}^\infty(\cO^E_X))\in\cD A\;$$ (see Equation (\ref{EquivCat})), and we think about $\cQ:=\cQ(\zp,\cI):=\cJ/\cI\in\tt\cD A$ as the $\cD$-algebra of the corresponding shell functions. Our goal is to resolve this $\cD$-algebra.\medskip

The fundamental concepts of the jet bundle formalism are the Cartan distribution and the Cartan connection, or, still, horizontal linear differential operators $\cC\!\op{Diff}(\zp_\infty^*(\zh_1),\zp_\infty^*(\zh_2))$ between pullback bundles $\zp_\infty^*(\zh_i):\zp_{\infty}^*F_i\to J^\infty E$ of smooth vector bundles $\zh_i:F_i\to X$. Hence, jets lead to a systematic base change $X\rightsquigarrow J^\infty E$. The remark is essential, in the sense that both, the classical Koszul-Tate resolution of Mathematical Physics (constructed above in the context of a regular first-order on-shell reducible gauge theory) and Verbovetsky's Koszul-Tate resolution (induced by the compatibility complex of the linearization of a differential equation), use the jet formalism to resolve shell functions, and thus enclose the base change $\bullet\to X$ $\;\rightsquigarrow\;$ $\bullet\to J^\infty E$. This means that, in the dual function algebra setting, or, in the present situation, in the dual $\cD$-algebra setting, we pass from $\tt DG\cD A$, i.e., from the coslice category $\cO(X)\downarrow \tt DG\cD A$ ($\cO(X):=\cO=\cO_X(X)$ is the base ring for the tensor product in $\tt DG\cD A$ and $(\cO, 0)$ is the initial object in $\tt DG\cD A$) to the coslice category $\cO(J^\infty E)\downarrow \tt DG\cD A$.\medskip

A first candidate for a resolution of ${\cal Q}=\cJ/\cI\in\tt \cD A$ is of course the {\bf cofibrant replacement of ${\cal Q}$} in $\tt DG\cD A$ given by the functorial `Cofibration -- Trivial Fibration' factorization of \cite[Theorem 28]{BPP4}, when applied to the unique $\tt DG\cD A$-morphism $\cO\to {\cal Q}$. Indeed, this decomposition implements a functorial cofibrant replacement functor $Q$ (\cite[Theorem 34]{BPP4}) with value $Q({\cal Q})=\cS V$ described in \cite[Theorem 28]{BPP4}: $$\cO\rightarrowtail \cS V\stackrel{\sim}{\twoheadrightarrow}{\cal Q}\;,$$ where $\rightarrowtail$ (resp., $\twoheadrightarrow$, $\stackrel{\sim}{\to}$) denotes a cofibration (resp., a fibration, a weak equivalence (here an isomorphism in homology)). Since ${\cal Q}$ is concentrated in degree 0 and has 0 differential, it is clear that $H_k(\cS V)$ vanishes, except in degree 0 where it coincides with ${\cal Q}$, so that $\cS V$ is indeed a resolution of $\cQ$.\medskip

In the next section, we suggest a general and precise definition of a Koszul-Tate resolution. Although such a definition does not seem to exist in the literature, it is commonly accepted that a Koszul-Tate resolution of the quotient $Q$ of a commutative ring $k$ by an ideal $I$ is a {\bf $k$-algebra that resolves $Q=k/I$}.\medskip

The natural idea -- to get a resolving ${\cal J}$-algebra for $\cQ$ -- is to replace $\cS V$ by ${\cal J}\0 \cS V$, and, more precisely, to consider the `Cofibration -- Trivial Fibration' decomposition \be\label{CTFJQ}{\cal J}\rightarrowtail {\cal J}\0 \cS V\stackrel{\sim}{\twoheadrightarrow}\cQ\;\ee of the canonical $\tt DG\cD A$-morphism $\cJ\to\cQ$ \cite[Theorem 28]{BPP4}. The differential graded $\cD$-algebra $\label{KTD} {\cal J}\0 \cS V$ {\it is} a {\bf $\cJ$-algebra that resolves} ${\cal Q}=\cJ/\cI$, but it is of course {\it not} a cofibrant replacement, since the left algebra in (\ref{CTFJQ}) is not the initial object $\cO$ in $\tt DG\cD A$ (further, the considered factorization does not canonically induce a cofibrant replacement in $\tt DG\cD A$, since it can be shown that the morphism $\cO\to {\cal J}$ is not a cofibration). However, as emphasized above, the Koszul-Tate problem requires a passage from the category $\tt DG\cD A$ to the category ${\cal J}\downarrow \tt DG\cD A$ (under the $\cD$-geometric counterpart $\cJ$ of $\cO(J^\infty E)$). It is easily checked that, in the latter undercategory, ${\cal J}\0 \cS V$ {\it is} a {\bf cofibrant replacement of $\cQ$}.

\begin{defi}[\cite{BPP4}]\label{KTRDG} Let $\cJ\in\tt \cD A$ be a $\cD$-algebra and let $\cI\subset\cJ$ be a $\cD$-ideal. The algebra $\cJ\0\,\cS V\in\tt DG\cD A$ given by the `Cofibration -- Trivial Fibration' factorization of $\cJ\to \cJ/\cI$ is a $\cJ$-algebra that resolves $\cJ/\cI$. Moreover, the algebra $\cJ\0\cS V$ $(\,$in fact $\cJ\rightarrowtail\cJ\0\cS V$$\,)$ is a cofibrant replacement of $\cJ/\cI$ $(\,$in fact of $\cJ\to\cJ/\cI$$\,)$ in the model category $\cJ\downarrow\tt DG\cD A$. We refer to $\cJ\0\cS V$ as the {\bf cofibrant replacement Koszul-Tate resolution} of $\cJ/\cI$. \end{defi}

\section{Koszul-Tate resolution in $\cD$-Geometry}\label{KTRDGDG}

In view of Subsection \ref{KTRHT2}, a Koszul-Tate resolution of a $\tt DG\cD A$-morphism $\cJ\to\cQ$, where $\cJ\in\tt \cD A$, should be an algebra $\cC\in\tt DG\cD A$, as well as a $\cal J$-algebra. This suggests to combine the $\cD$-action $\triangleright$ and the $\cJ$-action $\triangleleft$ in an action $\diamond$ of the ring $$\cJ[\cD]:=\cJ\0_\cO\cD$$ of linear differential operators with coefficients in $\cJ$, by setting, for any $j\in\cJ$, $D\in\cD$, and $c\in\cC$, $$(j\0 D)\diamond c=\left((j\0 1_\cO)\circ(1_\cJ\0 D)\right)\diamond c:=j\triangleleft(D\triangleright c)\;.$$

The introduction of the ring $\cJ[\cD]$ is the more natural as the algebra $\cJ=\cJ^\infty(\cO^E_X)(X)\in\tt\cD A$ is the $\cD$-geometric counterpart of the algebra $O(J^\infty E)=\cF=\cF(\zp_\infty)$ (that we denote in Appendix A, to simplify, also by $\cF(\zp)$), and as the $\cJ$-module $\cJ[\cD]=\cJ\0_\cO\cD\in\tt\cD M$ is the $\cD$-geometric analog of the $\cF$-module $\cF(\zp)\0_{\Ci(X)}\cD(X)\simeq \cC\cD(\cF,\cF)$ used in smooth geometry (see Appendix A). Indeed, as stressed in Subsection \ref{KTRHT2}, horizontal linear differential operators $\cC\cD(\cF,\cF)$ are the fundamental ingredient of the Koszul-Tate resolutions in Mathematical Physics and in Cohomological Analysis of {\small PDE}-s. Therefore, the passage from $\tt DG\cD A$ to $\tt DG\cJ[\cD]A$ corresponds to the necessary encryption of horizontal differential operators in the $\cD$-geometric approach to the Koszul-Tate resolution and to the Batalin-Vilkovisky formalism.\medskip

We will use the following notation. For any monoidal category $({\tt C}, \otimes, I)$ and any monoid $(\cA,\zm,\zh)$ in $\tt C$, we denote by $\tt Mod_{\tt C}(\cA)$ the category of (left) $\cA$-modules in $\tt C$, i.e., of $\tt C$-objects $M$ together with a ${\tt C}$-morphism $\zn:\cA\otimes M\rightarrow M$, such that the usual associativity and unitality diagrams commute. If $\tt C$ is symmetric monoidal, the category $\tt CMon(C)$ is the category of commutative monoids in $\tt C$. Finally, for any additive (or even Abelian) category $\tt E$, we denote by $\tt Ch_+(E)$ the category of non-negatively graded chain complexes in $\tt E$.

If $\cA\in\tt\cD A\subset DG\cD A$ is a differential graded $\cD$-algebra concentrated in degree 0 and with zero differential, we have \be\label{modules}\tt Mod_{DG\cD M}(\cA)= Ch_+(Mod_{\cD M}(\cA))= Ch_+(\cA[\cD]M)=DG\cA[\cD]M\;,\ee since, as well-known \cite{BD}, $$\tt Mod_{\cD M}(\cA)= Mod(\cA[\cD]))=:\cA[\cD]M\;.$$ It follows from (\ref{modules}) that \be\label{algebras}\tt DG\cA[\cD]A:=CMon(DG\cA[\cD]M)= CMon(Mod_{DG\cD M}(\cA))\simeq \cA\downarrow DG\cD A\;,\ee where the equivalence has been proven in detail in \cite{BPP4}. Equation (\ref{algebras}), together with Definition \ref{KTRDG}, provides additional evidence that a Koszul-Tate resolution of a $\tt DG\cD A$-map $\cA\to \cB$, with source $\cA\in\tt\cD A,$ should be an object $\cC$ in $$\cC\in \tt\cA\downarrow DG\cD A\simeq DG\cA[\cD]A=CMon(DG\cA[\cD]M)\;.$$

Hence, in the general situation, over a smooth -- not necessarily affine -- scheme $X$, we consider, in addition to the above mentioned category $${\tt DG_+qcCAlg}(\cD_X)={\tt CMon(DG_+qcMod}(\cD_X))\;,$$ also the category \be{\tt DG_+qcCAlg}(\cA_X[\cD_X])={\tt CMon(DG_+qcMod}(\cA_X[\cD_X]))\;\label{DGADAX}\ee of differential non-negatively graded $\cO_X$-quasi-coherent commutative unital $\cA_X[\cD_X]$-algebras, where $$\cA_X\in{\tt qcCAlg}(\cD_X)\quad\text{and}\quad \cA_X[\cD_X]=\cA_X\0_{\cO_X}\cD_X\;.$$ For simplicity, {\bf we refer to the objects of the category (\ref{DGADAX}) as differential graded $\cA[\cD_X]$-algebras} (thus writing $\cA$ instead of $\cA_X$). A few details on $\cA_X[\cD_X]$ and ${\tt DG_+qcCAlg}$ $(\cA_X[\cD_X])$ can be found in Appendix B, Section \ref{Jet} (we recommend to read Definition \ref{DGJDADefi} and Example \ref{DGJDAEx}).\medskip

Notice now that the cofibrant replacement Koszul-Tate resolution (see Definition \ref{KTRDG}) of a $\tt DG\cD A$-map $\cJ\to\cQ\,$, $\cJ\in\tt\cD A$, is the $\tt DG\cD A$-cofibration $\cJ\rightarrowtail\cJ\0\cS V$, whose target resolves $\cQ$ (see Equation (\ref{CTFJQ})) and which is, in view of Theorem \ref{FinGenModDGDA}, a retract of a relative Sullivan $\cD$-algebra, and, in view of \cite[Theorem 28]{BPP4}, even just a minimal (non-split) relative Sullivan $\cD$-algebra (see Definition \ref{DefRSDA}). This observation suggests the following two definitions, which generalize Definition \ref{DefRSDA} and the just recalled Definition \ref{KTRDG}, respectively, taking into account the above-motivated passage to the category (\ref{DGADAX}):

\begin{defi} Let $X$ be a smooth scheme and let $\cA$ be a $\cD_X$-algebra. A differential graded $\cA[\cD_X]$-algebra $\cC$ is said to be of {\bf Sullivan type}, if it admits an increasing filtration $\cC_0\subset\cC_1\subset\ldots$ by differential graded $\cD_X$-subalgebras, such that there is a differential graded $\cD_X$-algebra morphism $\cA\to\cC_0$ $(\,$we set $\cC_{-1}:=\cA\,)$ and that $\cC_k$ $(\,k\ge 0\,)$ is isomorphic as differential graded $\cD_X$-algebra to $\cC_k\simeq \cC_{k-1}\0\cS V_k,$ where $V_k$ is a locally projective graded $\cD_X$-submodule of $\cC_k$ such that $d_{\cC_k}V_k\subset\cC_{k-1}\,.$\end{defi}

\begin{defi}\label{KTRDef} Let $X$ be a smooth scheme, let $\cA$ be a $\cD_X$-algebra, and let $\zf:\cA\to\cB$ be a differential graded $\cD_X$-algebra morphism. A {\bf $\cD$-geometric Koszul-Tate resolution} of $\zf$ is a differential graded $\cA[\cD_X]$-algebra morphism $\psi:\cC\to \cB\,,$ which is a quasi-isomorphism in the category of differential graded $\cA[\cD_X]$-modules, and whose source $\cC$ is of Sullivan type. \end{defi}

\begin{rem}{\em\label{ExpSulTyp}Observe first that a quasi-isomorphism in the category of differential graded $\cA[\cD_X]$-modules is a morphism that induces a bijection in homology, i.e., is an $\cA$-linear quasi-isomorphism in the category of differential graded $\cD_X$-modules. Further, the differential on $\cC_{k-1}\0\cS V_k$ is $d_{\cC_k}$ and, since $d_{\cC_k}$ is a degree $-1$ graded derivation, it is completely defined by the differential of the differential graded $\cD_X$-subalgebra $\cC_{k-1}$ and the restriction $d_{\cC_k}|_{V_k}$ (note that, for $c\in\cC_{k-1}$ and $v,w\in V_k$, for instance, we have $c\0(v\odot w)=(c\0 1_{\cO_X})\star(1\0 v)\star(1\0 w)$, where $1$ is the unit in $\cC_{k-1}$ and $\star$ the multiplication in $\cC_k$).}\end{rem}

These definitions show that the confinement to the smooth affine case in Section \ref{KTRHT} does not only allow to use the artefacts of the model categorical environment, i.e., to compute the cofibrant replacement Koszul-Tate resolution, but allows also to discover the fundamental structure of this Koszul-Tate resolution, and to extend this structure to the general case of an arbitrary smooth scheme $X$.\medskip

The requirement that $\cC$ be equipped with an increasing filtration by differential graded $\cD_X$-subalgebras $\cC_k$ ($k\ge 0$) and that there exists a differential graded $\cD_X$-algebra morphism $j_0:\cA\to\cC_0$, is equivalent to the condition that $\cC$ be filtered by a sequence $\cC_0\subset\cC_1\subset\ldots$ of differential graded $\cA[\cD_X]$-subalgebras. Indeed, since $j_0:\cA\to \cC_0$, as well as the canonical inclusions $i_k:\cC_{k-1}\to \cC_k$ ($k\ge 1$), are differential graded $\cD_X$-algebra morphisms, we have differential graded $\cD_X$-algebra morphisms $j_k=i_k\circ\ldots\circ i_1\circ j_0:\cA\to\cC_k$ that provide a filtering sequence $\cC_0\subset\cC_1\subset\ldots$ of differential graded $\cA[\cD_X]$-subalgebras. Conversely, such a sequence gives a differential graded $\cD_X$-algebra morphism $\cA\ni a\mapsto a\triangleleft 1_{\cC_0}\in\cC_0\,.$ Hence, a $\cD$-geometric Koszul-Tate resolution of a differential graded $\cD_X$-algebra morphism $\zf:\cA\to\cB$ is the same as an $\cA$-semi-free resolution of $\zf$ in the sense of \cite{BD}. It follows \cite{BD} that the next proposition holds.

\begin{prop} Let $X$ be a smooth scheme and $\cA$ a $\cD_X$-algebra. Any differential graded $\cD_X$-algebra morphism $\cA\to\cB$ admits a $\cD$-geometric Koszul-Tate resolution. This holds in particular if $\cA=\cJ^\infty(\cO^E_X)\in{\tt qcCAlg}(\cD_X)$ is the $\cD_X$-algebra `of functions of the infinite jet space' of an algebraic vector bundle $\zp:E\to X$ of finite rank over a smooth scheme $X$.\end{prop}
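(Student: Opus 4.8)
The plan is to build the resolution by the classical stepwise procedure of adjoining generators to kill homology, carried out this time in the category of $\cO_X$-quasi-coherent $\cD_X$-modules, and then to observe that the outcome is of Sullivan type. The key reduction, already sketched in the paragraph preceding the statement, is that a $\cD$-geometric Koszul-Tate resolution of $\zf:\cA\to\cB$ in the sense of Definition \ref{KTRDef} is nothing but an $\cA$-semi-free resolution of $\zf$ in the sense of \cite{BD}: a filtration $\cC_0\subset\cC_1\subset\cdots$ of $\cC$ by differential graded $\cD_X$-subalgebras together with a $\cD_X$-algebra morphism $\cA\to\cC_0$ is equivalent to a filtration of $\cC$ by differential graded $\cA[\cD_X]$-subalgebras, and a quasi-isomorphism in differential graded $\cA[\cD_X]$-modules is precisely an $\cA$-linear quasi-isomorphism in differential graded $\cD_X$-modules. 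So it is enough to exhibit one such $\cC$.

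First I would set $\cC_{-1}:=\cA$ and choose a graded locally projective $\cO_X$-quasi-coherent $\cD_X$-module $V_0$ with a surjective differential graded $\cD_X$-algebra morphism $\psi_0:\cC_0:=\cA\0\cS V_0\to\cB$ extending $\zf$ — concretely, $V_0$ surjecting, as a $\cD_X$-module, onto a system of $\cD_X$-algebra generators of $\cB$ over $\cA$, with $d_{\cC_0}$ inherited from $\cA$ so that $V_0$ lands in cycles. Then, inductively, given $\psi_k:\cC_k\to\cB$ with $\cC_k$ carrying a filtration $\cC_{-1}\subset\cdots\subset\cC_k$ with $\cC_j\simeq\cC_{j-1}\0\cS V_j$ and $d_{\cC_j}V_j\subset\cC_{j-1}$, and with $H_i(\psi_k)$ bijective for $i<k$, I would adjoin a fresh layer: choose a graded locally projective quasi-coherent $\cD_X$-module $V_{k+1}$ in degree $k+1$ together with a $\cD_X$-linear map $V_{k+1}\to\cC_k$ taking values in degree-$k$ cycles whose classes generate $\ker H_k(\psi_k)$ (enlarging $V_{k+1}$, if one also wants $\psi$ surjective, by a zero-differential summand mapping under $\psi$ onto representatives of $H_{k+1}(\cB)$); put $\cC_{k+1}:=\cC_k\0\cS V_{k+1}$ with $d_{\cC_{k+1}}$ the unique degree $-1$ $\cD_X$-algebra derivation extending $d_{\cC_k}$ and restricting on $V_{k+1}$ to the chosen map — this is legitimate because $\cC_{k+1}$ is $\cC_k$-free on $V_{k+1}$, cf.\ Remark \ref{ExpSulTyp} — and extend $\psi_k$ to $\psi_{k+1}$ using that the images in $\cB$ of those cycles are boundaries. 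One checks $H_i(\psi_{k+1})$ is bijective for $i\le k$. Passing to $\cC:=\colim_k\cC_k$ in $\tt DG\cD A$, the tower $(\cC_k)_{k\ge -1}$ exhibits $\cC$ as of Sullivan type, and since homology commutes with this filtered colimit the induced $\psi:\cC\to\cB$ is an $\cA$-linear quasi-isomorphism in differential graded $\cD_X$-modules — a $\cD$-geometric Koszul-Tate resolution of $\zf$.

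The step I expect to be the genuine obstacle is the one that is vacuous in the affine case and that forces the detour through sheaves: over an arbitrary smooth scheme $X$ one cannot pick the layers $V_k$ by selecting global sections, so at each stage one must know that the quasi-coherent $\cD_X$-module of cycles to be killed is the target of a surjection from a locally projective $\cO_X$-quasi-coherent $\cD_X$-module — say of the form $\cD_X\0_{\cO_X}\cM$ for $\cM$ a suitable quasi-coherent $\cO_X$-module — and one must check that the extended derivations and the extended morphisms $\psi_{k+1}$ are honest global $\cD_X$-linear data, and that quasi-coherence, the Sullivan-type filtration, and the homology bookkeeping all survive the colimit. This is precisely what Beilinson and Drinfeld carry out in their treatment of semi-free differential graded $\cD_X$-algebras \cite{BD}, whose existence theorem we invoke; the remaining verifications are routine.

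Finally, for the particular case, Proposition \ref{VBFunAlg} gives $\cJ^\infty(\cO^E_X)\in{\tt qcCAlg}(\cD_X)$, i.e.\ it is a $\cD_X$-algebra, so the general statement applies with $\cA=\cJ^\infty(\cO^E_X)$; in particular, taking $\cB=\cA/\cI=\cQ$ for a $\cD$-ideal $\cI\subset\cA$ (a partial differential equation as in Subsection \ref{KTRHT2}) produces a $\cD$-geometric Koszul-Tate resolution of the $\cD$-algebra of shell functions.
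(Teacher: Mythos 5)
Your proposal is correct and takes essentially the same route as the paper: the paper's entire argument consists of the observation that a $\cD$-geometric Koszul-Tate resolution of $\zf:\cA\to\cB$ is the same thing as an $\cA$-semi-free resolution in the sense of \cite{BD}, followed by a citation of the existence result there. Your explicit inductive adjunction of generator layers $V_k$ and your remark that the only genuine issue over a non-affine $X$ is producing locally projective quasi-coherent $\cD_X$-modules surjecting onto the cycles to be killed are a faithful unpacking of what \cite{BD} does, but they go beyond what the paper itself writes down.
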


\section{Comparison theorems}

In the following, we use the acronym {\small KTR} for `Koszul-Tate resolution'. Our goal is to show that all the {\small KTR}-s that we considered so far are $\cD$-geometric {\small KTR}-s, as well as to compare several {\small KTR}-s.

\subsection{Algebraic KTR and $\cD$-geometric KTR}

Tate's {\small KTR} \cite[Theorem 1]{Tate}, which we described briefly in the proof of Theorem \ref{TateTheo1}, is purely algebraic, there is no underlying space $X$, and there are no differential operators $\cD=\cD(X)$. Of course, one could consider the special situation where the Noetherian commutative unital ring $R$ is an algebra over a commutative unital algebra $\cO$ over some field, define linear differential operators $\cD$ on $\cO$ algebraically (the algebraic approach to differential operators is well-known, see, e.g., \cite{GKoP}), and compare Tate's resolution -- in this case -- with the $\cD$-geometric {\small KTR}. We see however no advantage in running through the technicalities of the geometrization of Tate's setting, and prefer to just compare the structures of the two resolutions.\medskip
 
A moment of reflection allows to see that the structure of the $\cD$-geometric {\small KTR} is exactly the same as that of Tate's resolution (ignore $\cD$ and take $\cA=R$).

\begin{rem}{\em $\cD$-geometric {\small KTR} can be traced back to minimal models in Homotopy Theory \cite{Halperin}. Let us start with a short historical note. Since the categories of topological spaces and simplicial sets have equivalent homotopy categories, simplicial sets are purely combinatorial models for classical Homotopy Theory. Kan constructed in 1958 algebro-combinatorial models: simplicial groups. In 1969, Quillen proved that the homotopy categories of simply-connected rational topological spaces and of connected differential graded Lie $\Q$-algebras are equivalent. Similarly, in 1977, Sullivan showed that there exists a categorical equivalence between the homotopy categories of simply-connected rational topological spaces with finite Betti numbers and of differential graded commutative $\Q$-algebras (category $\tt DG\Q A$) $(A^\bullet,d)$, whose cohomology spaces satisfy $H^0(A^\bullet,d)=\Q$, $H^1(A^\bullet,d)=0$, and $H^n(A^\bullet,d)$ is finite-dimensional for any $n$. This correspondence became really efficient due to the introduction of relative Sullivan minimal models of $\tt DG\Q A$-morphisms -- which are specific relative Sullivan $\Q$-algebras -- . Such models are (nowadays) obtained from the application of the small object argument to a most natural cofibrantly generated model structure on $\tt DG\Q A$. Hence, the cofibrant replacement {\small KTR}, which is a relative Sullivan minimal model, and its generalization, the $\cD$-geometric {\small KTR}, have no apparent link with Tate's {\small KTR} and with the {\small KTR}-s in Mathematical Physics and Cohomological Analysis, which are based on \cite{Tate}. Indeed, Tate's paper is a work in Homological Algebra and it originates from the attempt to replace the Koszul resolution of a regular sequence by a resolution that is valid even when the sequence is not regular. The analogy between these two types of {\small KTR}-s, the Tate type and the Sullivan type, might thus seem astonishing. However, both, Tate and Sullivan (and his successors), just looked for a good `resolution' of a commutative ring, and they used (in our opinion independently) the same `naive' technique -- the addition of generators to kill cycles or obstructions to isomorphisms in homology -- . This justifies our decision to refer to relative Sullivan minimal models -- minimal Koszul-Sullivan extensions in \cite{Halperin} -- as Koszul-Tate resolutions.}\end{rem}

It is now clear that the {\small KTR}-s in Algebra, Mathematical Physics, Cohomological Analysis, Homotopy Theory, and $\cD$-Geometry, have all roughly the same structure. In some areas specific assumptions reduce more or less strongly the size of the corresponding {\small KTR}. The difficulty is to switch between the different fields and respective languages (to establish a kind of dictionary) and to prove {\it precise} comparison results, such as, for instance, the result that, except for Tate's {\small KTR}, all the others are {\it rigorously} $\cD$-geometric ones.

\subsection{Cofibrant replacement KTR seen as $\cD$-geometric KTR}

\begin{prop} The cofibrant replacement Koszul-Tate resolution of a $\tt DG\cD A$-map $\zf:\cJ\to\cQ$, $\cJ\in\tt\cD A$, is a $\cD$-geometric Koszul-Tate resolution of $\zf\,.$\end{prop}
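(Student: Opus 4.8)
The plan is to check, condition by condition, that the trivial fibration $\psi\colon\cJ\0\cS V\stackrel{\sim}{\twoheadrightarrow}\cQ$ of the factorization (\ref{CTFJQ}) — whose companion cofibration $\cJ\rightarrowtail\cJ\0\cS V$ is, by Definition \ref{KTRDG}, the cofibrant replacement Koszul-Tate resolution of $\zf\colon\cJ\to\cQ$ — satisfies the three demands of Definition \ref{KTRDef} with $\cA=\cJ$ and $\cB=\cQ$: that it is a morphism of differential graded $\cJ[\cD]$-algebras, that it is a quasi-isomorphism in $\tt DG\cJ[\cD]M$, and that its source $\cC:=\cJ\0\cS V$ is of Sullivan type over $\cJ$.

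The first two conditions come out of unwinding Subsection \ref{KTRHT2}. Since (\ref{CTFJQ}) factors the canonical $\tt DG\cD A$-morphism $\cJ\to\cQ$ inside the coslice category $\cJ\downarrow\tt DG\cD A$, and since the equivalence (\ref{algebras}) identifies $\cJ\downarrow\tt DG\cD A$ with $\tt DG\cJ[\cD]A=\tt CMon(DG\cJ[\cD]M)$, the map $\psi$ is a morphism of differential graded $\cJ[\cD]$-algebras, hence of differential graded $\cJ[\cD]$-modules. Being the trivial-fibration part of a `Cofibration -- Trivial Fibration' factorization in $\tt DG\cD A$, $\psi$ is a weak equivalence there, i.e.\ an isomorphism in homology by Theorem \ref{FinGenModDGDA}; since the homology of a differential graded $\cJ[\cD]$-module is that of its underlying complex of $\cD$-modules, Remark \ref{ExpSulTyp} says exactly that $\psi$ is then $\cJ[\cD]$-linear and a homology isomorphism, hence a quasi-isomorphism in $\tt DG\cJ[\cD]M$.

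The substance is the Sullivan-type structure on $\cC$. By Theorem \ref{FinGenModDGDA} and \cite[Theorem 28]{BPP4}, $\cJ\rightarrowtail\cJ\0\cS V$ is a minimal relative Sullivan $\cD$-algebra, so $V=\bigoplus_{\za\in J}\cD\cdot v_\za$ is free as a graded $\cD$-module, and, because $\cQ$ is concentrated in degree $0$, all the $v_\za$ sit in strictly positive degree. Regrouping the generators by degree, put $W_k:=\bigoplus_{\deg v_\za=k+1}\cD\cdot v_\za$ for $k\ge0$, set $\cC_{-1}:=\cJ$ and $\cC_k:=\cJ\0\cS(W_0\oplus\cdots\oplus W_k)$, so that $\cC_k\simeq\cC_{k-1}\0\cS W_k$ as graded $\cD$-algebras, $W_k$ is a free — hence locally projective — graded $\cD$-submodule of $\cC_k$ (via (\ref{EquivCat}) a locally projective quasi-coherent $\cD_X$-submodule, $\cD_X$ being free over itself), and $\bigcup_k\cC_k=\cC$. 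One must check that each $\cC_k$ is a differential graded $\cD$-subalgebra with $d_{\cC_k}W_k\subset\cC_{k-1}$; this holds because $d$ drops degree by one, $W_k$ lives in degree $k+1$, and every homogeneous element of $\cJ\0\cS V$ of degree $k$ involves only generators of degree $\le k$ (generators are positively graded and $\cJ$ sits in degree $0$), so $d_{\cC_k}W_k\subset\cJ\0\cS(W_0\oplus\cdots\oplus W_{k-1})=\cC_{k-1}$, whereupon stability of $\cC_k$ under $d$ follows from the Leibniz rule. Together with the structure morphism $\cJ\hookrightarrow\cC_0$, $j\mapsto j\0 1_\cO$, this is precisely the data exhibiting $\cC$ as being of Sullivan type over $\cJ$, which is the last of the three conditions.

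I do not anticipate a genuine obstacle: the argument is almost entirely a translation between the model-categorical description of \cite{BPP4} and the definitions of Section \ref{KTRDGDG}. The one point carrying content is the passage from the well-ordered family $(v_\za)_{\za\in J}$ to the $\mathbb N$-indexed Sullivan-type filtration with $d_{\cC_k}W_k\subset\cC_{k-1}$; this works precisely because $\cQ$ is concentrated in degree $0$, which forces the generators into strictly positive degrees and hence the differential of a degree-$(k+1)$ generator into the subalgebra generated by the generators of degree $\le k$. A second, purely bookkeeping, point is matching the affine-case object $\cC\in\tt DG\cD A$ with the scheme-theoretic Definition \ref{KTRDef}, which is handled by the equivalence (\ref{EquivCat}) and the remark that free $\cD$-modules correspond to locally projective quasi-coherent $\cD_X$-modules.
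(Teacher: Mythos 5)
Your overall strategy --- translating the explicit factorization of \cite[Theorem 28]{BPP4} into the three conditions of Definition \ref{KTRDef} --- is the same as the paper's, and your treatment of $\cJ[\cD]$-linearity and of the quasi-isomorphism condition (via the coslice identification (\ref{algebras}), where the paper instead computes directly $q(\iota\triangleleft T)=\iota\triangleleft q(T)$ from $q(\iota\0 1_\cO)=\zf(\iota)$) is fine. The gap is in the Sullivan-type filtration. You assert that, $\cQ$ being concentrated in degree $0$, all generators $v_\za$ of $V$ sit in strictly positive degree. For the specific functorial factorization of \cite[Theorem 28]{BPP4} that Definition \ref{KTRDG} invokes, this is false: as the paper records explicitly when comparing with the physicists' resolution, the homogeneous basis of ${\cal V}$ contains a degree-$0$ generator $\mbi_f$ for \emph{every} element $f$ of the target (the construction adds them to force surjectivity in degree $0$ whether or not the map is already surjective there, and the listed generator degrees are $0,\,n+1,\,n+1,\ldots$). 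With degree-$0$ generators present, your filtration $\cC_k=\cJ\0\cS(W_0\oplus\cdots\oplus W_k)$ with $W_k=\bigoplus_{\deg v_\za=k+1}\cD\cdot v_\za$ breaks down at two points: the union $\bigcup_k\cC_k$ omits the degree-$0$ generators altogether, so it is not all of $\cC$; and the required inclusion $d_{\cC_0}W_0\subset\cC_{-1}=\cJ$ fails, since the differential of a degree-$1$ generator is a degree-$0$ element of $\cJ\0\cS V$, which lies in $\cJ\0\cS\bigl(\bigoplus_{\deg v_\za=0}\cD\cdot v_\za\bigr)$ but not in $\cJ$ in general.

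The repair is minor: either shift your indexing so that the degree-$0$ generators form a separate initial stage (your degree-counting observation --- a homogeneous element of degree $k$ involves only generators of degree $\le k$ --- survives the presence of degree-$0$ generators), or, as the paper does, filter by the stages of the construction itself. Section 9 of \cite{BPP4} produces $\cJ\0\cS V$ as a union $R_0\subset R_1\subset\ldots$ with $R_k=R_{k-1}\0\cS G_k$, and because each stage is built by means of Lemma \ref{LemRSA}, the condition $\zd_kG_k\subset R_{k-1}$ holds by construction, with no degree bookkeeping needed. As written, however, your proof of the Sullivan-type condition rests on a false premise about the generators of $V$.
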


Since the notion of $\cD$-geometric resolution is a generalization of the notion of cofibrant replacement resolution to the case of an arbitrary smooth scheme, this proposition is rather obvious. Here is its precise proof.

\begin{proof} Let $\cJ\0\cS V$ be the cofibrant replacement resolution of a $\tt DG\cD A$-map $\zf:\cJ\to\cQ$, $\cJ\in\tt \cD A$. Since the underlying $X$ is a smooth affine variety, we replace the sheaves in Section \ref{KTRDGDG} by their total sections. The construction in Section 9 of \cite{BPP4} -- which leads to Theorem 28 of \cite{BPP4} -- directly implies that the minimal relative Sullivan $\cD$-algebra $\cJ\to \cJ\0\cS V$ is of Sullivan type. Indeed, $R:=\cJ\0\cS V$ is obtained as the union of a sequence $R_0\subset R_1\subset\ldots$ of differential graded $\cD$-algebras, where $R_k$ ($k\ge 0$) is defined by $R_k=R_{k-1}\0\cS G_k$ ($R_{-1}=\cJ$) and where $G_k$ is a free non-negatively graded $\cD$-module. Since the differential graded $\cD$-algebra structure on $R_{k-1}\0\cS G_k$ is obtained by means of Lemma \ref{LemRSA} in Subsection \ref{RSDA}, it is clear that the differential $\zd_k$ of $R_k$ satisfies $\zd_kG_k\subset R_{k-1}$. It now suffices to check that the $\tt DG\cD A$-trivial-fibration $q:\cJ\0\cS V\stackrel{\sim}{\twoheadrightarrow}\cQ$, which is also obtained by an iterated application of Lemma \ref{LemRSA}, is a $\tt DG\cJ[\cD]A$-map, i.e., that its source and target are objects in the latter category and that $q$ is $\cJ$-linear. In view of Example \ref{DGJDAEx}, the $\tt DG\cD A$-morphisms $j:\cJ\ni \iota \mapsto \iota\0 1_\cO\in \cJ\0\cS V$ and $\zf:\cJ\ni \iota\mapsto [\iota]\in \cQ$ endow the two target algebras $\cJ\0\cS V$ (with multiplication $\diamond$) and $\cQ$ (with multiplication $\ast$) with natural ${\tt DG} \cJ[\cD] \tt A$-structures $$\iota\triangleleft T=(\iota\0 1_\cO)\diamond T\quad \text{and}\quad \iota\triangleleft Q=[\iota]\ast Q\;.$$ As for the $\cJ$-linearity of $q$, we have $$q(\iota\triangleleft T)=q((\iota\0 1_\cO)\diamond T)=q(\iota\0 1_\cO)\ast q(T)=\zf(\iota)\ast q(T)=[\iota]\ast q(T)=\iota\triangleleft q(T)\;,$$ as, by construction, $q(\iota\0 1_\cO)=\zf(\iota)$. \end{proof}

\subsection{Change of perspective}\label{CoP}

Depending on the author(s), the concept of $\cD_X$-module is considered over a base space $X$ that is a {\it finite-dimensional smooth manifold} \cite{Cos} or a finite-dimensional complex manifold \cite{KS}, a smooth algebraic variety \cite{HTT} or a smooth scheme \cite{BD} over a fixed base field of characteristic zero.\medskip

In \cite{BPP4}, our base space is a smooth affine algebraic variety $X$. This enables us to replace sheaves by their {\it total sections} (which are much easier to handle) -- e.g., we substitute $\tt DG\cD A$, with $\cD\!=\!\cD_X(X)$, to ${\tt DG_+qcCAlg}(\cD_X)$. However, all the results that we obtain in \cite{BPP4} after the passage to total sections, are also valid for other underlying spaces $X$. Indeed, the only instance (after the passage), where we still use the nature of $X$, is the result that the $\cO$-module, $\cO\!=\!\cO_X(X)$, of linear differential operators $\cD\!=\!\cD_X(X)$ over a smooth affine algebraic variety $X$ is {\it flat} (and even projective \cite{BPP3}).\medskip

For the {\small KTR}-s in Mathematical Physics and in Cohomological Analysis of {\small PDE}-s, the space $X$ is an {\it $n$-dimensional smooth manifold}, and even an open subset $X\subset\R^n$, so that $\cD=\cD(X)$ is a free module over $\cO=\cO(X)$, hence a projective and a {\it flat} one. Moreover, the context for these {\small KTR}-s -- smooth geometry -- is usually presented in terms of {\it global sections} and morphisms between them \cite[Subsection 11.3]{BPP4}. It follows that:

\begin{rem}\label{RemTotSectSetting}{\em In the contexts of the {\small KTR}-s from Mathematical Physics and from Cohomological Analysis, total sections replace sheaves, $\cD$-modules can be used, and the results of \cite{BPP4} are valid. For instance, Lemma \ref{LemRSA} holds, the cofibrant replacement {\small KTR} makes sense, and so does the total-sections-version of the $\cD$-geometric {\small KTR}.}\end{rem}

We can thus try to show that the {\small KTR} of a regular first-order on-shell reducible gauge theory is a $\cD$-geometric {\small KTR}. The Koszul-Tate complex of such a theory, see Subsection \ref{KTRASF}, can be rewritten as $\op{KT}=\cF\otimes \cS {\frak V}$, where $\cF=\cF(\zp_\infty)$ and \be\label{DefKTCMathPhys}{\frak V}=\bigoplus_{\za,a}\R\cdot\zf^{\za*}_a\oplus\bigoplus_{\zb,\zd} \R\cdot C^{\zb*}_\zd\;,\ee and where the tensor products are over $\R$. The complex $(\op{KT},\zd_{\op{KT}})$ is thus a chain complex in the category of $\cF$-modules. \medskip

The algebra $\cF$ can be endowed with a $\cD$-module structure. Since we work in fixed coordinates, any $D\in\cD$ uniquely reads $D=\sum_{|\za|\le k}D_\za(x)\p_x^\za,$ for some integer $k\in\N$ and functions $D_\za\in\cO$. As observed in Equation (\ref{ActBaseNatActHor}) (and, maybe, partially in Equation (\ref{LDOCoeffFun})), the action of $D$ on $F\in\cF$ should be defined by $$D\cdot F=\cC(D)F=\sum_{|\za|\le k}D_\za(x)D_x^\za F\;,$$ where $\cC$ denotes the horizontal lift. It is easily seen that this definition actually provides a $\cD$-module structure, since, for any composable linear differential operators $\zD_1\in\op{Diff}(\zh_1,\zh_2)$ and $\zD_2\in\op{Diff}(\zh_2,\zh_3)$ between vector bundles $\zh_i$ over $X$, the horizontal lifts $$\cC(\zD_1)\in\cC\op{Diff}(\zp^*_\infty(\zh_1),\zp^*_\infty(\zh_2))\quad\text{and}\quad \cC(\zD_2)\in\cC\op{Diff}(\zp^*_\infty(\zh_2),\zp^*_\infty(\zh_3))$$ satisfy $$\cC(\zD_2\circ\zD_1)=\cC(\zD_2)\circ\cC(\zD_1)\;.$$ This result holds \cite{KV} for any vector bundles $\zp:E\to X$ and $\zh_i:F_i\to X$. For the trivial bundle $\zp:\R^n\times\R^r\to\R^n$ that we fixed at the beginning of Subsection \ref{RIGT} and for the trivial line bundle $\zh_i:\R^n\times\R\to \R^n$, we get $\op{Diff}(\zh_i,\zh_j)=\cD$ and $\cC\op{Diff}(\zp^*_\infty(\zh_i),\zp^*_\infty(\zh_j))=\cC\op{Diff}(\cF,\cF)$, i.e., we get the situation that we considered above.\medskip

It is clear that this $\cD$-module structure of $\cF$ and the $\cO$-algebra structure of $\cF$ are compatible in the sense that vector fields act as derivations. Hence, $\cF$ is a $\cD$-algebra. Moreover, the ideal $I(\zS)$ of those functions of $\cF$ that vanish on $\zS:D_x^\za\zd_{u^a}{\cal L}=0$, is an $\cO$-ideal and a $\cD$-submodule, hence a $\cD$-ideal. As for the submodule structure, note that if $F\in I(\zS)$ and $D\in\cD$, one has $$(D\cdot F)|_\zS=(\cC(D)F)|_\zS=\cC(D)_\zS\, F|_\zS=0\;,$$ see Corollary \ref{RestHor}, Subsection \ref{Vino}. Finally, the quotient $\Ci(\zS)=\cF/I(\zS)$ is a $\cD$-algebra for the action $D\cdot[F]=[D\cdot F]$ and the multiplication $[F][G]=[FG]$. It follows that the passage \be\label{Ingredients}\zf:\cF\ni F\mapsto [F]\in \Ci(\zS)\ee to the quotient is a $\cD$-algebra map. Example \ref{DGJDAEx} shows that the action $F\triangleleft [G]:= [F][G]=[FG]$ endows $\Ci(\zS)$ is an $\cF[\cD]$-algebra structure.

\begin{rem}{\em In view of Equation (\ref{Ingredients}) the algebra $\Ci(\zS)$ fits into the framework of Definition \ref{KTRDef} of a $\cD$-geometric {\small KTR}, as well as into the framework of Definition \ref{KTRDG} of a cofibrant replacement {\small KTR}.}\end{rem}

In Subsection \ref{ConsJetFunc}, we observed that the $\cD$-action on the fiber coordinates $x^{(k)}$ of an infinite jet space with base coordinate $t$ satisfies the equations $$\p_t\cdot x^{(k)}=D_t\,x^{(k)}=x^{(k+1)}\;.$$ In Subsection \ref{KTRASF}, we viewed the degree 1 generators $\zf^{\za*}_a$ (resp., the degree 2 generators $C^{\zb*}_\zd$) as fiber coordinates of an infinite horizontal jet space with base coordinates $(x^i,u_\za^a)$ and we noticed that this interpretation comes along with the replacement of the total derivatives $D_{x^i}$ by the extended total derivatives $\bar D_{x^i}\,$. It is therefore natural to define the $\cD$-action on the fiber coordinates $\zf^{\za*}_a$ (resp., $C^{\zb*}_\zd$) by $$\p_{x^i}\cdot \zf^{\za*}_a:=\bar D_{x^i}\zf^{\za*}_a=\zf^{i\za*}_a\;$$ (resp., by $$\p_{x^i}\cdot C^{\zb*}_\zd:=\bar D_{x^i}C^{\zb*}_\zd=C^{i\zb*}_\zd)\;.$$ In particular, we obtain \be\label{Fundamental} \p_x^\za\cdot\zf_a^*=\bar D_x^\za\zf^*_a=\zf^{\za*}_a\quad(\text{resp.,}\;\p_x^\zb\cdot C_\zd^*=\bar D_x^\zb C^*_\zd=C^{\zb*}_\zd)\;.\ee

Eventually, it is natural to replace the underlying module $\frak V$ of Equation (\ref{DefKTCMathPhys}) by the free non-negatively graded $\cD$-module \be\label{CKTChains0}V=\bigoplus_{a}\cD\cdot\zf^{*}_a\oplus\bigoplus_{\zd} \cD\cdot C^{*}_\zd\;\ee over the components of the antifields $\zf^*$ and $C^*$. The $\cF$-module of Koszul-Tate chains then reads \be\label{CKTChains}\op{KT}=\cF\0_\R\cS_\R{\frak V}=\cF\0_\cO\cS_\cO V\;,\ee where the {\small RHS} is also a graded $\cD$-algebra.\medskip

Any element $c$ of this graded $\cD$-algebra reads non-uniquely as a finite sum $$c=\sum F\, (D^a\cdot\zf^*_a) \ldots (\zD^\zd\cdot C^*_\zd)\;,$$ where $F\in\cF$ and $D^a,\zD^\zd\in\cD$, and where we omitted the tensor products. The Koszul-Tate differential $\zd_{\op{KT}}$, which is well-defined on $\op{KT}$, acts as a graded derivation and {\it is thus completely known, if it is known on the $D^a\cdot\zf^*_a$ and the $\zD^\zd\cdot C^*_\zd\,$}. For any $D=D_\za\p_x^\za$, we have, in view of the definitions given above, \be\label{KTDLin1}\zd_{\op{KT}}(D\cdot\zf^*_a)=D_\za\,\zd_{\op{KT}}(\p_x^\za\cdot \zf^*_a)=D_\za\,\zd_{\op{KT}}(\zf^{\za*}_a)=D_\za D_x^\za\zd_{u^a}{\cal L}=D\cdot(\zd_{u^a}{\cal L})=D\cdot \zd_{\op{KT}}(\zf^*_a)\;.\ee Similarly, we get $$\zd_{\op{KT}}(D\cdot C^*_\zd)=D_\za\,\zd_{\op{KT}}(\p_x^\za\cdot C^*_\zd)=D_\za\,\zd_{\op{KT}}(C^{\za*}_\zd)=D_\za \bar D_x^\za(R^a_{\zd\zb}\, \bar D_x^\zb\zf^*_a)=D_\za \bar D_x^\za(R^a_{\zd\zb}\, \zf^{\zb*}_a)\;.$$ The extended total derivative $\bar D_x^\za$ of $R^a_{\zd\zb}\, \zf^{\zb*}_a$ is a sum of terms of the type $$D_x^{\za_1}R^a_{\zd\zb}\; \bar D_x^{\za_2}\zf^{\zb*}_a=(\p_x^{\za_1}\cdot R^a_{\zd\zb})\; (\p_x^{\za_2}\cdot\zf^{\zb*}_a)\;,$$ so that, in view of the definition of the $\cD$-action on the tensor product of $\cF$ and $\cS_\cO V$, we find $$\bar D_x^\za(R^a_{\zd\zb}\, \zf^{\zb*}_a)=\p_x^\za\cdot(R^a_{\zd\zb}\, \zf^{\zb*}_a)\;.$$ Eventually, \be\label{KTDLin2}\zd_{\op{KT}}(D\cdot C^*_\zd)=D\cdot\zd_{\op{KT}}(C^*_\zd)\;.\ee

\begin{rem}{\em The equations (\ref{CKTChains}), (\ref{KTDLin1}), and (\ref{KTDLin2}) show that $\op{KT}$ is a graded $\cD$-algebra and that $(\op{KT},\zd_{\op{KT}})$ is a chain complex in the category of $\cD$-modules.}\end{rem}

\subsection{KTR of a reducible theory seen as $\cD$-geometric KTR}\label{CKTDKT}

In the following, we apply Lemma \ref{LemRSA} from Subsection \ref{RSDA}, which allows to construct non-split relative Sullivan $\cD$-algebras ({\small RS$\cD$A}-s), as well as ${\tt DG\cD A}$-morphisms from such a Sullivan algebra to another differential graded $\cD$-algebra.\medskip

Let $V_1:=\bigoplus_a \cD\cdot\zf^*_a\,$. To endow the graded $\cD$-algebra \be\cC_1:=\cF\0_\cO\cS_\cO V_1\label{C0}\ee with a differential graded $\cD$-algebra structure $d$, we set, \be\label{dLow1}d\zf^*_a:=\zd_{u^a}{\cal L}\in\cF\;,\ee extend $d$ to $V_1$ by $\cD$-linearity, and equip $\cC_1$ with the differential $d$ given by $$d(F (D\cdot \zf^*_a)\, (\zD\cdot \zf^*_b)):=(F\,d(D\cdot\zf^*_a))(\zD\cdot\zf^*_b)-(F\,d(\zD\cdot\zf^*_b))(D\cdot\zf^*_a)\;,$$ where we omitted the tensor products and considered, to increase clarity, an element of degree 2. Then the natural ${\tt DG\cD A}$-morphism $\imath:(\cF,0)\ni F\mapsto F\0 1_\cO\in(\cC_1,d)$ is a {\small RS$\cD$A}. Since $\zd_{\op{KT}}$ is also a graded derivation that is $\cD$-linear (Equation (\ref{KTDLin1})) and coincides with $d$ on the generators $\zf^*_a$, the {\small RS$\cD$A} is actually the ${\tt DG\cD A}$-morphism \be\label{imath}\imath:(\cF,0)\ni F\mapsto F\0 1_\cO\in(\cC_1,\zd_{\op{KT}})\;.\ee

Consider now the $\cD$-algebra $\Ci(\zS)=\cF/I(\zS)$ and the ${\tt \cD A}$-morphism $\zf:\cF\to\Ci(\zS)$ (Equation (\ref{Ingredients})). To define a ${\tt DG\cD A}$-morphism \be\label{q0}q_1:\cC_1\to \Ci(\zS)\;,\ee it suffices to set \be\label{Cond} q_1(\zf^*_a)=0\in (\Ci(\zS))_1\cap 0^{-1}(\zf(d\zf^*_a))\;,\ee to extend $q_1$ by $\cD$-linearity to $V_1$, and to define $q_1$ in degree 0 by $q_1(F)=\zf(F)=[F]$ and in degree $\ge 1$ by $q_1=0$. As for Condition (\ref{Cond}), note that $\zf(d\zf^*_a)=[\zd_{u^a}{\cal L}]=0$, in view of the definition of $\zS$.\medskip

An anew application of Lemma \ref{LemRSA}, where the role that was played above by $(\cF,0)$ (resp., $V_1$) is now assumed by $(\cC_1,\zd_{\op{KT}})$ (resp., $V_2:=\bigoplus_\zd \cD\cdot C^*_\zd$), endows the graded $\cD$-algebra \be\label{C1}\cC_2:=\cC_1\0_\cO\cS_\cO V_2\ee with a differential graded $\cD$-algebra structure $\dd$ that, similar to $d$ above, is fully defined by \be\label{dLow2}\dd C^*_\zd= R^a_{\zd\za} (\p_x^\za\cdot\zf^*_a)\in(\cC_1)_1\cap \zd^{-1}_{\op{KT}}\{0\}\;.\ee Indeed, in view of Equation (\ref{NI}), we have $$\zd_{\op{KT}}(R^a_{\zd\za} (\p_x^\za\cdot\zf^*_a))=R^a_{\zd\za}\,D_x^\za\zd_{u^a}{\cal L}\equiv 0\;.$$ To compare the differential $\dd$ with the differential $\zd_{\op{KT}}$, note that $\dd$ is extended to $V_2$ by $\cD$-linearity and that its value on $c=F (D\cdot \zf^*_a)\,(\zD\cdot C^*_\zd)\,(\nabla\cdot C^*_\ze)$, for instance, is \begin{eqnarray*}\dd c=&\zd_{\op{KT}}(F (D\cdot\zf^*_a))\;(\zD\cdot C^*_\zd)\,(\nabla\cdot C^*_\ze)\\&-(F (D\cdot \zf^*_a)\,\dd(\zD\cdot C^*_\zd))\;(\nabla\cdot C^*_\ze)\\&-\left(F (D\cdot \zf^*_a)\,\dd(\nabla\cdot C^*_\ze)\right)\;(\zD\cdot C^*_\zd)\;.\end{eqnarray*} As $\zd_{\op{KT}}$ is a graded derivation that is $\cD$-linear (Equation (\ref{KTDLin2})) and coincides with $\dd$ on the generators $C^*_\zd$, we get $\dd=\zd_{\op{KT}}$ on $\cC_2$. Hence, the ${\tt DG\cD A}$-morphism \be\label{jmath}\jmath:(\cC_1,\zd_{\op{KT}})\ni c\mapsto c\0 1_\cO\in(\cC_2,\zd_{\op{KT}})\ee is a relative Sullivan $\cD$-algebra.\medskip

Start now from the $\tt DG\cD A$-morphism $q_1$, and define a $\tt DG\cD A$-morphism \be\label{q1}q_2:\cC_2\to\Ci(\zS)\ee by setting $$q_2(C^*_\zd)=0\in (\Ci(\zS))_2\cap 0^{-1}(q_1(\zd_{\op{KT}}\,C^*_\zd))\;,$$ extending $q_2$ by $\cD$-linearity to $V_2$, and by defining $q_2$ in degree 0 by $q_2(F)=[F]$ and in degree $\ge 1$ by $q_2=0$. \medskip

Since $V=V_1\oplus V_2$ as graded $\cD$-module, the graded $\cD$-algebras $\cS_\cO V=\cS_\cO(V_1\oplus V_2)$ and $\cS_\cO V_1\0_\cO\cS_\cO V_2$ are isomorphic. Hence, the same holds for the graded $\cD$-algebras $$\op{KT}=\cF\0_\cO\cS_\cO V\quad\text{and}\quad \cC_2=\cF\0_\cO\cS_\cO V_1\0_\cO \cS_\cO V_2\;.$$ It follows that $\jmath\circ\imath:(\cF,0)\to (\op{KT},\zd_{\op{KT}})$ is a $\tt DG\cD A$-morphism and thus allows to endow $(\op{KT},\zd_{\op{KT}})$ with a {\small DG$\cF[\cD]$A}-structure -- see Example \ref{DGJDAEx}.

\begin{theo}\label{ClassDGeom} The Koszul-Tate resolution of the function algebra $\Ci(\zS)$ of the infinite prolongation manifold $\zS$ of the Euler-Lagrange equations of a regular first-order on-shell reducible gauge theory is a $\cD$-geometric Koszul-Tate resolution $(\,$in the smooth setting -- see beginning of Subsection \ref{CoP}$\,)$ of the canonical $\cD$-algebra map $\cF\to\Ci(\zS)$, where $\cF$ is the function algebra of the infinite jet space in which $\zS$ is located and where $\Ci(\zS)$ is the quotient of $\cF$ by the ideal of those functions of $\cF$ that vanish on $\zS$.\end{theo}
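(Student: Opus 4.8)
The plan is to verify, point by point, that the construction assembled in Subsection \ref{CKTDKT} exhibits $(\op{KT},\zd_{\op{KT}})$ together with a quasi-isomorphism to $\Ci(\zS)$ as an instance of Definition \ref{KTRDef}, working in the total-sections setting justified in Remark \ref{RemTotSectSetting}. Concretely, I would take $\cA=\cF$ (the $\cD$-algebra of functions on the infinite jet space, with the $\cD$-structure described in Subsection \ref{CoP}), $\cB=\Ci(\zS)$ with the canonical map $\zf:\cF\to\Ci(\zS)$, and $\cC=\op{KT}$ with the filtration $\cC_{-1}:=\cF\subset\cC_0:=\cC_1\subset\cC_1':=\cC_2=\op{KT}$ built from the two generator modules $V_1=\bigoplus_a\cD\cdot\zf^*_a$ and $V_2=\bigoplus_\zd\cD\cdot C^*_\zd$. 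First I would record that, by the isomorphism of graded $\cD$-algebras $\op{KT}=\cF\0_\cO\cS_\cO V\cong\cF\0_\cO\cS_\cO V_1\0_\cO\cS_\cO V_2=\cC_2$, the filtration is by differential graded $\cD_X$-subalgebras with $\cC_0=\cC_{-1}\0\cS V_1$ and $\cC_1'=\cC_0\0\cS V_2$, that $V_1,V_2$ are free, hence locally projective, graded $\cD$-modules, and that the lowering conditions $d\,V_1=\zd_{\op{KT}}V_1\subset\cF=\cC_{-1}$ and $\dd\,V_2=\zd_{\op{KT}}V_2\subset\cC_0$ hold by Equations (\ref{dLow1}) and (\ref{dLow2}) — the latter uses the Noether identity (\ref{NI}) to place $\zd_{\op{KT}}C^*_\zd$ in $(\cC_1)_1\cap\zd^{-1}_{\op{KT}}\{0\}$. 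This establishes that $\cC=\op{KT}$ is of Sullivan type.

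Next I would address the resolving map. The morphisms $q_1:\cC_1\to\Ci(\zS)$ and $q_2:\cC_2\to\Ci(\zS)$ of Equations (\ref{q0}) and (\ref{q1}) are $\tt DG\cD A$-morphisms by construction (via Lemma \ref{LemRSA}); set $\psi:=q_2:\op{KT}\to\Ci(\zS)$. I would then check, exactly as in the proof of the preceding proposition on the cofibrant replacement {\small KTR}, that $\psi$ is $\cF$-linear for the $\cF[\cD]$-structures induced by $\jmath\circ\imath:(\cF,0)\to(\op{KT},\zd_{\op{KT}})$ on the source and by $\zf:\cF\to\Ci(\zS)$ on the target: for $\iota\in\cF$ and $T\in\op{KT}$ one has $\psi(\iota\triangleleft T)=\psi((\iota\0 1_\cO)\diamond T)=\psi(\iota\0 1_\cO)\ast\psi(T)=\zf(\iota)\ast\psi(T)=\iota\triangleleft\psi(T)$, since $\psi(\iota\0 1_\cO)=[\iota]=\zf(\iota)$. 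Thus $\psi$ is a morphism of differential graded $\cF[\cD]$-modules (indeed of algebras).

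The remaining — and genuinely substantive — point is that $\psi$ is a quasi-isomorphism, i.e. induces a bijection in homology, which by Remark \ref{ExpSulTyp} just means that $\psi$ is an $\cF$-linear quasi-isomorphism of differential graded $\cD_X$-modules; equivalently (since $\Ci(\zS)$ sits in degree $0$ with zero differential) that $H_0(\op{KT})=\Ci(\zS)$ via $\psi$ and $H_k(\op{KT})=0$ for $k>0$. This is precisely the homological content already recalled in Subsection \ref{KTRASF}: the $0$-cycles are $\cF$, the $0$-boundaries are $\{\sum F^a_\za D^\za_x\zd_{u^a}{\cal L}\}=I(\zS)$ (using the regularity Assumption 2 to identify $I(\zS)$ with the combinations $\sum F^{\frak a}E_{\frak a}$ and the invertible matrix $M$ to rewrite any such combination as a boundary), so $H_0(\op{KT})=\cF/I(\zS)=\Ci(\zS)$ and $\psi$ realizes this identification; and the vanishing $H_k(\op{KT})=0$ for $k\ge 1$ is exactly the acyclicity theorem for the classical Koszul-Tate complex of a regular first-order on-shell reducible gauge theory, whose proof rests on the first-order reducibility assumption ({\small IA}) \cite{Bar} and is the one place where the gauge-theoretic hypotheses are used in full strength. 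I expect this acyclicity input to be the main obstacle — but it is a result we may cite rather than reprove — so the proof reduces to checking that the homology computations above are literally the same ones performed in Subsection \ref{KTRASF}, now read off the $\cD$-linear model $\op{KT}=\cF\0_\cO\cS_\cO V$ of Equation (\ref{CKTChains}). Assembling the three parts — Sullivan type, $\cF[\cD]$-linearity of $\psi$, and quasi-isomorphism — gives that $\psi:\op{KT}\to\Ci(\zS)$ is a $\cD$-geometric Koszul-Tate resolution of $\zf:\cF\to\Ci(\zS)$ in the sense of Definition \ref{KTRDef}.
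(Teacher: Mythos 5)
Your proposal is correct and follows essentially the same route as the paper's own proof: verify the Sullivan-type filtration $\cF\subset\cC_1\subset\cC_2=\op{KT}$ built from $V_1$ and $V_2$ with the lowering conditions (\ref{dLow1})--(\ref{dLow2}), check $\cF$-linearity of $q_2$ via $q(\iota\triangleleft T)=\iota\triangleleft q(T)$, and reduce the quasi-isomorphism claim to the homology computation of Subsection \ref{KTRASF} together with the cited acyclicity result of \cite{Bar} based on the first-order reducibility assumption. Your $\cF$-linearity check is in fact slightly more complete than the paper's (which only verifies it on degree-$0$ elements), but the argument is the same.
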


\begin{proof} Most of the proof is given in the preparation that precedes the theorem. For instance, it is clear from what has been said that $\op{KT}\simeq\cC_2$ admits an increasing filtration $\cC_1\subset\cC_2\subset \cC_2\subset\ldots$ by {\small DG} $\cD$-subalgebras, such that there is a {\small DG} $\cD$-algebra morphism $\cF\to\cC_1$ $(\,$we set $\cC_{0}:=\cF\,)$ and that $\cC_k$ $(\,k\ge 1\,)$ is isomorphic as {\small DG} $\cD$-algebra to $\cC_k\simeq \cC_{k-1}\0_\cO\cS_\cO V_k,$ where $V_k$ is a free graded $\cD$-submodule of $\cC_k$ such that $\zd_{\op{KT}}V_k\subset\cC_{k-1}:$ $\op{KT}$ is of Sullivan type. We already mentioned that $\op{KT}\simeq\cC_2$ and $\Ci(\zS)$ are {\small DG}$\cF[\cD]$-algebras. It now suffices to show that the ${\tt DG\cD A}$-morphism $q:=q_2:\op{KT}\to \Ci(\zS)$ is $\cF$-linear and induces an $\cF$- and $\cD$-linear bijection $q_\sharp$ of degree 0 between the graded module $H_\bullet(\op{KT})$ and the module $\Ci(\zS)$ concentrated in degree 0. First, $q$ is $\cF$-linear, as, if $F,G\in\cF$, we obtain $$F\triangleleft q(G)=F\triangleleft [G]=[FG]=q(FG)\;.$$ Hence, the induced map $q_\sharp$ has the required properties, except, maybe, bijectivity. In degree $\ge 1$, the homology $H_\bullet(\op{KT})$ vanishes, just as $\Ci(\zS)$. In degree $0$, the homology is given by $\Ci(\zS)=\cF/I(\zS)$, where $\cF$ (resp., $I(\zS)$) are the 0-cycles (resp., $0$-boundaries), and $q_\sharp[F]=q(F)=[F]$ is the identity.\end{proof}

\subsection{KTR of a reducible theory versus cofibrant replacement KTR}

Recall first that, in the setting of a {\small KTR} from Mathematical Physics, the concept of cofibrant replacement Koszul-Tate resolution makes sense. Secondly, it is clear a priori that the general functorial cofibrant replacement {\small KT} resolution $(\op{\cal KT},\zd_{\op{\cal KT}})$ is much larger than the {\small KT} resolution $(\op{KT},\zd_{\op{KT}})$, which is subject to size-reducing irreducibility (i.e., first-order reducibility) conditions and is far from being functorial.\medskip

More precisely, the {\small KT} resolution $(\op{KT},\zd_{\op{KT}})$ is the {\small DG$\cF[\cD]$A} $$\op{KT}=\cF\0_\cO\cS_\cO V\;,$$ where $V$ is the free graded $\cD$-module with homogeneous basis $$\bigcup\,\{\zf^*_a,C^*_\zd\}\;$$ (the degrees of the generators are $1,\,2$), endowed with the degree $-1$, $\cF$- and $\cD$-linear graded derivation defined by $$\zd_{\op{KT}}(\zf^*_a)=\zd_{u^a}{\cal L}\quad\text{and}\quad\zd_{\op{KT}}(C^*_\zd)=R^a_{\zd\za}\,(\p_x^\za\cdot \zf^*_a)\;.$$ The results of \cite{BPP4}, applied to the ${\tt DG\cD A}$-map $\zf:(\cF,0)\to(\Ci(\zS),0)$, show that the cofibrant replacement {\small KT} resolution $(\op{{\cal KT}},\zd_{\op{\cal KT}})$ is the {\small DG$\cF[\cD]$A} $$\op{\cal KT}=\cF\0_\cO\cS_\cO {\cal V}\;,$$ where ${\cal V}$ is the free graded $\cD$-module with homogeneous basis $$\bigcup\,\{\mbi_{f},\mbi^1_{\zs_n,0},\mbi^2_{\zs_n,0},\ldots,\mbi^k_{\zs_n,0},\ldots\}\;,$$ for all $f\in\Ci(\zS)$ and `numerous' $\zs_n$ (of degree $n\ge 0$), which are described in \cite[Theorem 28]{BPP4} and in the proof that precedes this result (the degrees of the generators are $0,\,n+1,\,n+1,\ldots, n+1,\ldots\,$). Here $\zd_{\op{\cal KT}}$ is the degree $-1$, $\cF$- and $\cD$-linear graded derivation defined by $$\zd_{\op{\cal KT}}(\mbi_{f})=0\quad\text{and}\quad\zd_{\op{\cal KT}}(\mbi^k_{\zs_n,0})=\zs_n\;.$$ When using the just mentioned description in \cite[Theorem 28]{BPP4}, one sees quite easily that the injective map $i$, defined by $$i(\zf^*_a)=\mbi^1_{(\zd_{u^a}{\cal L},\,0)}\in{\cal V}_1\quad\text{and}\quad i(C^*_\zd)=\mbi^2_{\left(R^a_{\zd\za}\left(\p_x^\za\cdot\, \mbi^1_{(\zd_{u^a}{\cal L},\,0)}\right),\;0\right)}\in{\cal V}_2\;,$$ is a $\tt DG\cF[\cD]A$-morphism $$i:(\op{KT},\zd_{\op{KT}})\to (\op{\cal KT},\zd_{\op{\cal KT}})\;.$$

\begin{prop} The Koszul-Tate resolution of the function algebra $\Ci(\zS)$ of the infinite prolongation manifold $\zS$ of the Euler-Lagrange equations of a regular first-order reducible gauge theory is a differential graded $\cF[\cD]$-subalgebra of the cofibrant replacement Koszul-Tate resolution $(\,$in the smooth setting -- see beginning of Subsection \ref{CoP}$\,)$ of the quotient $\cD$-algebra $\Ci(\zS)\,$.\end{prop}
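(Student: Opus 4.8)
The plan is to take the explicit map $i$ written down just before the statement -- sending $\zf^*_a\mapsto\mbi^1_{(\zd_{u^a}{\cal L},\,0)}$ and $C^*_\zd\mapsto\mbi^2_{(R^a_{\zd\za}(\p_x^\za\cdot\,\mbi^1_{(\zd_{u^a}{\cal L},\,0)}),\,0)}$ -- and to promote it to an isomorphism of $\op{KT}$ onto a differential graded $\cF[\cD]$-subalgebra of $\op{{\cal KT}}$. Once $i$ is seen to be an injective $\tt DG\cF[\cD]A$-morphism $i\colon(\op{KT},\zd_{\op{KT}})\to(\op{{\cal KT}},\zd_{\op{{\cal KT}}})$, the Proposition is the purely formal statement that the image of an injective morphism of differential graded $\cF[\cD]$-algebras is a differential graded $\cF[\cD]$-subalgebra of its target. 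So I would proceed in three steps: confirm that $i$ is well defined and a chain map; confirm that $i$ is injective; and read off the conclusion.

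For the first step, recall that $\op{KT}=\cF\0_\cO\cS_\cO V$ is the free graded $\cF[\cD]$-algebra on the graded $\cD$-module $V$ with basis $\bigcup\{\zf^*_a,C^*_\zd\}$, so -- exactly as in the construction of $q_1$ and $q_2$ above, using Lemma \ref{LemRSA} -- the $\cD$-linear assignment on the generators extends uniquely to a graded $\cF[\cD]$-algebra morphism $i\colon\op{KT}\to\op{{\cal KT}}$, which in degree $0$ is the structure map $F\mapsto F\0 1_\cO$. Since $\zd_{\op{KT}}$ and $\zd_{\op{{\cal KT}}}$ are $\cF$- and $\cD$-linear degree $-1$ graded derivations, the identity $\zd_{\op{{\cal KT}}}\circ i=i\circ\zd_{\op{KT}}$ need only be checked on the generators: on $\zf^*_a$ both sides equal $\zd_{u^a}{\cal L}$ (the left by the defining rule $\zd_{\op{{\cal KT}}}(\mbi^k_{\zs_n,0})=\zs_n$, the right since $i$ is the identity on $\cF$ and $\zd_{\op{KT}}(\zf^*_a)=\zd_{u^a}{\cal L}$), and on $C^*_\zd$ both sides equal $R^a_{\zd\za}(\p_x^\za\cdot\mbi^1_{(\zd_{u^a}{\cal L},\,0)})$ (the left again by the rule for $\zd_{\op{{\cal KT}}}$, the right because $i(R^a_{\zd\za}(\p_x^\za\cdot\zf^*_a))=R^a_{\zd\za}(\p_x^\za\cdot i(\zf^*_a))$ by $\cF$- and $\cD$-linearity). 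With this, $i(\op{KT})\subset\op{{\cal KT}}$ is a graded $\cF[\cD]$-subalgebra closed under $\zd_{\op{{\cal KT}}}$, since $\zd_{\op{{\cal KT}}}(i(x))=i(\zd_{\op{KT}}(x))\in i(\op{KT})$, and $i$ restricts to an isomorphism $\op{KT}\xrightarrow{\sim}i(\op{KT})$ of differential graded $\cF[\cD]$-algebras -- which is the asserted identification, provided $i$ is injective.

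The one step that deserves care is precisely injectivity, and this is where I expect the main technical obstacle to lie: it forces one to unwind the explicit basis of $\cV$ produced by the functorial `Cofibration -- Trivial Fibration' factorization of \cite[Theorem 28]{BPP4}. The point to verify is that $i|_V$ carries the homogeneous $\cD$-module basis $\bigcup\{\zf^*_a,C^*_\zd\}$ of $V$ injectively onto a subset of the homogeneous basis $\bigcup\{\mbi_f,\mbi^1_{\zs_n,0},\ldots\}$ of $\cV$: the function $\zd_{u^a}{\cal L}$ is a $0$-cycle mapping to $0$ in $\Ci(\zS)$ and hence indexes one of the degree $1$ generators $\mbi^1$, the element $R^a_{\zd\za}(\p_x^\za\cdot\,\mbi^1_{(\zd_{u^a}{\cal L},\,0)})$ is a genuine $1$-cycle (its $\zd_{\op{{\cal KT}}}$-image is the identically vanishing $R^a_{\zd\za}D_x^\za\zd_{u^a}{\cal L}$) and hence indexes one of the degree $2$ generators $\mbi^2$, and distinct indices $a$, resp.\ $\zd$, land on distinct generators. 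Granting this, $\cV\cong i(V)\oplus\cV'$ as graded $\cD$-modules, so $\cS_\cO\cV\cong\cS_\cO(i(V))\0_\cO\cS_\cO\cV'$ and $\cS_\cO(i|_V)\colon\cS_\cO V\cong\cS_\cO(i(V))\hookrightarrow\cS_\cO\cV$ is a split monomorphism of $\cO$-modules; since under the identifications $\op{KT}=\cF\0_\cO\cS_\cO V$, $\op{{\cal KT}}=\cF\0_\cO\cS_\cO\cV$ the morphism $i$ is $\id_\cF\0_\cO\cS_\cO(i|_V)$, and the additive functor $\cF\0_\cO-$ preserves split monomorphisms, $i$ is injective. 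The remaining assertions of the Proposition are then immediate from the first step.
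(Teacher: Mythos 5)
Your proposal follows exactly the paper's route: the paper's own justification consists precisely of exhibiting the map $i$ on the generators $\zf^*_a$ and $C^*_\zd$ and asserting that it is an injective ${\tt DG\cF[\cD]A}$-morphism, from which the Proposition is read off. You merely make explicit the verifications -- the chain-map property checked on generators via $\cF$- and $\cD$-linearity, and injectivity via the identification of $i|_V$ with an inclusion of a sub-basis of ${\cal V}$ -- that the paper dismisses with ``one sees quite easily''.
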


\subsection{KTR of a reducible theory versus KTR in Cohomological Analysis}

We compare the Koszul-Tate complex $(\op{KT},\zd_{\op{KT}})$ of a regular first-order on-shell reducible field theory, which is defined in coordinates, with the Koszul-Tate complex $(\op{\frak{KT}},\zd_{\op{\frak{KT}}})$ of Subsection \ref{KTRCC}, which is subject to regularity and higher-order off-shell reducibility conditions, and is -- although fixed coordinates are considered -- mostly defined in the coordinate-free language of Cohomological Analysis of {\small PDE}-s. The difficulty is to pass from one setting to the other. Let us stress that in the following $\op{KT}$ and $\frak{KT}$ refer to these two different complexes, and let us mention that this section might be easier to read after a revision of Section \ref{CCKTR} and of parts of Appendix A, Section \ref{Vino}. \medskip

In the contexts of $\op{KT}$ and $\frak{KT}$ the underlying space is an open subset $X\subset\R^n$. We thus have $\cO=\Ci(X)$ and $\cD=\cD(X)$.\medskip

The chain complex $(\frak{KT},\zd_{\frak{KT}})$ is defined from a compatibility complex $$ 0\longrightarrow\cR_1\stackrel{\zD_1}{\longrightarrow}\cR_2\stackrel{\zD_2}{\longrightarrow}\ldots\stackrel{\zD_{k-2}}{\longrightarrow}\cR_{k-1}\longrightarrow 0\;$$ made of $\cF$-modules $\cR_j:=\zG(R_j):=\zG(\zp_\infty^*(F_j))$ -- here $\zp_\infty:J^\infty E\to X$ is the infinite jet space of $\zp:E\to X$, a rank $r$ smooth vector bundle over $X$, $\cF$ is the function algebra of $J^\infty E,$ and $\zr_j:F_j\to X$ is a rank $r_j$ smooth vector bundle over $X$ -- and of horizontal differential operators $\zD_j:\cR_j\to\cR_{j+1}$ between them. The Koszul-Tate chains $\frak{KT}$ are the elements of the algebra $\,\cS_\cF\,\cC\!\op{Diff}(\cR_\bullet,\cF)$, where $$\cR_{\bullet}:=\zG(R_\bullet):=\zG(\zp^*_\infty(F_\bullet))$$ and where $\cR_\bullet$ (resp., $R_\bullet$, $F_\bullet$) is the direct sum of the $\cR_j$ (resp., $R_j$, $F_j$). Since $$\cC:\cF\0_\cO\Diff(\zG(F_\bullet),\cO)\to \cC\!\Diff(\cR_\bullet,\cF)$$ is an $\cF$-module isomorphism (Equation (\ref{LDOCoeffGen})), we get $$\op{\frak{KT}}\simeq\cS_\cF\left(\cF\0_\cO\Diff(\zG(F_\bullet),\cO)\right)\simeq \cF\0_\cO\cS_\cO\Diff(\zG(F_\bullet),\cO)\;.$$

As already mentioned, we work in fixed coordinates. The coordinates of $E$ are denoted by $(x^i,u^a)$ and those of $J^\infty E$ by $(x^i,u^a_\za)$. Similarly, we symbolize the coordinates of $F_\bullet$ by $(x^i,v^\zl(j))$ -- where $j\in\{1,\ldots,k-1\}$ and $\zl\in\{1,\ldots,r_j\}$ -- , those of $R_\bullet$ by $(x^i,u^a_\za,v^\zl(j))$, and those of $\bar J^\infty(R_\bullet)$ by $(x^i,u^a_\za,v_\zb^\zl(j))$. Hence, a linear differential operator ${\op{\sf D}}\in\Diff(\zG(F_\bullet),\cO)$, when applied to a section $v\in\zG(F_\bullet)$, reads $$\op{\sf D}v=\sum_\za (D^1_\za(x)\ldots D^{\sum_j r_j}_\za(x))\p_x^\za\;\left(\begin{array}{c}
           \vdots \\
           v^\zl(j)(x^i) \\
           \vdots
         \end{array}\right)\;,$$
so that it is natural to view it as an element of the free non-negatively graded $\cD$-module \be\label{CCKTChains0}V:=\bigoplus_{j=1}^{k-1}\bigoplus_{\zl=1}^{r_j} \cD\cdot v^\zl(j)\ee over formal generators of degree $j$, which we also denote by $v^\zl(j)$. Hence, we get the $\cF$-module isomorphism \be\label{CCKTChains}\op{\frak{KT}}\simeq \cF\0_\cO\cS_\cO V\;,\ee where the {\small RHS} is also a graded $\cD$-algebra.\medskip

The comparison of Equations (\ref{CCKTChains0}) and (\ref{CCKTChains}) with Equations (\ref{CKTChains0}) and (\ref{CKTChains}) shows that the algebras $\op{KT}$ and $\frak{KT}$ are defined similarly. More precisely:

\begin{rem}\label{KTR1stHigher}{\em Whereas the complex $\op{KT}$ contains the antifields $\zf^*$ and $C^*$ -- with components $\zf^*_a$ and $C^*_\zd$ that correspond to the considered equations $\zd_{u^a}{\cal L}(x^i,u^a_\za)$ and the irreducible relations $$R^a_{\zd\za}D^\za_x\zd_{u^a}{\cal L}(x^i,u^a_\za)\equiv 0$$ between them -- , the complex $\op{\frak{KT}}$ contains antifields $v(1),v(2),v(3)$, ... -- whose components $v^\zl(1),v^\zl(2),v^\zl(3)$, ... correspond to the equations $\psi_D\in\cR_1$, i.e., the equations $\psi_D^\zl(1)(x^i,u^a_\za)$, the reducible relations $\zD_1(\psi_D)=0$ between them, i.e., the relations $$(\zD_1(\psi_D))^\zl(2)(x^i,u^a_\za)\equiv 0\;,$$ the relations $\zD_2\circ\zD_1=0$ between these relations, ... -- .}\end{rem}

To further compare $\op{KT}$ and $\frak{KT}$, we must of course use here the same basic definitions as in Subsection \ref{CoP}. Hence, in analogy with (\ref{Fundamental}), we set \be\label{Fundamental2} \p_x^\zb\cdot v^\zl(j):=\bar D_x^\zb v^\zl(j)=v^\zl_\zb(j)\;,\ee where \be\bar D_{x^i}=\p_{x^i}+u^a_{i\za}\p_{u^a_\za}+v^\zl_{i\zb}(j)\p_{v^\zl_\zb(j)}\:.\ee

We are now prepared to compare the Koszul-Tate differentials $\zd_{\op{KT}}$ and $\zd_{\op{\frak{KT}}}$. As mentioned in Subsection \ref{KTRCC}, the differential $\zd_{\op{\frak{KT}}}$ is completely defined by its values on $$\cC\!\op{Diff}(\cR_\bullet,\cF)\simeq \h_\cF(\bar{\cal J}^\infty(\cR_\bullet),\cF)\simeq {\cal P}\!\op{ol}^1(\bar J^\infty(R_\bullet))$$ and its values on $\cF$. Here superscript 1 refers to functions that are linear in the fiber coordinates $v^\zl_\zb(j)$. To simplify the notation and to nevertheless distinguish the sections $v^\zl(j)(x^i,u^a_\za)$ of $R_\bullet$ (resp., the sections $v^\zl_\zb(j)(x^i,u^a_\za)$ of $\bar J^\infty(R_\bullet)$) from the fiber coordinates $v^\zl(j)$ of $R_\bullet$ (resp., the fiber coordinates $v^\zl_\zb(j)$ of $\bar J^\infty(R_\bullet)$), we write $\tilde v^\zl(j)$ (resp., $\tilde v^\zl_\zb(j)$) for sections. In the considered fixed coordinates, the preceding identifications read, i.e., such a differential operator $\nabla$ and the corresponding linear jet space function ${\frak F}_\nabla$ read (with obvious notation) $$\nabla v=\sum_\zb (\ldots\nabla^\zl_\zb(j)(x^i,u^a_\za)\ldots) D_x^\zb\,\left(\begin{array}{c}
           \vdots \\
           \tilde v^\zl(j) \\
           \vdots
         \end{array}\right)\simeq $$ 
         \be\label{Identifications} {\frak F}_\nabla(x^i,u^a_\za,v^\zl_\zb(j)) =\sum_\zb (\ldots\nabla^\zl_\zb(j)(x^i,u^a_\za)\ldots)\,\left(\begin{array}{c}
           \vdots \\
           v_\zb^\zl(j) \\
           \vdots
         \end{array}\right)
\;.\ee Since $\zd_{\op{\frak{KT}}}$ vanishes on $\cF$, it is completely defined by its values on the $v^\zl_\zb(j)$, exactly as $\zd_{\op{KT}}$ is fully defined by its values on the $\zf^{\za*}_a$ and the $C^{\zb*}_\zd$. Note still, before proceeding, that, for horizontal linear differential operators $\cC\!\op{Diff}(\cR_j,\cR_{j+1})$ valued in a not necessarily rank 1 bundle, the identifications (\ref{Identifications}) are exactly the same, except that the row of coefficients $\nabla^\zl_\zb(j)$ is replaced by a matrix of coefficients $\nabla^{\zm\zl}_\zb(j+1,j)$.\medskip

Recall now from Subsection \ref{KTRCC} that, if $F\in\cF$ and $\nabla_j\in\cC\!\op{Diff}(\cR_j,\cF)$, we have \be\label{DefKTRDiffCA}\zd_{\frak{KT}}(F)=0\,,\quad\zd_{\frak{KT}}(\nabla_1)=\nabla_1(\psi_D)\,,\quad\text{and}\quad \zd_{\frak{KT}}(\nabla_j)=\nabla_j\circ\zD_{j-1}\,,\quad \forall j\ge 2\;.\ee The equations (\ref{Identifications}) and (\ref{DefKTRDiffCA}) lead to the equation \be\label{CCKTDiff1}\zd_{\op{\frak{KT}}}(v^\zl_\zb(1))=\zd_{\op{\frak{KT}}}(D_x^\zb \tilde v^\zl(1))=D_x^\zb (\psi_D^\zl(1))\;\ee -- which is entirely similar to the definition \be\label{1stKTRDiff1}\zd_{\op{KT}}(\zf^{\za*}_a)=D_x^\za (\zd_{u^a}{\cal L})\;.\ee For $j\in\{2,\ldots,k-1\}$, we find analogously $$\zd_{\op{\frak{KT}}}(v^\zl_\zb(j))=\zd_{\op{\frak{KT}}}(\,D_x^\zb \tilde v^\zl(j))=D_x^\zb\left((\zD_{j-1}\,\tilde v(j-1))^\zl(j)\right)=$$ $$D_x^\zb\left(\,(\zD^{\zl\zm}_{\zg}(j,j-1))(x^i,u^a_\za)\,\;D_x^\zg\, \tilde v^\zm(j-1)\,\right)\;,$$ in view of the above remark on matrix coefficients. 
When using again the identification (\ref{Identifications}), we finally get
$$\label{CCKTDiff2}\zd_{\op{\frak{KT}}}(v^\zl_\zb(j))= \bar D_x^\zb\left(\,(\zD^{\zl\zm}_{\zg}(j,j-1))(x^i,u^a_\za)\,\;v_\zg^\zm(j-1)\,\right)=\bar D_x^\zb \left({\frak F}^\zl_{\zD_{j-1}}\right)\;.$$ For $j=2$, we thus find the equation \be\label{CCKTDiffP2}\zd_{\op{\frak{KT}}}(v^\zl_\zb(2))=\bar D_x^\zb\left(\,\zD^{\zl\zm}_{\zg}(2,1)\,\;\bar D_x^\zg\, v^\zm(1)\,\right)\;,\ee where we omitted the variables $(x^i,u^a_\za)$ -- which is fully analogous to the definition \be\label{1stKTRDiff2}\zd_{\op{KT}}(C^{\zb*}_\zd)=\bar D_x^\zb\left(R^\zm_{\zd\zg}\bar D_x^\zg \zf^*_\zm\right)\;.\ee

We conclude with the observation that the Koszul-Tate differential $$\zd_{\op{\frak{KT}}}=\sum_{\zb \zl}\,\bar D_x^\zb \left(\psi_D^\zl\right)\;\p_{v^\zl_\zb(1)}+\sum_{j=2}^{k-1}\sum_{\zb \zl}\,\bar D_x^\zb \left({\frak F}^\zl_{\zD_{j-1}}\right)\,\p_{v^\zl_\zb(j)}$$ is the evolutionary vector field, or symmetry of the Cartan distribution, that is obtained as the prolongation $\zd_{\cal X}$ to the horizontal jet space $\bar J^\infty(R_\bullet)\to J^\infty E$ of the vertical vector field $${\cal X}=\sum_\zl\,\psi_D^\zl\;\p_{v^\zl(1)}+\sum_{j=2}^{k-1}\sum_\zl\,{\frak F}^\zl_{\zD_{j-1}}\;\p_{v^\zl(j)}$$ of the bundle $R_\bullet\to J^\infty E$ with coefficients in $\cF(\bar J^\infty(R_\bullet)),$ see Equation (\ref{STh}).\medskip

Remark \ref{KTR1stHigher}, Equations (\ref{CCKTChains0}), (\ref{CCKTChains}), (\ref{CKTChains0}), and (\ref{CKTChains}), as well as Equations (\ref{CCKTDiff1}), (\ref{CCKTDiffP2}), (\ref{1stKTRDiff1}), and (\ref{1stKTRDiff2}), show that:

\begin{rem}\label{KTRCA1stRed} The {\small KTR} in Cohomological Analysis \cite{Ver} is the natural extension of the {\small KTR} of a first-order reducible field theory and it thus corresponds exactly to the {\small KTR} of a higher-order reducible field theory \cite{HT}.\end{rem}

\subsection{KTR in Cohomological Analysis seen as $\cD$-geometric KTR}

It is clear that, since the {\small KTR} in a first-order reducible theory is a $\cD$-geometric {\small KTR} (Theorem \ref{ClassDGeom}), the natural extension of this {\small KTR} is $\cD$-geometric as well. In view of Remark \ref{KTRCA1stRed}, we thus have the

\begin{theo} The Koszul-Tate resolution of $\Ci(\zS)$ from Cohomological Analysis of {\small PDE}-s is a $\cD$-geometric Koszul-Tate resolution of the $\cD$-algebra map $\cF\to\Ci(\zS)$ $(\,$in the smooth setting -- see Remark \ref{RemTotSectSetting}$\,)$, where $\cF$ is the function algebra of the infinite jet space in which $\zS$ is located.\end{theo}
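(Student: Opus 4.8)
The plan is to mimic the proof of Theorem~\ref{ClassDGeom} almost verbatim, replacing the two-stage tower $\cC_1\subset\cC_2$ there by the $(k-1)$-stage tower read off from the compatibility complex~(\ref{CompComp}). Throughout I would work in the total-sections setting of Remark~\ref{RemTotSectSetting}, so that $\cO=\Ci(X)$ and $\cD=\cD(X)$ with $X\subset\R^n$ open, $\cD$ is free (hence projective and flat) over $\cO$, and the machinery of \cite{BPP4}, in particular Lemma~\ref{LemRSA}, is available. First I would invoke the identification~(\ref{CCKTChains}) to write $\op{\frak{KT}}\simeq\cF\0_\cO\cS_\cO V$ with $V=\bigoplus_{j=1}^{k-1}V_j$, where $V_j=\bigoplus_{\zl=1}^{r_j}\cD\cdot v^\zl(j)$ is, by~(\ref{CCKTChains0}), a free --- hence locally projective --- graded $\cD$-module concentrated in degree $j$. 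Setting $\cC_0:=\cF$ and $\cC_j:=\cF\0_\cO\cS_\cO(V_1\oplus\cdots\oplus V_j)$ for $1\le j\le k-1$, one obtains a finite, hence stabilizing, increasing filtration of $\op{\frak{KT}}$ by graded $\cD$-subalgebras with $\cC_j\simeq\cC_{j-1}\0_\cO\cS_\cO V_j$.

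Next I would check that $\zd_{\op{\frak{KT}}}$ restricts to each $\cC_j$ and is lowering, i.e. $\zd_{\op{\frak{KT}}}V_j\subset\cC_{j-1}$. Since $\zd_{\op{\frak{KT}}}$ is a degree $-1$ graded derivation that is $\cF$- and $\cD$-linear --- this is exactly what is established in Subsection~\ref{CoP}, transported along~(\ref{CCKTChains}) via the identification~(\ref{Identifications}) --- it is determined by its values on the generators, and Equations~(\ref{CCKTDiff1}) and~(\ref{CCKTDiffP2}), together with their general-$j$ analogue $\zd_{\op{\frak{KT}}}(v^\zl_\zb(j))=\bar D_x^\zb({\frak F}^\zl_{\zD_{j-1}})$ --- whose right-hand side lies in $\cC_{j-1}$ because $\zD_{j-1}\colon\cR_{j-1}\to\cR_j$ is a horizontal linear operator --- give precisely this. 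Hence each canonical inclusion $(\cC_{j-1},\zd_{\op{\frak{KT}}})\hookrightarrow(\cC_j,\zd_{\op{\frak{KT}}})$ is a relative Sullivan $\cD$-algebra obtained from Lemma~\ref{LemRSA}, exactly as $\jmath\circ\imath$ was in Theorem~\ref{ClassDGeom}, so $\op{\frak{KT}}$ is of Sullivan type; the composite $\cF=\cC_0\hookrightarrow\op{\frak{KT}}$ then equips $\op{\frak{KT}}$ with a $\tt DG\cF[\cD]A$-structure via Example~\ref{DGJDAEx}.

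Then I would build the augmentation $q\colon\op{\frak{KT}}\to\Ci(\zS)$ by an iterated application of the morphism part of Lemma~\ref{LemRSA}, starting from $\zf\colon(\cF,0)\to(\Ci(\zS),0)$: set $q(v^\zl(j)):=0$, extend $\cD$-linearly to $V_j$, take $q=\zf$ in degree $0$ and $q=0$ in positive degrees. This is admissible because the obstruction $q_{j-1}(\zd_{\op{\frak{KT}}}v^\zl(j))$ vanishes: for $j=1$ one has $\zd_{\op{\frak{KT}}}v^\zl(1)=\psi_D^\zl\in I(\zS)$, so $q_0(\psi_D^\zl)=[\psi_D^\zl]=0$; for $j\ge2$ it vanishes automatically, since $\Ci(\zS)$ is concentrated in degree $0$ while $\zd_{\op{\frak{KT}}}v^\zl(j)$ has positive degree. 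The resulting $q$ is a $\tt DG\cD A$-morphism and is $\cF$-linear, because $F\triangleleft q(G)=F\triangleleft[G]=[FG]=q(FG)$ for $F,G\in\cF$; hence $q$ is a morphism of $\tt DG\cF[\cD]A$, in particular an $\cF$-linear morphism of differential graded $\cD$-modules (Remark~\ref{ExpSulTyp}). To conclude, I would simply invoke the homology computation already carried out in Subsection~\ref{KTRCC}: in degree $0$ the cycles are $\cF$ and the boundaries are the elements $\nabla_1(\psi_D)$ with $\nabla_1\in\cC\!\op{Diff}(\cR_1,\cF)$, which by the fundamental consequence of the regularity condition fill out exactly $I(\zS)$, so $H_0(\op{\frak{KT}})=\cF/I(\zS)=\Ci(\zS)$; in positive degrees $H_p(\op{\frak{KT}})=0$, because $(\op{\frak{KT}},\zd_{\op{\frak{KT}}})$ coincides with Verbovetsky's Koszul-Tate complex $({\cal P}\!\op{ol}(\bar J^\infty(R_\bullet)),\zd)$, whose acyclicity in positive degrees is the corresponding result of \cite{Ver}; and $q_\sharp$ in degree $0$ is the identity $[F]\mapsto[F]$. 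Thus $q$ is a $\tt DG\cF[\cD]A$-quasi-isomorphism with source of Sullivan type, i.e. a $\cD$-geometric Koszul-Tate resolution of $\cF\to\Ci(\zS)$ in the sense of Definition~\ref{KTRDef}.

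The one genuine obstacle, on which everything above rests, is the compatibility claim: that the coordinate-free odd evolutionary derivation $\zd_{\op{\frak{KT}}}$ of \cite{Ver}, once transported along the isomorphism~(\ref{CCKTChains}), is really $\cD$-linear and lowering and not merely $\cF$-linear. This is in effect what was pinned down in Subsection~\ref{KTRCC} and in the comparison preceding Remark~\ref{KTRCA1stRed} --- it is precisely why those equations were recorded --- so granting it, what remains is the same bookkeeping as in Theorem~\ref{ClassDGeom}, now over $k-1$ stages rather than two; in particular one should still double-check that the general-$j$ formula places $\zd_{\op{\frak{KT}}}(v^\zl(j))$ in $\cC_{j-1}$, which is where the matrix-coefficient form of~(\ref{Identifications}) and the horizontality of $\zD_{j-1}$ come in.
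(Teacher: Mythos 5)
Your proposal is correct and follows essentially the same route as the paper: the paper's own proof is simply the remark that the argument is ``similar to the proof of Theorem~\ref{ClassDGeom}'', and what you have written is exactly that adaptation carried out explicitly --- the $(k-1)$-stage filtration $\cC_j=\cF\0_\cO\cS_\cO(V_1\oplus\cdots\oplus V_j)$ replacing the two-stage tower, the lowering and $\cD$-linearity of $\zd_{\op{\frak{KT}}}$ read off from the identifications of the comparison subsection, the augmentation built by iterating Lemma~\ref{LemRSA}, and the homology imported from Verbovetsky's result. Your closing caveat about the $\cD$-linearity of the transported evolutionary derivation is well placed, and is precisely what the equations preceding Remark~\ref{KTRCA1stRed} were recorded to secure.
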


\begin{proof} The proof is similar to the proof of Theorem \ref{ClassDGeom}.\end{proof}

\section{Appendix A: Partial differential equations in the jet bundle formalism}\label{Vino}

The goal of the present section is to explain a number of concepts that are of importance in the Geometry of {\small PDE}-s. Additional details can be found, for instance, in \cite{KV}.

\subsection{Jets and differential operators}\label{JetDiff}

Consider a differential equation ({\small DE}) \be\label{PDE}\psi(t,\zf(t),d_t\zf,\ldots, d_t^k\zf)\equiv 0\;,\ee with evident notation. When defining the $k$-jet of $\zf(t)$ by $$j^k_t\zf=(t,\zf(t),d_t\zf,\ldots, d_t^k\zf)\;,$$ {\bf we may rewrite this {\small DE} as} \be\label{Basic1}\psi(t,u,u_1,\ldots,u_k)|_{j^k_t\zf}\equiv 0\;.\ee Here $(t,u,u_1,\ldots,u_k)$ are independent variables of what is called the $k$-jet space. Roughly speaking, the (purely) algebraic equation \be\label{AlgEq}\psi(t,u,u_1,\ldots,u_k)=0\ee defines a hypersurface $\zS^0$ in the $k$-jet space (or, better, since $t$ plays a distinguished role, a subbundle $\zS^0$ of the $k$-jet bundle), and a {\it solution} of the considered {\small DE} is nothing but a function $\zf(t)$ such that the graph\footnote{Usually the $k$-jet is defined by $j^k_t\zf=(\zf(t),d_t\zf,\ldots, d_t^k\zf)$, so that `graph' is actually the proper denomination. In view of our modified definition, `graph' means in this text `image'.} of its $k$-jet is located on $\zS^0$. This is one of the key-aspects of the jet bundle approach to partial differential equations ({\small PDE}-s) -- which will be formalized in the following. \medskip

Let $\zp:E\to X$ be a smooth vector bundle of rank $\op{rk}(\zp)=r$ over a smooth $n$-dimensional manifold. For $k\in\N\,,$ the {\bf $k$-jet} $j^k_m\zf$ at $m\in X$ of a local smooth section $\zf\in\zG(\zp)$ of $\zp$ that is defined around $m$, is the equivalence class of all local sections of $\zp$, such that in any trivializing chart $(x,u)=(x^i,u^a)$ of $\zp$ around $m$, the local coordinates of these sections coincide at $x(m)$, together with their partial derivatives at $x(m)$ up to order $k$ (it actually suffices that they coincide in one trivializing chart). We define the $k$-jet set $J^k(\zp)$ of $\pi$ by $$J^k(\zp)=\{j^k_m\zf: m\in X, \zf\in\zG(\zp)\}\;.$$ The $k$-jet set is a smooth finite rank vector bundle $\zp_k:J^k(\zp)\to X$ -- the {\bf $k$-jet bundle}. Indeed, any trivializing chart $(x^i,u^a)$ of $\pi$ induces a trivializing chart $(x^i,u_\za^a)$ of $\zp_k$, defined by $$x^i(j^k_m\zf)=x^i(m)\quad\text{and}\quad u_\za^a(j^k_m\zf)=\partial_x^\za \zf^a|_{x(m)}\;,$$ where $\za\in\N^n$ and $|\za|\le k$. For $k\le\ell$, there is a `truncation' vector bundle (epi)morphism $\pi_{k\ell}:J^\ell(\pi)\to J^k(\pi)$, so that $(J^k(\zp),\zp_{k\ell})$ is an inverse system. The limit of this diagram is the {\bf $\infty$-jet space} $\zp_\infty:J^\infty(\pi)\to X$ together with the natural projections $\zp_{k\,\infty}:J^\infty(\zp)\to J^k(\zp)$. Coordinates $(x^i,u^a_\za)$ of $J^\infty(\zp)$ can be obtained from coordinates $(x^i,u^a)$ of $\zp$, as above, by defining an infinite number of coordinates $u^a_\za$ that correspond to the partial derivatives $\p_x^\za$ of the components $\zf^a=u^a(\zf(x))$ of the sections $\zf$ of $\zp\,.$ We denote the {\bf algebra of smooth functions} of $J^k(\pi)$ by $\cF_k=\cF_k(\pi)$. The canonical epimorphisms $\zp_{k\ell}$ induce inclusions $\cF_k\subset \cF_\ell\,$. The colimit of this direct system is the algebra $\cF=\bigcup_k\cF_k$ (we will also write $\cF(\zp),$ $\cF_\infty,$ or $\cF_\infty(\zp)$) of smooth functions of $J^\infty(\zp)$. It follows that any smooth function of $J^\infty(\pi)$ is a smooth function of some $J^k(\zp)$. Note eventually that $j^k:\zG(\zp)\to \zG(\zp_k)$ and that $j^\infty:\zG(\zp)\to \zG(\zp_\infty)$ (in fact, we should, as above, consider the case $k=\infty$ separately, as a limit case; however, here and in the following, we refrain from presenting these details).\medskip

We will use jet bundles to define differential operators between sections of vector bundles. Let $\pi':E'\to X$ be a second vector bundle and take the pullback bundle $\pi^*_k(\pi')$, $k\in\N$, see Figure 1.
\begin{figure}[h]
\begin{center}
\begin{tikzpicture}
  \matrix (m) [matrix of math nodes, row sep=3em, column sep=3em]
    {  \zp_k^*E' & E'  \\
       J^k(\zp) & X  \\ };
 \path[->]
 (m-1-2) edge  node[right] {$\scriptstyle{\zp'}$} (m-2-2);
 \path[->]
 (m-1-1) edge  node[above] {$\scriptstyle{p}$} (m-1-2);
  \path[->]
 (m-1-1) edge  node[left] {$\scriptstyle{\zp_k^*(\zp')}$} (m-2-1);
  \path[->]
 (m-2-1) edge  node[above] {$\scriptstyle{\zp_k}$} (m-2-2);
\end{tikzpicture}
\caption{Pullback bundle}
\end{center}
\end{figure}
Consider now the $\cF_k(\zp)$-module of sections $\zG(\pi^*_k(\pi'))$. If $\zp':X\times \R\to X$, the latter can be naturally identified with $\cF_k(\zp)$. This justifies the notation $\mathcal{F}_k(\pi,\pi'):=\zG(\zp_k^*(\zp'))$. We denote the composite of $$\psi\in\cF_k(\zp,\zp')\subset\Ci(J^k(\zp),\zp_k^*E')$$ and $p\in\Ci(\zp_k^*E',E')$ also by $\psi$. Hence, $\psi\in\Ci(J^k(\zp),E'),$ and, for any point $j^k_m\zf\in J^k(\zp)$, we have $\psi(j^k_m\zf)\in E'_m$, i.e., $\psi$ is a fiber bundle morphism $\psi\in{\tt FB}(J^k(\zp),E')$. We thus get an isomorphism of $\Ci(X)$-modules: \be\label{SecFBiso}\zG(\zp^*_k(\zp'))=\cF_k(\zp,\zp')\simeq{\tt FB}(J^k(\zp),E')\;.\ee Since, for every section $\zf\in\zG(\zp)$, the composite of $$j^k\zf\in\zG(\zp_k)\subset\Ci(X,J^k(\zp))$$ and $\psi$ is a section $\psi\circ(j^k\zf)\in\zG(\zp')$, we see that $\psi\in\cF_k(\zp,\zp')$ implements a map $$D:\zG(\zp)\ni \zf\mapsto D(\zf)=\psi\circ(j^k\zf)\in\zG(\zp')\;,$$ such that the value $D(\zf)|_m$ only depends on $j^k_m\zf$. We therefore say that $D$ is a not necessarily linear differential operator of order $k$ between $\zp$ and $\zp'\,.$

\begin{defi} A (not necessarily linear) {\bf differential operator} $D\in\op{DO}_k(\zp,\zp')$ of order $k$ from $\zp$ to $\zp'$ is a map $D:\zG(\zp)\to\zG(\zp')$ that factors through the $k$-jet bundle, i.e., that reads \be\label{NLDOk2}D=\psi_D\circ(j^k-)\;,\ee for some section or fiber bundle morphism $\psi_D\in\cF_k(\zp,\zp')\simeq{\tt FB}(J^k(\zp),E')$. This morphism, which is visibly unique, is the {\bf representative morphism} of $D\,.$\end{defi}

In trivializations of $\zp$ and $\zp'$ over the same chart $(U,x)$ of $X$, such a $k$-th order differential operator reads \be\label{NLDOkLoc}\psi^b_D(x,\p_x^\za\zf^a)=\psi^b_D(x,u^a_\za)|_{j^k_x\zf},\quad (a\in\{1,\ldots,\op{rk}(\zp)\}, b\in\{1,\ldots,\op{rk}(\zp')\}, |\za|\le k)\;.\ee If both ranks are 1 and we write $\psi$ (resp., $t$) instead of $\psi_D$ (resp., $x=(x^1,\ldots,x^n)$), we recover \be\label{NLDOk11loc} \psi(t,\zf(t), d_t\zf,\ldots, d_t^k\zf)=\psi(t,u,u_1,\ldots,u_k)|_{j^k_t\zf}\;\ee (see beginning of \ref{JetDiff}).

The composite of a differential operator $D\in\op{DO}_k(\zp,\zp')$ and a differential operator $D'\in\op{DO}_\ell(\zp',\zp'')$ is a differential operator $D'\circ D\in\op{DO}_{k+\ell}(\zp,\zp'')\,.$

The set $\op{DO}_k(\zp,\zp')$ is a $\Ci(X)$-module. There is a canonical {\bf $\Ci(X)$-module isomorphism} \be\label{SecDOiso}\op{DO}_k(\zp,\zp')\simeq \cF_k(\zp,\zp')\simeq{\tt FB}(J^k(\zp),E')\;.\ee The natural surjective morphisms $\pi_{k\ell}$, $k\le\ell$, give rise to inclusions $\op{DO}_k(\zp,\zp')\subset\op{DO}_\ell(\zp,\zp')$, thus leading to an increasing sequence of $\Ci(X)$-modules. The colimit is the filtered $\Ci(X)$-module \be\label{FiltDOs}\op{DO}(\zp,\zp')=\bigcup_i\op{DO}_i(\zp,\zp')\;\ee of all differential operators from $\zp$ to $\zp'\,.$\medskip

If, for $r,r'\in\R$ and $\zf,\zf'\in\zG(\zp)$, we have $$D(r\zf+r'\zf')=r\,D(\phi)+r'\,D(\zf')\;,$$ the differential operator $D$ is said to be linear. We denote the $\Ci(X)$-submodule made of the linear differential operators of order $k$ (resp., of all linear differential operators) from $\zp$ to $\zp'$ by $$\op{Diff}_k(\zp,\zp')\subset\op{DO}_k(\zp,\zp')\quad (\text{resp.,}\quad \op{Diff}(\zp,\zp')\subset\op{DO}(\zp,\zp'))\;.$$

In trivializations of $\zp$ and $\zp'$ over the same chart $(U,x)$ of $X$, a linear differential operator $D$ of order $k$ reads \be\label{NLDOkLoc}\psi^b_D(x,\p_x^\za\zf^a)=\psi^b_D(x,u^a_\za)|_{j^k_x\zf},\quad (a\in\{1,\ldots,\op{rk}(\zp)\}, b\in\{1,\ldots,\op{rk}(\zp')\}, |\za|\le k)\;,\ee where the $\psi^b_D$ are $\Ci(x(U))$-linear in the derivatives, i.e., $$\psi^b_D(x,\p_x^\za\zf^a)=\sum_{\za,a} M^b_{\za a}(x)\p_x^\za\zf^a\;.$$

In fact, a differential operator is a {\it linear} operator $D\in\Diff_k(\zp,\zp')$ if and only if its representative morphism is a {\it vector} bundle morphism $\psi_D\in{\tt VB}(J^k(\zp),E')$ (not only a fiber bundle morphism), i.e., a $\Ci(X)$-linear map $\psi_D\in\h_{\Ci(X)}(\zG(\zp_k),\zG(\zp'))$ (denoted by the same symbol). This passage from the vector bundle map to the linear map between sections allows to replace $D(-)=\psi_D\circ (j^k-)$, see (\ref{NLDOk2}), by $D(-)=(\psi_D\circ j^k)(-)\,.$ Therefore,

\begin{prop} A {\bf linear differential operator} $D\in\op{Diff}_k(\zp,\zp')$ is an $\R$-linear map $D:\zG(\zp)\to\zG(\zp')$ that factors through the $k$-jet bundle, i.e., that reads \be\label{LDOk3}D=\psi_D\circ j^k\;,\ee for some (and thus unique) vector bundle or $\Ci(X)$-module morphism $\psi_D\in{\tt VB}(J^k(\zp),E')\simeq\h_{\Ci(X)}(\zG(\zp_k),\zG(\zp'))$. Hence the isomorphisms of $\Ci(X)$-modules
\be\label{LDOk2}\Diff_k(\pi,\pi')\simeq {\tt VB}(J^k(\zp),E')\simeq\Hom_{C^\infty(X)}(\zG(\pi_k),\zG(\pi'))\;,\ee
and
\be\label{LDO2}\Diff(\pi,\pi')\simeq {\tt VB}(J^\infty(\zp),E')\simeq\Hom_{C^\infty(X)}(\zG(\pi_\infty),\zG(\pi'))\;.\ee
\end{prop}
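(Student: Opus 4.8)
The plan is to deduce both isomorphisms by restricting the non-linear correspondence \eqref{SecDOiso}, $\op{DO}_k(\zp,\zp')\simeq{\tt FB}(J^k(\zp),E')$, $D\mapsto\psi_D$, to the submodule $\Diff_k(\zp,\zp')\subset\op{DO}_k(\zp,\zp')$. Uniqueness of $\psi_D$ is then inherited from the non-linear statement, so the essential point is the equivalence claimed just above the proposition: $D$ is $\R$-linear if and only if its representative $\psi_D$ is fiberwise linear, i.e., a vector bundle morphism over $\id_X$. Granting this, \eqref{LDOk3} follows at once, since for a \emph{vector} bundle morphism the value $\psi_D(j^k_m\zf)$ depends $\Ci(X)$-linearly on $j^k\zf\in\zG(\zp_k)$, so that $\psi_D\circ(j^k-)$ may be rewritten as the composite of the $\R$-linear prolongation $j^k:\zG(\zp)\to\zG(\zp_k)$ with the $\Ci(X)$-linear section-level map attached to $\psi_D$; this simultaneously yields the first bijection in \eqref{LDOk2}.

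For the two directions of the equivalence: if $\psi_D\in{\tt VB}(J^k(\zp),E')$, then since $j^k$ is $\R$-linear on sections and post-composition with a smooth fiberwise-linear bundle map preserves $\R$-linearity, $D=\psi_D\circ j^k$ is a linear operator. Conversely, assume $D$ is $\R$-linear, fix $m\in X$, and pick $\zy_1,\zy_2\in J^k_m(\zp)$. Realizing each $\zy_i$ as $j^k_m\zf_i$ for a local section $\zf_i$ defined near $m$ (which suffices for evaluating representative morphisms at $m$) and using $\zy_1+\zy_2=j^k_m(\zf_1+\zf_2)$ and $r\zy_1=j^k_m(r\zf_1)$, we get
$$\psi_D(\zy_1+\zy_2)=D(\zf_1+\zf_2)(m)=D(\zf_1)(m)+D(\zf_2)(m)=\psi_D(\zy_1)+\psi_D(\zy_2)$$
and $\psi_D(r\zy_1)=r\,\psi_D(\zy_1)$. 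Hence $\psi_D$ restricts to an $\R$-linear map $J^k_m(\zp)\to E'_m$ for every $m$; being smooth and fiber-preserving, it is a vector bundle morphism. This is the heart of the argument and, I expect, the only genuinely non-formal step: it is precisely the compatibility $j^k_m(\zf_1+\zf_2)=j^k_m\zf_1+j^k_m\zf_2$ together with the realizability of jets by sections that converts linearity at the section level into fiberwise linearity on the jet bundle, after which ``fiberwise linear plus smooth $\Rightarrow$ vector bundle morphism'' is routine.

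Finally, for the second bijection in \eqref{LDOk2} I would invoke the standard fact that for finite-rank vector bundles over a smooth manifold, vector bundle morphisms over $\id_X$ correspond bijectively to $\Ci(X)$-linear maps between the modules of sections --- one direction is post-composition with sections, the converse a localization argument with bump functions (a Serre--Swan-type statement); in a common trivializing chart $(U,x)$ this is just the familiar description of such a map by a matrix $(M^b_{\za a}(x))$ of smooth functions. The infinite-order statement \eqref{LDO2} is then obtained by passing to the colimit of \eqref{LDOk2} over $k$: on the source side $\Diff(\zp,\zp')=\bigcup_k\Diff_k(\zp,\zp')$ by construction (cf.~\eqref{FiltDOs}), and on the target side, because $\op{rk}(\zp')<\infty$ --- equivalently, because every smooth function on $J^\infty(\zp)$ already lives on some $J^k(\zp)$, as recalled in Appendix A --- every vector bundle morphism $J^\infty(\zp)\to E'$ and every $\Ci(X)$-linear map $\zG(\zp_\infty)\to\zG(\zp')$ factors through a finite jet level, so that ${\tt VB}(J^\infty(\zp),E')=\colim_k{\tt VB}(J^k(\zp),E')$ and similarly for the Hom-modules; this last colimit step is the only other point requiring a small verification.
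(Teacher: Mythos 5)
Your proposal is correct and follows the same route the paper takes: the paper simply asserts, in the paragraph preceding the proposition, that $D$ is linear if and only if $\psi_D$ is a vector bundle morphism, and then invokes the standard identification of ${\tt VB}(J^k(\zp),E')$ with $\h_{\Ci(X)}(\zG(\zp_k),\zG(\zp'))$; you supply the (correct, standard) verifications of these two facts and of the colimit step for \eqref{LDO2}, all of which the paper leaves implicit.
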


We close the present section with the remark that, in the case $\zp=\zp'=\op{pr}_1:X\times \R\to X$, the differential operators $\Diff(\zp,\zp')$ act on functions $\Ci(X)$, and that we then write $\cD(X)$ instead of $\Diff(\op{pr}_1,\op{pr}_1)$; in other words:

\begin{rem}{\em\label{LDOFunct} We denote by $\cD(X)$ the associative unital $\R$-algebra of linear differential operators acting on functions $\Ci(X)$ of a smooth manifold $X$.}\end{rem}

\subsection{Partial differential equations and their prolongations}\label{PDEP}

A second fundamental feature is that one prefers replacing the original system of {\small PDE}-s by an enlarged system, its prolongation, which also takes into account the differential consequences of the original one. More precisely, if $\zf(t)$ satisfies the original {\small DE} (\ref{PDE}), {\bf we have}, for any $\ell\in\N\,,$ $$d^r_t(\psi(t,\zf(t),d_t\zf,\ldots,d_t^k\zf))=(\p_t+u_1\p_u+u_2\p_{u_1}+\ldots)^r \psi(t,u,u_1,\ldots,u_k)|_{j^{k+\ell}_t\zf}=:$$ \be\label{ProlPDE}D_t^r \left(\psi(t,u,u_1,\ldots,u_k)\right)|_{j^{k+\ell}_t\zf}\equiv 0,\;\forall r\le \ell\;.\ee
Let us stress that the `total derivative' $D_t$ or `horizontal lift' $D_t$ of $d_t$ is actually an infinite sum. The {\small DE} (\ref{PDE}) and the system of {\small DE}-s (\ref{ProlPDE}), have clearly the same solutions, so we may focus just as well on (\ref{ProlPDE}). The corresponding system of algebraic equations \be\label{ProlAlgE}(D_t^r \psi)(t,u,u_1,\ldots,u_k,u_{k+1},\ldots,u_{k+r})=0,\;\forall r\le \ell\;\ee defines a `surface' $\zS^\ell$ in the $(k+\ell)$-jet space. {\it A solution of the original {\small DE} (\ref{PDE}) is now a function $\zf$ such that the graph $\op{gr}(j^{k+\ell}\zf)$ is a subset of ${\zS}^\ell$}. The `surface' $\zS^\ell$ is referred to as the $\ell$-th prolongation of the considered {\small DE} or differential operator.\medskip

To grasp the interest in differential consequences, consider for instance the integration problem $\p_{x^i}F=f_i$ ($i\in\{1,\ldots,n\}$) in $\R^n$ -- where notation is obvious -- . The differential consequences of this (system of) {\small PDE}(-s) include the equations $\p_{x^j}\p_{x^i}F=\p_{x^j}f_i$ ($i,j\in\{1,\ldots,n\}$), hence, they include the compatibility conditions $\p_{x^j}f_i=\p_{x^i}f_j$.\medskip

In the case $k=\ell=1$, the equation of $\zS^0\subset J^1$ (resp., of $\zS^1\subset J^2$) is $$\psi(t,u,u_1)=0\quad \text{(resp.,}\quad\psi(t,u,u_1)=0\;\,\text{and}\;\, (D_t\psi)(t,u,u_1,u_2)=0)\;,$$ (see (\ref{ProlAlgE})). Hence, $\zS^1$ is the set of points $j^2_{t_0}\zf\in J^2$ such that $j^1_{t_0}\zf\in\zS^0$ and $$(\p_t\psi+u_1\p_u\psi+u_2\p_{u_1}\psi)|_{j^2_{t_0}\zf}=\p_t\psi|_{j^1_{t_0}\zf}+d_t\zf|_{t_0}\p_u\psi|_{j^1_{t_0}\zf}+d^2_t\zf|_{t_0}\p_{u_1}\psi|_{j^1_{t_0}\zf}=0\;.$$ The last requirement means that the tangent vector $(1,d_t\zf|_{t_0},d^2_t\zf|_{t_0})$ at $t_0$ of the curve $(t,\zf(t),d_t\zf)\in J^1$ is an element of the vector space $$T_{j^1_{t_0}\zf}\zS^0: \p_t\psi|_{j^1_{t_0}\zf}\;t+\p_u\psi|_{j^1_{t_0}\zf}\;u+\p_{u_1}\psi|_{j^1_{t_0}\zf}\;u_1=0\;$$ that is tangent to $\zS^0$ at $j^1_{t_0}\zf\,.$ Thus, \be\label{ProlCoord}\zS^1=\{j^2_{t_0}\zf\in J^2:\op{gr}(j^1\zf)\; \text{is tangent to}\; \zS^0\; \text{at}\; j^1_{t_0}\zf\}\;.\ee

Observe that the equations of $\zS^0$ and $\zS^1$ show that $\zS^\ell$ is not necessarily a smooth manifold and that $\zp_{12}:\zS^1\to \zS^0$ is not necessarily a smooth fiber bundle.\medskip

We now define partial differential equations and their prolongations in a coordinate-free manner.

\begin{defi}\label{PDESol} A {\bf partial differential equation} $(\,$resp., a {\bf linear} partial differential equation$\,)$ of order $k$ $(\,$$k\ge 0$$\,)$ acting on sections $\zf\in\zG(\zp)$ of a vector bundle $\zp$, is a smooth fiber $(\,$resp., vector$\,)$ subbundle $\zp_k:\zS^0\to X$ of $J^k(\zp)$. The {\bf $\ell$-th prolongation} of $\zS^0$ $(\,$$0\le\ell\le\infty$$\,)$ is the subset \be\label{ProlGeo}\zS^\ell=\{j^{k+\ell}_{m}\zf\in J^{k+\ell}(\zp): \op{gr}(j^k\zf)\;\,\text{is tangent up to order $\ell$ to}\;\, \zS^0\;\,\text{at}\;\, j_m^k\zf\}\;\ee of $J^{k+\ell}(\zp)\,.$ A $(\,$local$\,)$ {\bf solution} of $\zS^0$ is a $(\,$local$\,)$ section $\zf$ of $\zp$ such that $\op{gr}(j^k\zf)\subset\zS^0\,.$ \end{defi}

Note that the definition of the prolongation means that the points $j^{k+\ell}_m\zf$ of $\zS^\ell$ provide $\ell$-th order approximations $\op{gr}(j^k\zf)$ of possible solutions of $\zS^0\,.$

\begin{rem}{\em\label{FromIntSol}\begin{enumerate}\item In the following we always assume that the considered equation $\zS^0\subset J^k(\zp)$ is {\bf formally integrable} (see also Subsection \ref{CC}), i.e., that \begin{itemize}\item the prolongations $\zS^\ell$ are smooth manifolds $(0\le \ell\le \infty)$, and
\item the maps $\zp_{k+\ell,k+\ell+1}:\zS^{\ell+1}\to\zS^{\ell}$ $(0\le \ell<\infty)$ are smooth fiber bundles.\end{itemize} \item Let us stress as well that it follows from Definition \ref{PDESol} (see also introduction to the present subsection \ref{PDEP}) that $\zf$ is a solution of $\zS^0$ if \be\label{SolCond}\op{gr}(j^{k+\ell}\zf)\subset \zS^\ell\;,\ee for some $0\le \ell\le\infty\,,$ and that, conversely, we have (\ref{SolCond}) for every $\ell\,,$ if $\zf$ is a solution. \end{enumerate}
}\end{rem}

A {\small PDE} (resp., a linear {\small PDE}) $\zS^0$ of order $k$ in $\zp$ is {\bf implemented by a differential operator} $D\in\op{DO}_k(\zp,\zp')$ (resp., $D\in\op{Diff}_k(\zp,\zp')$), if $\zS^0=\ker\psi_D$, where $\zp':E'\to X$ is a vector bundle and where $\psi_D\in{\tt FB}(J^k(\zp),E')$ (resp., $\psi_D\in{\tt VB}(J^k(\zp),E')$) is the representative morphism of $D\,.$ In this case, the differential operator $j^\ell\circ D$ is of order $k+\ell$ and acts from $\zp$ to $\zp'_\ell$. Its decomposition \be\label{ProlPDECoorFree}j^\ell\circ D=\psi_{j^\ell\circ D}\circ j^{k+\ell}\ee corresponds to Equation (\ref{ProlPDE}). In the sequel we write \be\label{ProlDO}\psi^\ell_D:J^{k+\ell}(\zp)\to J^\ell(\zp')\ee for the representative morphism $\psi_{j^\ell\circ D}$ of the $\ell$-th prolongation $j^\ell\circ D$ of $D$. It is now clear that \be\label{ProlEq}\zS^\ell=\ker\psi^\ell_D\;,\ee i.e., that the $\ell$-th prolongation of the equation is given by the $\ell$-th prolongation of the corresponding differential operator (see Equation (\ref{ProlAlgE})).

\subsection{Cartan distribution}

Jet spaces $\zp_k:J^k(\zp)\to X$, $0\le k\le \infty$, come equipped with a natural geometric structure, their {\bf Cartan distribution} $\cC^k=\cC^k(\zp)$, i.e., with an assignment \be\label{CartDistJ}\cC^k:J^k(\zp)\ni\zk_k\mapsto \cC^k_{\zk_k}\subset T_{\zk_k}(J^k(\zp))\;\ee of a vector subspace $\cC^k_{\zk_k}$ of the corresponding tangent space to any point of the jet space. This subspace can be defined in a coordinate-free manner, which will however not be detailed here. The next proposition gives the coordinate description of $\cC^k_{\zk_k}$.

\begin{prop} Let $\zp: E\to X$ be a vector bundle of rank $r$ over a manifold of dimension $n\,.$ For any $k\ge 0$ and any $\zk_k\in J^k(\zp)\,,$ the Cartan space $\cC^k_{\zk_k}=\cC^k_{\zk_k}(\zp)$ is generated by the vectors $$D_{x^i}^{\le k-1}|_{\zk_k}=\p_{x^i}+\sum_{a=1}^r\sum_{|\za|\le k-1} u^a_{i\za}\p_{u^a_\za}\;|_{\zk_k}\quad\text{and}\quad \p_{u^a_\za}\,|_{\zk_k}\;,$$ \be{\label{CartDistFundaGen}}i\in\{1,\ldots,n\}, a\in\{1,\ldots,r\},|\za|=k\;,\ee where $(x^i,u^a_\za)$ is a trivializing chart of $J^k(\zp)$ around $\zp_k(\zk_k)\,.$ In the limit case $k=\infty\,,$ the Cartan space $\cC^\infty_{\zk_\infty}$ is generated by the total derivatives \be\label{CartDistFundaGenInfty}D_{x^i}|_{\zk_\infty}\,,\quad i\in\{1,\ldots,n\}\;.\ee\end{prop}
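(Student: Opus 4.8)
\medskip\noindent\emph{Proof idea.} The statement is classical (see, e.g., \cite{KV}); the plan is to read off both claims from the coordinate-free description of $\cC^k$ as the pointwise linear span of the tangent spaces to graphs of $k$-jet prolongations of sections. Fix $\zk_k\in J^k(\zp)$, choose a local section $\zf$ of $\zp$ around $m=\zp_k(\zk_k)$ with $j^k_m\zf=\zk_k$, and work in the trivializing chart $(x^i,u^a_\za)$ of the statement, writing $x_0=x(m)$. By definition, $\cC^k_{\zk_k}$ is spanned by the tangent spaces $T_{\zk_k}\big(\op{gr}(j^k\zg)\big)$ as $\zg$ ranges over the local sections with $j^k_m\zg=\zk_k$. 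Since $\op{gr}(j^k\zg)$ is the image of $x\mapsto\big(x,(\p_x^\za\zg^a(x))_{|\za|\le k}\big)$, its tangent space at $\zk_k$ is spanned, for $i=1,\dots,n$, by
$$\xi_i(\zg)=\p_{x^i}\big|_{\zk_k}+\sum_{a}\sum_{|\za|\le k}\big(\p_x^{\za+e_i}\zg^a\big)(x_0)\,\p_{u^a_\za}\big|_{\zk_k}\,,$$
where $e_i\in\N^n$ is the $i$-th standard basis multi-index. For $|\za|\le k-1$ one has $\p_x^{\za+e_i}\zg^a(x_0)=u^a_{i\za}(\zk_k)$, because $j^k_m\zg=\zk_k$ fixes all derivatives of $\zg$ at $m$ up to order $k$; hence the part of $\xi_i(\zg)$ with $|\za|\le k-1$ is exactly $D_{x^i}^{\le k-1}\big|_{\zk_k}$, while the part with $|\za|=k$ lies in the vertical space $V_{\zk_k}:=\op{span}\{\,\p_{u^a_\za}\big|_{\zk_k}\ :\ |\za|=k\,\}=\ker(\zp_{k-1,k})_*\big|_{\zk_k}$ and depends on the order-$(k{+}1)$ derivatives of $\zg$ at $x_0$, which are not constrained by $\zk_k$. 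This already yields the inclusion $\cC^k_{\zk_k}\subseteq\op{span}\{D_{x^i}^{\le k-1}|_{\zk_k}:i\}+V_{\zk_k}$.

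For the reverse inclusion I would first produce enough sections. Given the coordinates $\theta^a_\za$ ($|\za|\le k$) of $\zk_k$ and any family $(c^a_\zb)_{|\zb|=k+1}$, the truncated Taylor polynomial
$$\zg^a(x)=\sum_{|\za|\le k}\tfrac{1}{\za!}\,\theta^a_\za\,(x-x_0)^\za+\sum_{|\zb|=k+1}\tfrac{1}{\zb!}\,c^a_\zb\,(x-x_0)^\zb$$
defines a local section with $j^k_m\zg=\zk_k$ and with prescribed order-$(k{+}1)$ derivatives $c^a_\zb$ at $x_0$. Subtracting $\xi_i(\zg_0)$ for the polynomial $\zg_0$ with $c\equiv 0$ from $\xi_i(\zg)$ for a general $c$ gives $\sum_{a,|\za|=k}c^a_{\za+e_i}\,\p_{u^a_\za}|_{\zk_k}\in\cC^k_{\zk_k}$; letting the $c^a_\zb$ range over all values (and $i$ over all indices) shows $V_{\zk_k}\subseteq\cC^k_{\zk_k}$, and then $\xi_i(\zg_0)\in\cC^k_{\zk_k}$ forces $D_{x^i}^{\le k-1}|_{\zk_k}\in\cC^k_{\zk_k}$. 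Hence $\cC^k_{\zk_k}=\op{span}\{D_{x^i}^{\le k-1}|_{\zk_k}:i\}\oplus V_{\zk_k}$, which is the asserted generating set (directness of the sum, and the dimension $n+r\cdot\#\{\za\in\N^n:|\za|=k\}$, are immediate in coordinates), and the description is manifestly independent of the chosen chart.

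For $k=\infty$ the same calculation applies verbatim to $\zk_\infty=j^\infty_m\zf$ and to the graphs $\op{gr}(j^\infty\zg)$ with $j^\infty_m\zg=\zk_\infty$, with the crucial simplification that now \emph{every} index $\za+e_i$ occurring in $\xi_i(\zg)$ already labels a jet coordinate, so $\p_x^{\za+e_i}\zg^a(x_0)=u^a_{i\za}(\zk_\infty)$ and $\xi_i(\zg)=D_{x^i}|_{\zk_\infty}$ independently of the chosen section; thus all these tangent spaces through $\zk_\infty$ coincide and $\cC^\infty_{\zk_\infty}=\op{span}\{D_{x^i}|_{\zk_\infty}:i=1,\dots,n\}$ — provided at least one section $\zg$ with $j^\infty_m\zg=\zk_\infty$ exists, which in the $C^\infty$ category is Borel's lemma applied componentwise in the chart. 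I expect the only genuinely delicate points to be the passage from the unstated coordinate-free definition of $\cC^k$ to the ``span of tangent spaces to $k$-jet graphs'' picture used above, and, in the limit case, the appeal to Borel's lemma to realize a prescribed infinite jet by an honest local section; everything else is routine bookkeeping with multi-indices.
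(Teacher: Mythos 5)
Your argument is correct, but there is nothing in the paper to compare it against: the proposition sits in the survey Appendix A, the authors explicitly decline to detail the coordinate-free definition of $\cC^k$, and no proof is offered (the reader is referred to \cite{KV}). What you write is the standard proof from that reference, and it is sound. Taking $\cC^k_{\zk_k}$ to be the linear span of the tangent spaces $T_{\zk_k}(\op{gr}(j^k\zg))$ over all local sections $\zg$ with $j^k_m\zg=\zk_k$ -- which is indeed the definition used in \cite{KV} -- your splitting of $\xi_i(\zg)$ into the jet-determined part $D^{\le k-1}_{x^i}|_{\zk_k}$ and an unconstrained vertical part, combined with the truncated Taylor polynomials realizing arbitrary order-$(k{+}1)$ data, yields both inclusions; in particular your isolation of a single $\p_{u^a_\za}$ with $|\za|=k$ by setting $c^a_{\za+e_i}=1$ and all other coefficients to zero works because $\za'\mapsto\za'+e_i$ is injective on multi-indices of length $k$. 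The two points you flag as delicate are exactly the right ones: the identification of the unstated definition with the ``span of tangent planes to $k$-jet graphs'' picture, and, in the case $k=\infty$, Borel's lemma to produce at least one section realizing a prescribed infinite jet (for finite $k$ your polynomial sections already suffice, so no such input is needed there). Alternatively one can sidestep Borel's lemma by defining $\cC^\infty_{\zk_\infty}$ as the projective limit of the finite-level Cartan planes along the projections $\zp_{k,k+1}$, but in the smooth setting of this appendix your appeal to Borel is the natural route, and it also explains the phenomenon the paper records just after the proposition: at $k=\infty$ the extra vertical generators disappear and the distribution becomes $n$-dimensional and integrable.
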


The existence of the extra generators $\p_{u^a_\za}$ ($a\in\{1,\ldots,r\}$, $|\za|=k$) makes {\it the Cartan distribution $\cC^k=\cC^k(\zp)$ non-integrable}. Indeed, take, to simplify, the case $k=n=r=1$ and note that the bracket $[D_t^{\le 0},\p_{u_1}]=[\p_t+u_1\p_u,\p_{u_1}]=-\p_u$ of local vector fields in $\cC^1$ is not located in $\cC^1\,.$ This problem disappears at the limit $k=\infty$: {\it the Cartan distribution $\cC^\infty=\cC^\infty(\zp)$ is $n$-dimensional and integrable} (indeed $[D_{x^i},D_{x^j}]=0\,$).

\begin{rem}{\em In the sequel, we deal with limits, e.g., infinite prolongations $\zS^\infty$. Whenever no confusion arises, we omit the sub- and superscripts $\infty$, thus writing $\zS$ (resp., $\zk, \cC,\ldots$) instead of $\zS^\infty$ (resp., $\zk_\infty,\cC^\infty,\ldots\,$).}\end{rem}

Consider now a {\small PDE} $\zS^0\subset J^k(\zp)$ of order $k$ on $\zp\,$ (as mentioned before, we systematically assume that the considered {\small PDE}-s are formally integrable). We define the Cartan distribution $\cC^k(\zS^0)$ of $\zS^0$ by \be\label{CartDistE}\cC^k(\zS^0):\zS^0\ni\zk_k\mapsto \cC^k_{\zk_k}\cap T_{\zk_k}\zS^0\subset T_{\zk_k}\zS^0\;,\ee and the Cartan distribution $\cC(\zS)$ of $\zS\subset J^\infty(\zp)$ by \be\label{CartDistELim}\cC(\zS):\zS\ni\zk\mapsto\cC_\zk\cap T_\zk\zS\subset T_\zk\zS\;.\ee It can be shown that \be\label{FormIntCons}\cC_{\zk}=\cC_\zk(\zp)\subset T_{\zk}\zS\;,\ee so that \be\label{FormIntCons2} \cC(\zS)=\cC(\zp)|_\zS\;.\ee Moreover, just as $\cC(\zp)\,,$ {\it the Cartan distribution $\cC(\zS)=\cC(\zp)|_\zS$ is $n$-dimensional and integrable}.\medskip

Eventually: \begin{prop} The maximal dimensional $(\,$$n$-dimensional$\,)$ integral manifolds of the Cartan distribution $\cC(\zp)$ $(\,$resp., $\cC(\zS)$$\,)$ are the graphs $\op{gr}(j^\infty\zf)$ of the infinite jets of the local sections $\zf\in\zG_{\op{loc}}(\zp)$ $(\,$resp., the local solutions $\zf\in\zG_{\op{loc}}(\zp)$ of $\zS^0$$\,)$.\end{prop}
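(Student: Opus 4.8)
The plan is to prove both halves of the stated identification, for $\cC(\zp)$ first and then for $\cC(\zS)$, working throughout in a trivializing chart $(x^i,u^a_\za)$ of $J^\infty(\zp)$ and using the coordinate description $\cC_\zk=\langle D_{x^i}|_\zk\rangle_{i=1}^n$ from the preceding proposition. First I would dispose of the easy inclusion. For a local section $\zf\in\zG_{\op{loc}}(\zp)$, the graph $\op{gr}(j^\infty\zf)$ is an $n$-dimensional submanifold whose tangent space at $j^\infty_m\zf$ is spanned by the vectors $\p_{x^i}+\sum_{a,\za}u^a_{i\za}(j^\infty_m\zf)\,\p_{u^a_\za}=D_{x^i}|_{j^\infty_m\zf}$; hence $\op{gr}(j^\infty\zf)$ is an integral manifold of $\cC(\zp)$, and since $\dim\cC(\zp)=n$ it is maximal dimensional.

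For the converse, let $L$ be an $n$-dimensional integral manifold of $\cC(\zp)$. Since $(\zp_\infty)_*D_{x^i}=\p_{x^i}$ and the $\p_{x^i}$ span $T_xX$, the restriction $\zp_\infty|_L\colon L\to X$ is a local diffeomorphism, so locally $L$ is the image of a section $s\colon U\to J^\infty(\zp)$, $s(x)=(x,s^a_\za(x))$, of $\zp_\infty$ over an open set $U\subset X$. Writing out that $T_{s(x)}L=\langle\p_{x^i}s\rangle_i$ must coincide with $\cC_{s(x)}=\langle D_{x^i}|_{s(x)}\rangle_i$ and comparing the $\p_{u^a_\za}$-components yields the \emph{Cartan relations} $\p_{x^i}s^a_\za=s^a_{\za+e_i}$ for all $i,a,\za$. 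An induction on $|\za|$ then gives $s^a_\za=\p_x^\za s^a_0$ (the step $\za=\zb+e_i$ uses the inductive hypothesis together with commutativity of partial derivatives, which also shows the relations are consistent), so $s=j^\infty\zf$ for the local section $\zf$ with $\zf^a=s^a_0$, whence $L=\op{gr}(j^\infty\zf)$.

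For the equation case, recall from Equation (\ref{FormIntCons}) that $\cC_\zk(\zS)=\cC_\zk(\zp)$ for $\zk\in\zS$, so an $n$-dimensional integral manifold of $\cC(\zS)$ is exactly an $n$-dimensional integral manifold of $\cC(\zp)$ contained in $\zS$; by the previous paragraph it equals $\op{gr}(j^\infty\zf)$ for some local section $\zf$, and the condition $\op{gr}(j^\infty\zf)\subset\zS=\zS^\infty$ is, by Remark \ref{FromIntSol} and Equation (\ref{SolCond}), precisely the statement that $\zf$ is a local solution of $\zS^0$. Conversely, if $\zf$ is a local solution, then $\op{gr}(j^k\zf)\subset\zS^0$ forces $\op{gr}(j^\infty\zf)\subset\zS$, and this graph is an $n$-dimensional integral manifold of $\cC(\zp)|_\zS=\cC(\zS)$, hence maximal dimensional.

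The hard part will be the foundational bookkeeping in the infinite-jet setting: verifying that $\zp_\infty|_L$ is a local diffeomorphism and, more generally, that notions such as \emph{submanifold}, \emph{tangent space}, and \emph{section} of $\zp_\infty$ make sense as inverse limits of their finite-order analogues and are compatible with the coordinate computations above. Once these points are granted, the argument reduces to the elementary manipulation of the Cartan relations indicated here; the truly substantive input — that $\cC(\zp)$ and $\cC(\zS)$ are $n$-dimensional and integrable with the stated coordinate generators — has already been established.
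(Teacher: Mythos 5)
Your argument is correct and is the standard one: graphs of infinite jets are integral because their tangent spaces are spanned by the total derivatives; conversely, an $n$-dimensional integral manifold projects locally diffeomorphically to $X$ (since $T\zp_\infty$ is an isomorphism on each Cartan space), is therefore the image of a section of $\zp_\infty$, and the Cartan relations $\p_{x^i}s^a_\za=s^a_{\za+e_i}$ force that section to be a jet prolongation; the restriction to $\zS$ then follows from $\cC(\zS)=\cC(\zp)|_\zS$ and the definition of a solution. The paper itself states this proposition without proof (deferring to the jet-space literature), so there is no in-text argument to compare against; your only outstanding obligation is the foundational point you already flag, namely making the inverse-limit notions of submanifold, tangent space, and local diffeomorphism precise at order $\infty$.
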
\medskip

Hence, the set of maximal dimensional integral manifolds in $(\zS,\cC(\zS))$ can be identified with the set of solutions of $\zS^0$. Since all relevant information about the original {\small PDE} $\zS^0$ is thus encrypted in the pair $(\zS,\cC(\zS))$, the partial differential equation $\zS^0$ is frequently identified with the `diffiety' $(\zS,\cC(\zS))$. {\bf Diffieties}, or, explicitly, differential varieties, are for partial differential equations what algebraic varieties are for algebraic equations. Diffieties are (often infinite-dimensional) manifolds equipped with a Cartan distribution; they are locally equivalent to infinite prolongations of differential equations. The Cartan distribution allows developing on a diffiety a specific differential calculus, called Secondary Calculus, whose objects are cohomology classes of differential complexes. Many characteristics of a diffiety, i.e., of the corresponding systems of partial differential equations, can be expressed in terms of Secondary Calculus and vice versa.

\subsection{Cartan connection}

\subsubsection{Horizontal vector fields}

Since $$\cC(\zp):J^\infty(\zp)\ni\zk\mapsto\cC_\zk(\zp)\subset T_\zk J^\infty(\zp)\;,$$ where $\cC_\zk(\zp)$ is the tangent space at $\zk$ to the graphs $\op{gr}(j^\infty\zf)$ of the {\it sections} $j^\infty\zf$ that pass through $\zk$ at $m=\zp_\infty(\zk)\,,$ the following statements are rather obvious: \begin{itemize}\item $T_\zk\zp_\infty: \cC_\zk(\zp)\to T_mX$ is a vector space isomorphism (it is easily seen that this derivative sends $D_{x^i}|_\zk$ to $\p_{x^i}|_{\zp_\infty(\zk)}$). \item The $\cF(\zp)$-module $\cC\Theta(\zp):=\zG(\cC(\zp))$ (resp., $\Theta^v(\zp)$) of sections of the subbundle $\cC(\zp)\subset T\,J^\infty(\zp)$ (resp., of $\zp_\infty$-vertical vector fields of $J^\infty(\zp)$) is a submodule of the $\cF(\zp)$-module $\Theta(\zp)$ of vector fields of $J^\infty(\zp)\,.$ More precisely, we have \be\label{Dist1}\Theta(\zp)=\cC\Theta(\zp)\oplus\Theta^v(\zp)\;.\ee \end{itemize} This suggests the idea of connection, i.e., of a $\Ci(X)$-linear lift (map with the obvious projection property) \be\label{CartConn}\cC:\Theta(X)\ni\zy\mapsto\cC\zy\in\cC\Theta(\zp)\;.\ee Indeed, its suffices to set, for any $\zk\in J^\infty(\zp)$ with projection $\zp_\infty(\zk)=m\,,$ \be\label{CartConnDef}(\cC\zy)_\zk:=(T_\zk\zp_\infty)^{-1}\zy_m\in\cC_\zk(\zp)\subset T_\zk J^\infty(\zp)\;.\ee This connection $\cC$ on $J^\infty(\zp)$ is the {\bf Cartan connection} induced by the Cartan distribution $\cC(\zp)$ on $J^\infty(\zp)\,.$\medskip

As, in trivializing coordinates $(x^i,u^a_\za)$ of $J^\infty(\zp)$ over $U$ around $m=\zp_\infty(\zk)$, the Cartan space $\cC_\zk(\zp)$ is generated by the $D_{x^i}|_\zk$, the {\it horizontal vector fields $H\in \cC\Theta(\zp)$ are locally generated over functions of $J^\infty(\zp)$ by the total derivatives $D_{x^i}$}: \be\label{HorVecField} H|_{\zp_\infty^{-1}(U)}=\sum_j H^j(x^i,u^a_\za)D_{x^j}\;.\ee Since $T_\zk\zp_{\infty}(D_{x^j}|_\zk)=\p_{x^j}|_m,$ {\it a vector field $\zy|_U=\sum_j\zy^j(x^i)\p_{x^j}$ is lifted to} \be\label{HorLiftVecField}(\cC\zy)|_{\zp_\infty^{-1}(U)}=\sum_j \zy^j(x^i)D_{x^j}\;.\ee Let us also mention, for the sake of completeness, that {\it a vector field $T\in\Theta(\zp)$ $(\,$resp., a vertical vector field $V\in\Theta^v(\zp)$$\,)$ locally reads} \be\label{JetBd(V)Vf}T|_{\zp_\infty^{-1}(U)}=\sum_jT^j(x^i,u^a_\za)\p_{x^j}+\sum_{b\zb}T^b_\zb(x^i,u^a_\za)\p_{u^b_\zb}\quad (\text{resp.,}\quad V|_{\zp_\infty^{-1}(U)}=\sum_{b\zb}V^b_\zb(x^i,u^a_\za)\p_{u^b_\zb})\;.\ee

We are now able to rewrite the definition of a horizontal lift $\cC\zy$ in a useful way. If $\zy\in\Theta(X)$ and $F\in\cF(\zp)$, and if $\zf$ is a local section in $\zG(\zp)$ that is defined around $m\in X$, we get $$(j^\infty\zf)^*((\cC\zy)F)|_m=((\cC\zy)F)|_{j^\infty_m\zf}=\left((\cC\zy)_{j^\infty_m\zf} F\right)|_{j^\infty_m\zf}=\left(((T\zp_\infty)^{-1}\zy_m) F\right)|_{j^\infty_m\zf}=$$ $$\zy_m(F\circ j^\infty\zf)|_m=\zy((j^\infty\zf)^*F)|_m\;.$$ Indeed, the isomorphism $(T\zp_\infty)^{-1}$ sends a partial derivative to the corresponding total derivative. Observe also that, although the function $F\circ j^\infty\zf$ depends on $\zf$, its derivative $\zy_m(F\circ j^\infty\zf)|_m$ depends only on $j^\infty_m\zf\,.$ Hence, the \begin{prop} For any $\zy\in\Theta(X)$, $F\in\cF(\zp)$, and $\zf\in\zG_{\op{loc}}(\zp)$, we have \be\label{LiftVf}(j^\infty\zf)^*((\cC\zy)F)=\zy((j^\infty\zf)^*F)\;.\ee\end{prop} It is clear that we could define the Cartan connection (\ref{CartConnDef}) by means of (\ref{LiftVf}), and that Equation (\ref{LiftVf}) is the generalization of Equation (\ref{ProlPDE}).\medskip

We already explained that $[\cC\Theta(\zp),\cC\Theta(\zp)]\subset \cC\Theta(\zp)$. Moreover, it immediately follows from (\ref{LiftVf}) that $\cC[\zy,\zy']=[\cC\zy,\cC\zy']$. In other words, {\it the integrable Cartan distribution of $J^\infty(\zp)$ induces a flat Cartan connection on $J^\infty(\zp)\to X$}. Further, the increasing sequence $\cC(\Theta(X))\subset\cC\Theta(\zp)\subset\Theta(\zp)$ is a {\it sequence of Lie subalgebras}. Eventually, if $\zS$ is the infinite prolongation of a {\small PDE} on $\zp$, we set $\cC\Theta(\zS):=\zG(\cC(\zS))$, where $\cC(\zS)$ is the Cartan distribution of $\zS$. This $\cF(\zS)$-module is locally generated by the $D_{x^i}|_\zS$. When restricting the lifts $\cC\zy$ to $\zS$, we get a connection $\cC:\Theta(X)\to \cC\Theta(\zS)$, the Cartan connection on $\zS$, which is flat as well. Hence, {\it the integrable Cartan distribution of $\zS$ induces a flat Cartan connection on $\zS\to X,$ which is the restriction of the connection on the infinite jet space.}

\subsubsection{Horizontal differential operators}

Total differential operators ({\small TDO}-s) \be\label{HDiff}\Psi=\sum_\zb \Psi^\zb(x^i,u^a_\za)D_x^\zb\ee are of primary importance in Field Theory. The fundamental property is that {\it {\small TDO}-s act not only on $\cF(\zp)$, but also on $\cF(\zS)$}. This is of course due to the fact that total derivatives restrict to (horizontal) vector fields of $\zS$ (see Equation (\ref{FormIntCons})), and is not true for ordinary differential operators \be\label{Diff}\mathbb{T}=\sum_\zg \mathbb{T}^\zg(x^i,u^a_\za)\;\ldots\circ\p^{\zg_j}_{x^j}\circ\ldots\circ\p^{\zg_{b\zb}}_{u^b_\zb}\circ\ldots\ee of $J^\infty(\zp)$. An interesting subclass of {\small TDO}-s are the lifts \be\label{LiftsDiff}\cC\zD=\sum_\zb \zD^\zb(x^i) D_x^\zb\ee of linear differential operators $\zD=\sum_\zb \zD^\zb(x^i) \p_x^\zb$ acting on $\Ci(X)$. These lifts can be defined exactly as the lifts of base vector fields in (\ref{LiftVf}).\medskip

Note first that differential operators act usually not only on functions $\Ci(X)$ (resp., on $\cF(\zp)$ (functions of $J^\infty(\zp)$)), but act between sections $\zG(\zh_k)$ (locally: $\R^{r_k}$-valued functions on `$X$') of rank $r_k$ vector bundles $\zh_k:E_k\to X$ (resp., between sections $\cF(\zp,\zh_k)=\zG(\zp_\infty^*(\zh_k))$ (locally: $\R^{r_k}$-valued functions on `$J^\infty(\zp)$') of the bullbacks $\zp_\infty^*(\zh_k): \zp_\infty^*(E_k)\to J^\infty(\zp)$ of these bundles). Hence, the

\begin{defi} Let $\zp:E\to X$ and $\zh_k:E_k\to X$ $(k\in\{1,2\})$ be vector bundles. The {\bf lift of a linear differential operator} $\zD:\zG(\zh_1)\to\zG(\zh_2)$ is the linear differential operator $\cC\zD:\cF(\zp,\zh_1)\to \cF(\zp,\zh_2)$ $(\,$of same order$\,)$ defined by \be\label{LiftsDiffGen}(j^\infty\zf)^*((\cC\zD)S)=\zD((j^\infty\zf)^*S)\;,\ee where $S\in\cF(\zp,\zh_1)$ and $\zf\in\zG_{\op{loc}}(\zp)$.\end{defi}

The difference with lifts $$\cC\zy=\sum_j \zy^j(x^i)D_{x^j}\in\cC\Theta(\zp)$$ of vector fields is that the horizontal or $\cC$-vector fields $\cC\Theta(\zp)$ had been defined before the lifts $\cC\zy$. Here, i.e., for lifts $\cC\zD$ of differential operators, we still need to find the proper definition of $\cC$-differential operators $\cC\op{Diff}(\zp_\infty^*(\zh_1),\zp_\infty^*(\zh_2))$. In view of (\ref{HorVecField}), these $\cC$-differential operators should locally be the {\small TDO}-s $$\Psi=\sum_\zb \Psi^\zb(x^i,u^a_\za)D_x^\zb\;,$$ see (\ref{HDiff}). Since, for any $F\in\cF(\zp)$ and any $\zf\in\zG(\zp)$, this model $\cC$-differential operator $\Psi$ satisfies $$(\Psi F)\circ j^\infty\zf=\sum_\zb (\Psi^\zb\circ j^\infty\zf)\; ((D_x^\zb F)\circ j^\infty\zf)=\sum_\zb (\Psi^\zb\circ j^\infty\zf)\; \p_x^\zb (F\circ j^\infty\zf)=:\Psi_\zf(F\circ j^\infty\zf)\;,$$ we have $$(j^\infty\zf)^*(\Psi F)=\Psi_\zf((j^\infty\zf)^*F)\;,$$ where the {\small RHS} $\Psi_\bullet$ (see its definition) is a not necessarily linear differential operator in $\zf\in\zG(\zp)$ with values $\Psi_\zf$ in linear differential operators on $\Ci(X)$. This motivates the

\begin{defi} A linear differential operator $\Psi:\cF(\zp,\zh_1)\to\cF(\zp,\zh_2)$ is a {\bf $\cC$-differential operator} $\Psi\in\cC\op{Diff}(\zp_\infty^*(\zh_1),\zp_\infty^*(\zh_2))$, if, for any $\zf\in\zG(\zp)$, there exists a linear differential operator $\Psi_\zf:\zG(\zh_1)\to \zG(\zh_2)$, such that, for any $S\in\cF(\zp,\zh_1)$, the equality \be\label{CDiff}(j^\infty\zf)^*(\Psi S)=\Psi_\zf((j^\infty\zf)^*S)\ee holds. \end{defi}

This definition captures correctly our intuition of $\cC$-differential operators. Since it is clear from its definition that the lift $\cC$ of differential operators respects composition, we have, locally, $$\sum_\zb \Psi^\zb(x^i,u^a_\za)D_x^\zb=\sum_\zb \Psi^\zb(x^i,u^a_\za)\cC(\p_x^\zb)\;.$$ It can be shown \cite{KV} that this result is global:

\begin{prop} Any $\Psi\in\cC\Diff(\zp_\infty^*(\zh_1),\zp_\infty^*(\zh_2))$ reads \be\label{CDiffStruct} \Psi=\sum_\zb \Psi^\zb\cC\zD_\zb\;,\ee where the sum is finite, where $\Psi^\zb\in\cF(\zp)$, and where $\zD_\zb\in\Diff(\zh_1,\zh_2)$. In other words, $\cC$-differential operators are generated over $\cF(\zp)$ by lifts.\end{prop}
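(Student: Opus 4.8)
The plan is to reduce the statement to a local normal form and then globalize it by producing a finite generating set. The first and most substantial step is to show that, over a chart $(U,x)$ of $X$ trivializing $\zp$, $\zh_1$ and $\zh_2$, every $\cC$-differential operator $\Psi$ of order $\le k$ restricts to one of the shape $\Psi|_U=\sum_{|\zb|\le k}\Psi^\zb\,D_x^\zb$ with (matrix-valued) coefficients $\Psi^\zb\in\cF(\zp)|_U$; equivalently, writing $D_x^\zb=\cC(\p_x^\zb)$ and using that $\cC$ respects composition and turns multiplication by $g\in\Ci(X)$ into multiplication by $\zp_\infty^*g$ (so $\cC(g\,\zD)=(\zp_\infty^*g)\,\cC\zD$), that $\Psi|_U$ is already an $\cF(\zp)|_U$-combination of lifts. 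This is forced by the defining property (\ref{CDiff}): writing $\Psi$ in the general local form (\ref{Diff}) and evaluating both sides of $(j^\infty\zf)^*(\Psi S)=\Psi_\zf((j^\infty\zf)^*S)$ on all local sections $\zf$ and all $S\in\cF(\zp,\zh_1)$, one uses that near a point $m$ the jet $j^\infty_m\zf$ and the jet of $(j^\infty\zf)^*S$ can be prescribed independently and arbitrarily; comparing coefficients then shows that no vertical derivative $\p_{u^b_\zb}$ can occur and that the horizontal part is assembled from total derivatives, with $\Psi_\zf$ read off as the evident (matrix) differential operator between the trivialized fibres.

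For the globalization, fix the (finite) order $k$ of $\Psi$. By (\ref{LDOk2}), $\Diff_k(\zh_1,\zh_2)$ is the module of global sections of a vector bundle of finite rank over $X$, hence finitely generated over $\Ci(X)$ by the Serre--Swan theorem; fix generators $\zD_1,\dots,\zD_N\in\Diff_k(\zh_1,\zh_2)$. I claim $\cC\zD_1,\dots,\cC\zD_N$ generate $\cC\Diff_k(\zp_\infty^*(\zh_1),\zp_\infty^*(\zh_2))$ over $\cF(\zp)$ -- which is exactly the assertion, with a finite sum of length $N$. The claim is local, so I check it on each chart $U$ as above: there $\Psi|_U=\sum_{|\zb|\le k}\Psi^\zb\,\cC(\zD'_\zb)$ with $\zD'_\zb\in\Diff_k(\zh_1,\zh_2)|_U$; writing $\zD'_\zb=\sum_i g^\zb_i\,\zD_i|_U$ with $g^\zb_i\in\Ci(U)$ (possible since the $\zD_i$ generate globally, hence over $U$) and applying $\cC$ yields $\Psi|_U=\sum_{i=1}^N\big(\sum_{\zb}\Psi^\zb\,\zp_\infty^*g^\zb_i\big)\,\cC\zD_i|_U$, i.e.\ $\Psi|_U$ lies in the $\cF(\zp)|_U$-span of $\{\cC\zD_i|_U\}$.

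To pass from local to global coefficients, choose a locally finite cover $\{U_\za\}$ of $X$ by such charts together with a subordinate partition of unity $\{\rho_\za\}$ on $X$, whose pullback along $\zp_\infty$ is a partition of unity on $J^\infty(\zp)$. If $c^\za_i\in\cF(\zp)|_{U_\za}$ are the coefficients found on $U_\za$, then each $\rho_\za c^\za_i$ extends by zero to an element of $\cF(\zp)$, and $\Psi=\sum_\za\rho_\za\,\Psi|_{U_\za}=\sum_\za\sum_i(\rho_\za c^\za_i)\,\cC\zD_i=\sum_{i=1}^N\Psi_i\,\cC\zD_i$ with $\Psi_i:=\sum_\za\rho_\za c^\za_i\in\cF(\zp)$, the sum over $\za$ being locally finite and hence a smooth function; this is a genuinely finite sum, proving the proposition. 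I expect the local normal form of the first paragraph to be the real obstacle: isolating the total-derivative directions in (\ref{Diff}) from the vertical ones using only (\ref{CDiff}) is the one genuinely computational point, whereas everything afterwards is Serre--Swan plus a routine partition-of-unity argument. An alternative to the last two paragraphs is an induction on $k$ via symbols: by the identification $\cC_\zk(\zp)\simeq T_{\zp_\infty(\zk)}X$, the principal symbol of a $\cC$-operator of order $k$ is a section of the $\zp_\infty$-pullback of the symbol bundle of ordinary order-$k$ operators $\zh_1\to\zh_2$, hence an $\cF(\zp)$-combination of (pulled-back) symbols of some $\zD_i\in\Diff_k(\zh_1,\zh_2)$; subtracting the corresponding $\sum_i\Psi_i\,\cC\zD_i$ lowers the order, the base case $k=0$ being that $\cF(\zp)$-linear maps $\zG(\zp_\infty^*(\zh_1))\to\zG(\zp_\infty^*(\zh_2))$ are $\cF(\zp)$-combinations of pullbacks of bundle morphisms $\zh_1\to\zh_2$.
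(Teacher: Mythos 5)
Your argument is essentially correct, but it is worth pointing out that the paper does not actually prove this proposition: it only records the local normal form $\Psi|_U=\sum_\zb\Psi^\zb D_x^\zb=\sum_\zb\Psi^\zb\,\cC(\p_x^\zb)$ (presented as the motivating ``intuition'' behind the definition of a $\cC$-differential operator) and then defers the global statement to Krasil'shchik--Verbovetsky \cite{KV}. What you supply is a self-contained substitute for that citation, and the two halves of your plan are exactly the two things that need doing. Your local step --- extracting the normal form from the defining identity $(j^\infty\zf)^*(\Psi S)=\Psi_\zf((j^\infty\zf)^*S)$ by prescribing $j^\infty_m\zf$ and the jet of $(j^\infty\zf)^*S$ independently (e.g.\ via Borel's lemma for $\zf$ and by taking $S$ depending only on $x$ near the given point) --- correctly shows that $(\Psi S)(\zk)$ depends only on the horizontal jet $\bar\jmath^k_\zk S$, hence that $\Psi$ is locally an $\cF(\zp)$-combination of total derivatives; the paper simply takes this for granted. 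Your globalization via Serre--Swan plus a partition of unity pulled back along $\zp_\infty$ is sound and is what makes the sum genuinely \emph{finite} globally, which is the nontrivial content of the proposition. Two small points deserve a sentence in a written-out version: (i) you should note why the orders of the $\Psi_\zf$ (and hence the horizontal order $k$) are uniformly bounded --- this comes from $\Psi$ being a differential operator of fixed finite order to begin with; and (ii) the restrictions $\zD_i|_U$ generate $\Diff_k(\zh_1,\zh_2)|_U$ over $\Ci(U)$ because global generators of the section module of a vector bundle span every fibre, so a local frame can be extracted near each point and glued --- worth saying explicitly rather than asserting ``generate globally, hence over $U$''. Your alternative induction on the principal symbol is also viable and is closer in spirit to how such statements are proved in \cite{KV}.
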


Moreover, just as {\small TDO}-s, $\cC$-differential operators can be restricted to the infinite prolongation $\zS$ of a {\small PDE}. More precisely \cite{KV},

\begin{cor}\label{RestHor} For any $\cC$-differential operator $\Psi:\cF(\zp, \zh_1)\to \cF(\zp, \zh_2)$ and any infinite prolongation $\zS\subset J^\infty(\zp)$, there is a linear differential operator $\Psi_\zS:\cF(\zS, \zh_1)\to \cF(\zS, \zh_2)$ such that, for every $s\in\cF(\zp,\zh_1)$, we have $\Psi_\zS (s|_\zS)=(\Psi s)|_\zS\,$.\end{cor}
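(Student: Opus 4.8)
The plan is to reduce the claim to a single lift by means of the structure theorem for $\cC$-differential operators, and then to exploit the fact that total derivatives restrict to genuine vector fields on an infinite prolongation.

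First I would invoke Equation (\ref{CDiffStruct}): any $\Psi\in\cC\Diff(\zp_\infty^*(\zh_1),\zp_\infty^*(\zh_2))$ is a finite sum $\Psi=\sum_\zb\Psi^\zb\,\cC\zD_\zb$ with $\Psi^\zb\in\cF(\zp)$ and $\zD_\zb\in\Diff(\zh_1,\zh_2)$. Since the infinite prolongation $\zS$ carries its own flat Cartan connection, which is the restriction of the one on $J^\infty(\zp)$ (Equations (\ref{FormIntCons})--(\ref{FormIntCons2}) and the ensuing discussion), each $\zD_\zb$ also admits a lift $\cC_\zS\zD_\zb:\cF(\zS,\zh_1)\to\cF(\zS,\zh_2)$, which is a linear differential operator by the definition of lifts. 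I would then set
$$\Psi_\zS:=\sum_\zb\big(\Psi^\zb|_\zS\big)\,\cC_\zS\zD_\zb\;,$$
which, being a finite $\cF(\zS)$-combination of linear differential operators on $\zS$, is itself a linear differential operator $\cF(\zS,\zh_1)\to\cF(\zS,\zh_2)$ of order not exceeding that of $\Psi$.

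The heart of the matter is the identity $\Psi_\zS(s|_\zS)=(\Psi s)|_\zS$, which by linearity I would check lift by lift, i.e. reduce to $\cC_\zS\zD_\zb(S|_\zS)=\big((\cC\zD_\zb)S\big)|_\zS$ for $S\in\cF(\zp,\zh_1)$. The key observation is that, by Equation (\ref{FormIntCons2}), the restriction to $\zS$ of a total derivative $D_{x^i}$ is a genuine (horizontal) vector field of $\zS$; hence differentiating a section that vanishes on $\zS$ along such directions again yields a section vanishing on $\zS$, and this persists for the iterated total derivatives $D_x^\zg$ that make up $\cC\zD_\zb$ in local coordinates and after multiplication by $\Psi^\zb\in\cF(\zp)$. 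Thus $\Psi$ maps the submodule of sections vanishing on $\zS$ into itself, so $(\Psi s)|_\zS$ depends only on $s|_\zS$; comparing the two sides in a trivializing chart, where both are given by $\sum_\zg \zD_\zb^\zg(x)\,D_x^\zg$ acting on $S|_\zS$ through the Cartan connection of $\zS$, identifies the result with $\cC_\zS\zD_\zb(S|_\zS)$. Alternatively, this last identification can be read off from the defining property (\ref{LiftsDiffGen}) of lifts, pulled back along the infinite jet $j^\infty\zf$ of a local solution $\zf$ of $\zS^0$, whose graph lies in $\zS$ (Remark \ref{FromIntSol}).

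I expect the main obstacle to be the purely bookkeeping verification that $\Psi_\zS$ is defined on all of $\cF(\zS,\zh_1)$ and not merely on sections of the form $s|_\zS$: one must use that the restriction map $\cF(\zp,\zh_i)\to\cF(\zS,\zh_i)$ is surjective, which follows because $\zp_\infty^*(\zh_i)|_\zS$ is the pullback bundle $(\zp_\infty|_\zS)^*(\zh_i)$ and the whole construction is local over $X$. The genuine geometric input — that the Cartan distribution of $\zS$ is the restriction of that of $J^\infty(\zp)$, hence tangent to $\zS$ — is already supplied by Equations (\ref{FormIntCons})--(\ref{FormIntCons2}), so beyond it the argument is routine.
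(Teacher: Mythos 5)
Your argument is correct and is essentially the one the paper intends: the paper gives no proof of Corollary \ref{RestHor} (it defers to \cite{KV}), but the surrounding text already contains your two ingredients -- total derivatives restrict to vector fields tangent to $\zS$ by (\ref{FormIntCons})--(\ref{FormIntCons2}), and every $\cC$-differential operator is an $\cF(\zp)$-combination of lifts by (\ref{CDiffStruct}) -- so your reduction to a single lift, the preservation of the submodule of sections vanishing on $\zS$, and the surjectivity of restriction together constitute the standard proof. One small caution: your ``alternative'' identification via (\ref{LiftsDiffGen}) pulled back along $j^\infty\zf$ for solutions $\zf$ of $\zS^0$ is less robust than your primary coordinate argument, since in the smooth category the graphs of actual solutions need not fill out $\zS$; stick with the tangency/coordinate route.
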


Finally, we have the important

\begin{cor} There is a canonical $\cF(\zp)$-module isomorphism \be\label{LDOCoeffGen} \cC:\cF(\zp)\0_{\Ci(X)}\Diff(\zh_1,\zh_2)\to \cC\Diff(\zp_\infty^*(\zh_1),\zp_\infty^*(\zh_2))\ee between the linear differential operators with coefficients in the jet space functions and the corresponding $\cC$-differential operators. In particular, in the case of the trivial line bundle $\zh_1=\zh_2$, we get the isomorphism \be\label{LDOCoeffFun} \cC:\cF(\zp)\0_{\Ci(X)}\cD(X)\to \cC\cD(J^\infty(\zp))\;.\ee \end{cor}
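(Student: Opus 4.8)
The plan is to write down the map (\ref{LDOCoeffGen}) explicitly, check that it is a well-defined morphism of $\cF(\zp)$-modules taking values in $\cC\Diff(\zp_\infty^*(\zh_1),\zp_\infty^*(\zh_2))$, and then prove that it is bijective. Surjectivity will be a direct consequence of the structure result (\ref{CDiffStruct}); the real content is injectivity, which I would reduce to a pointwise computation over coordinate charts using the defining identity (\ref{LiftsDiffGen}). The particular case (\ref{LDOCoeffFun}) is then just the specialization to trivial line bundles.

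First I would define $\cC(F\0\zD):=F\cdot(\cC\zD)$ for $F\in\cF(\zp)$ and $\zD\in\Diff(\zh_1,\zh_2)$, extended by $\cF(\zp)$-linearity. To see that this descends to the tensor product over $\Ci(X)$ it suffices to check that $\zD\mapsto\cC\zD$ is $\Ci(X)$-linear, where $\Diff(\zh_1,\zh_2)$ carries its pointwise module structure $(g\cdot\zD)(s)=g\cdot\zD(s)$ and $\cC\Diff$ its $\cF(\zp)$-structure restricted along $\zp_\infty^*$. Both $\cC(g\cdot\zD)$ and $(\zp_\infty^*g)\cdot(\cC\zD)$ satisfy (\ref{LiftsDiffGen}) with the same right-hand family $g\cdot\zD((j^\infty\zf)^*S)$, because $\zp_\infty\circ j^\infty\zf=\id_X$; and since every point of $J^\infty(\zp)$ lies on some graph $\op{gr}(j^\infty\zf)$, a $\cC$-differential operator is determined by (\ref{LiftsDiffGen}), so $\cC(g\cdot\zD)=(\zp_\infty^*g)\cdot(\cC\zD)$. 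The image lies in $\cC\Diff$ since, by (\ref{CDiffStruct}), $\cC\Diff(\zp_\infty^*(\zh_1),\zp_\infty^*(\zh_2))$ is the $\cF(\zp)$-span of lifts and hence an $\cF(\zp)$-module. Thus $\cC$ is a morphism of $\cF(\zp)$-modules, and it is surjective by (\ref{CDiffStruct}).

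The main step is injectivity, which I would treat locally, all constructions restricting compatibly to $\zp_\infty^{-1}(U)=J^\infty(\zp|_U)$ for $U\subset X$ a chart trivializing $\zh_1$ and $\zh_2$; since elements are determined by their restrictions to a cover, it is enough to argue over $U$. There $\Diff(\zh_1,\zh_2)|_U$ is a free $\Ci(U)$-module on the matrix-entry operators $E_{ij}\p_x^\za$, so an element of $(\cF(\zp)\0_{\Ci(X)}\Diff(\zh_1,\zh_2))|_U$ is a finite sum $\sum_\za \Psi_\za\0\p_x^\za$ with $\Psi_\za$ matrices of functions on $\zp_\infty^{-1}(U)$, mapping under $\cC$ to $\sum_\za \Psi_\za\,D_x^\za$. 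Assume this operator vanishes. Fixing $\zk\in\zp_\infty^{-1}(U)$ with $m=\zp_\infty(\zk)$, choosing $\zf\in\zG_{\op{loc}}(\zp)$ with $j^\infty_m\zf=\zk$ (possible since $j^\infty_m$ maps $\zG_{\op{loc}}(\zp)$ onto the fibre of $\zp_\infty$), applying the operator to $S=\zp_\infty^*(g)$ for $g$ an arbitrary local $\R^{r_1}$-valued function near $m$ ($r_1=\op{rk}(\zh_1)$), and pulling back along $j^\infty\zf$, one gets from (\ref{LiftVf}) and $(j^\infty\zf)^*\zp_\infty^*g=g$ the relation
$$\sum_\za \Psi_\za(\zk)\,\p_x^\za g|_m=0\;.$$
Letting $g$ run over the coordinate monomials $(x-m)^\zb e_j$ forces $\Psi_\za(\zk)=0$ for all $\za$; as $\zk$ is arbitrary the element is $0$. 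Hence $\cC$ is injective, and together with surjectivity it is the asserted isomorphism. Specializing to $\zh_1=\zh_2$ the trivial line bundle over $X$ gives $\Diff(\zh_1,\zh_2)=\cD(X)$ and $\cC\Diff(\zp_\infty^*(\zh_1),\zp_\infty^*(\zh_2))=\cC\cD(J^\infty(\zp))$, i.e.\ (\ref{LDOCoeffFun}).

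The point I expect to be delicate is exactly this injectivity: each lifted coordinate operator $D_x^\za=\cC(\p_x^\za)$ is an infinite series in the jet variables, so the $\cF(\zp)$-linear independence of the $D_x^\za$ is not visible directly. The device above avoids the issue by restricting everything to a graph $\op{gr}(j^\infty\zf)$, on which a $\cC$-differential operator becomes a genuine differential operator on $X$ with coefficients $(j^\infty\zf)^*\Psi_\za$; linear independence then reduces to the classical fact for operators on $\Ci(X)$, read off at the point $m$. The only other care needed is the bookkeeping of the pointwise versus source/target $\Ci(X)$-module structures on $\Diff(\zh_1,\zh_2)$ when verifying that $\cC$ descends to the tensor product, which the computation in the second paragraph handles.
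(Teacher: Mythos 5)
Your proposal is correct and follows the same overall route as the paper: the map is $F\0\zD\mapsto F\,\cC\zD$, surjectivity is read off from the structure result (\ref{CDiffStruct}), and injectivity is obtained by pulling back along the graphs $\op{gr}(j^\infty\zf)$, which cover $J^\infty(\zp)$. The differences are in the two places where the paper is terse. First, you actually verify that the assignment is balanced over $\Ci(X)$ (i.e.\ $\cC(g\cdot\zD)=(\zp_\infty^*g)\cdot\cC\zD$), which the paper dismisses as ``obviously well-defined''; your argument via the characterization (\ref{LiftsDiffGen}) is the right one. Second, for injectivity the paper only treats a decomposable tensor $F\0\zD$: it pulls back to get $(F\circ j^\infty\zf)\,\zD s=0$ for all $s,\zf$ and concludes via the action (\ref{LDOCoeffGenAct}), implicitly using that this action identifies the tensor product with a module of operators $\zG(\zh_1)\to\zG(\zp_\infty^*(\zh_2))$. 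Your version localizes to a trivializing chart, expands a general element over the free basis $E_{ij}\p_x^\za$ of $\Diff(\zh_1,\zh_2)|_U$, and tests against jets of coordinate monomials at an arbitrary point $\zk=j^\infty_m\zf$; this handles arbitrary finite sums of tensors and makes the $\cF(\zp)$-linear independence of the lifts $D_x^\za$ explicit, at the modest cost of the (standard, but unstated) fact that restriction to a cover is injective on these tensor products -- which holds because $\Diff_k(\zh_1,\zh_2)$ is finitely generated projective over $\Ci(X)$. On balance your injectivity argument is the more complete of the two.
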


\begin{proof} Observe first that the action of a differential operator $F\0\zD$, with $F\in\cF(\zp)$ and $\zD\in\cD(X)$, on a function $f\in\Ci(X)$ is naturally defined by $$(F\0\zD)(f)=F\;((\zD f)\circ\zp_\infty)\;.$$ The action $(F\0\zD)(s)$, $\zD\in\Diff(\zh_1,\zh_2)$ and $s\in\zG(\zh_1)$, is defined similarly: \be\label{LDOCoeffGenAct} (F\0\zD)(s)=F\;((\zD s)\circ\zp_\infty)\;.\ee The map \be\label{IsoDiffCoeffCDiff} \cC:\cF(\zp)\0_{\Ci(X)}\Diff(\zh_1,\zh_2)\ni F\0\zD\mapsto F\;\cC\zD\in\cC\Diff(\zp_\infty^*(\zh_1),\zp_\infty^*(\zh_2))\;,\ee is obviously well-defined and $\cF(\zp)$-linear. To prove injectivity, assume that $F\;(\cC\zD)(S)$ $=0$, for all $S\in\zG(\zp_\infty^*(\zh_1))$, in particular, for all $S=s\circ\zp_\infty$, $s\in\zG(\zh_1)$. It follows from (\ref{LiftsDiffGen}) that $$(F\circ j^\infty\zf)\; \zD s=(F\;((\zD s)\circ\zp_\infty))\circ j^\infty\zf=0\;,$$ for all $s,\zf$. Eventually, (\ref{LDOCoeffGenAct}) allows to conclude that $F\0\zD=0\,.$ As for surjectivity, recall that any $\cC$-differential operator $\Psi$ reads $\sum_\zb\Psi^\zb\cC\zD_\zb$, and note that $\sum_\zb\Psi^\zb\0\zD_\zb$ is a preimage of $\Psi$.
\end{proof}

Let us summarize in coordinate language what we achieved so far. Consider a {\small PDE} $$\psi^b(x^i,\p_x^\za\zf^a)\equiv 0\;,\forall b\;,$$ whose {\small LHS} sends sections $\zf=(\zf^a(x))_a\in\zG(\zp)$ to sections $\psi=(\psi^b(x))_b:=(\psi^b(x^i,\p_x^\za\zf^a))_b\in\zG(\zh_1)$.  We take into account the linear differential consequences $$\zD\;\psi^b(x^i,\p_x^\za\zf^a):=\sum_\zb M^{c}_{\zb b}(x)\p_x^\zb\;\,\psi^b(x^i,\p_x^\za\zf^a)\equiv 0\;,\forall c\;$$ of this equation, where $\zD\in\Diff(\zh_1,\zh_2)$. The latter condition can be rewritten in the form $$(\cC\zD)\;\psi^b(x^i,u^a_\za)\;|_{j_x^\infty\zf}=\sum_\zb M^c_{\zb b}(x)D_x^\zb\;\,\psi^b(x^i,u^a_\za)\;|_{j_x^\infty\zf}\equiv 0\;,\forall c\;,$$ thus leading to a $\cC$-differential operator $\cC\zD\in\cC\Diff(\zp_\infty^*(\zh_1),\zp_\infty^*(\zh_2))$. Just as the value $$\psi^b(x^i,\p_x^\za\zf^a)\;|_m$$ at $m\in X$ (in fact we mean here the coordinates of $m$; the same notational abuse will be tolerated in the sequel) of the image of $\zf=(\zf^a(x))_a\in\zG(\zp)$ by a differential operator in $\op{DO}_k(\zp,\zh_1)$ only depends on the values $\p^\za_x\zf^a|_m$ of the coefficients of the `Taylor expansion' of $\zf$ at $m$ up to order $k$, the value $$\sum_\zb N^c_{\zb b}(x^i,u^a_\za) D_x^\zb\;\,\psi^b(x^i,u^a_\za)\;|_\zk$$ at $\zk\in J^\infty(\zp)$ of the image of $\psi=(\psi^b(x^i,u^a_\za))_b\in\zG(\zp_\infty^*(\zh_1))$ by a $\cC$-differential operator in $\cC\Diff_k(\zp_\infty^*(\zh_1),\zp_\infty^*(\zh_2))$ only depends on the values $D_x^\zb\,\psi^b(x^i,u^a_\za)|_\zk$ of the total or horizontal derivatives of $\psi$ at $\zk$ up to order $k$. In fact, the $\cC$-differential calculus is similar to the ordinary differential calculus. For $k\in\N\cup\{\infty\}$, the {\bf horizontal $k$-jet} $\bar \jmath^k_\zk S$ at $\zk\in J^\infty(\zp)$ of a local section $S\in\zG(\zp_\infty^*(\zh_1))$ that is defined around $\zk$ is the equivalence class of all such local sections, whose coordinate forms in a trivializing chart $(x^i,u^a_\za,v^b)$ around $\zk$ coincide at $\zk$, together with their total derivatives at $\zk$ up to order $k$.

\begin{rem}{\em In the following, if $\zp:E\to X$ and $\zr:F\to X$ are two vector bundles, we set $R:=\zp_\infty^*(\zr)$ and ${\cal R}:=\zG(R)=\zG(\zp_\infty^*(\zr))$.}\end{rem}

The set $$\bar J^k({H}_1)=\{\bar \jmath^k_\zk S:\zk\in J^\infty(\zp), S\in{\cal H}_1\}$$ is a vector bundle $H_{1,k}:\bar J^k({H}_1)\to J^\infty(\zp)$, called the {\bf horizontal $k$-jet bundle}. A trivializing chart $(x^i,u^a_\za,v^b)$ of $H_1$ induces a trivializing chart $(x^i,u^a_\za, v^b_\zb)$ of $H_{1,k}$ given by \be\label{HorJetBdCoord} x^{i}(\bar\jmath^k_\zk S)=x^i(\zk), u^a_\za(\bar\jmath^k_\zk S)=u^a_\za(\zk), v^b_\zb(\bar\jmath^k_\zk S)=D_x^\zb S^b|_\zk\;.\ee As already suggested above here, the $\cC$-differential or horizontal differential operators $$\Psi\in\cC\Diff_k(H_1,H_2)$$ are those $$\Psi\in\h_\R({\cal H}_1,{\cal H}_2)$$ that factor through the horizontal $k$-jet bundle $\bar J^k({H}_1)$, i.e., that read $\Psi=\psi\circ \bar\jmath^k$, for some (and thus unique) vector bundle map $$\psi\in{\tt VB}(H_{1,k}\,,H_2)\simeq \h_{\cF(\zp)}(\zG(\bar J^k({H}_1)),{\cal H}_2)\;.$$ Actually, the whole theory of jet bundles can be transferred to horizontal jet bundles \cite{Ver}. Indeed, it follows from what has been said that, in the coordinate setting, horizontal jet bundles are just jet bundles with extra coordinates $u^a_\za$ in the base.

\subsection{Classical and higher symmetries I and II}

\subsubsection{Classical symmetries I}

The concept of symmetry is of fundamental importance in many fields of Science and deserves special attention. The notion is quite straightforward -- at least in elementary situations -- . For instance, when thinking about an axial symmetry of a plane domain $S$, we get a bijection $p$ from the plane to itself, such that $p(S)=S$. Similarly, a {\bf symmetry of an equation} $\zS^0\subset J^k(\zp)$ should be a fiber bundle automorphism (or, just a diffeomorphism) $\psi$ of $J^k(\zp)$ such that \be\label{Sym1}\psi(\zS^0)=\zS^0\;.\ee However, since the essential structure of $J^k(\zp)$ is the Cartan distribution $\cC^k$ (i.e., the infinitesimal object that encodes jet prolongations of sections), it seems natural to ask that a symmetry respect the Cartan distribution (or, better, that its tangent map does).\medskip

In the following, we focus on automorphisms of $J^k(\zp)$ that respect $\cC^k$, thus omitting first Condition (\ref{Sym1}). We refer to such automorphisms as {\bf Lie automorphisms} of $\zp_k$. In particular, we may ask whether it is possible to build a Lie automorphism of $\zp_k$ as a prolongation of an automorphism of $\zp$.

\subsubsection{Prolongations of diffeomorphisms and vector fields}

It is easily seen that, if $\Psi=(\psi_0,\psi)$ is a {\it fiber bundle automorphism} of $\zp:E\to X$, we can prolong it to a fiber bundle automorphism $j^\ell\Psi:=(\psi_0,j^\ell\psi)$ of $\zp_\ell:J^\ell(\zp)\to X$. It actually suffices to recall that $\psi\zf\psi_0^{-1}\in\zG(\zp)$, for any $\zf\in\zG(\zp)$ (as elsewhere in this text, we do not insist here on the possibility that $\zf$ might be defined only locally), and to consider the well-defined fiber bundle automorphism $$j^\ell\psi:J^\ell(\zp)\ni j^\ell_m\zf\mapsto j^\ell_{\psi_0(m)}(\psi\zf\psi_0^{-1})\in J^\ell(\zp)\;.$$ It can easily be checked that the lift $j^\ell\Psi$ is a Lie automorphism, i.e., that, for any $\zk_\ell\in J^\ell(\zp)$, the inclusion \be\label{RespCart}(T_{\zk_\ell}j^\ell\psi)(\cC^\ell_{\zk_\ell})\subset \cC^\ell_{j^\ell_{\zk_\ell}\psi}\;\ee holds. 
Let us still mention that the prolongation $j^\ell\psi: J^\ell(\zp)\to J^\ell(\zp)$ of $\psi:J^0(\zp)\to J^0(\zp)$ is really a lifting, in the sense that $\zp_{0\,\ell}\circ j^\ell\psi=\psi\circ\zp_{0\,\ell}\,.$\medskip

Instead of considering finite automorphisms or diffeomorphisms, we can take an interest in infinitesimal ones, i.e, in vector fields. Note that a vector field $\Xi\in\Theta(\zp_0)$, i.e., a field of $\zp:E\to X$ (we avoid writing $\Theta(\zp)$, since this notation is used instead of the more precise $\Theta(\zp_\infty)$), is a $\zp$-{\it projectable vector field} if and only if $T\zp\,\Xi_{e}=\xi_{\zp(e)},$ for all $e\in E$, i.e., if and only if there is a vector field $\xi\in\Theta(X)$ that is $\zp$-related to $\Xi$. It is well-known that this means that $\zp$ intertwines the flows $\psi_t^\Xi$ and $\psi_t^\xi$, i.e., that $\zp\circ \psi^\Xi_t=\psi_t^\xi\circ \zp$ (assume for simplicity that the flows are globally defined). In other words, $\Psi^\Xi_t=(\psi^\xi_t,\psi^\Xi_t)$ is a 1-parameter group of fiber bundle isomorphisms of $\zp:E\to X$, and it can thus be prolonged to a 1-parameter group of Lie isomorphisms $j^\ell\Psi^\Xi_t=(\psi_t^\xi,j^\ell\psi_t^\Xi)$ of $\zp_\ell:J^\ell(\zp)\to X$. The latter implements a vector field $j^\ell\Xi\in\Theta(\zp_\ell)$ -- the $\ell$-jet prolongation of the projectable vector field $\Xi\in\Theta(\zp_0)$ -- . In other words, the lift $j^\ell\Xi$ is given by $$(j^\ell\Xi)_{j^\ell_m\zf}= d_t|_{t=0}j^\ell_{\psi^\xi_t(m)}(\psi_t^\Xi\zf\psi_{-t}^\xi)\;.$$ The flow of the prolongation $j^\ell\Xi$ of $\Xi$ is thus the prolongation $j^\ell\psi_t^\Xi$ of the flow of $\Xi$, which is made of Lie isomorphisms. The explicit coordinate computation of the lift of the projectable field \be\label{ProjVf}\Xi=\sum_j A^j(x^i)\p_{x^j}+\sum_b B^b(x^i,u^a)\p_{u^b}=\sum_j A^j(\p_{x^j}+u^b_j\p_{u^b})+\sum_b (B^b-A^ju^b_j)\p_{u^b}\ee leads to \be\label{ProlVf}j^\ell\Xi=\sum_j A^jD_{x^j}^{\le \ell-1}+\sum_b\sum_{|\zb|\le \ell-1} D_x^\zb(B^b-A^ju^b_j)\p_{u^b_\zb}\;\ee \cite{Kru}. Note that the first term (resp., second term) of the lift is obtained by extending the total derivatives $D_{x^j}^{\le 0}$ in (\ref{ProjVf}) to $D_{x^j}^{\le \ell-1}$ (resp., by adding new terms whose coefficients are the corresponding total derivatives of the coefficients in (\ref{ProjVf})).\medskip

Hence, any {\it fiber bundle automorphism} of $\zp$ (resp., any {\it projectable vector field} of $\zp$) can be prolonged to a fiber bundle automorphism of $\zp_\ell$ (resp., a vector field of $\zp_\ell$) that respects (whose flow respects) the Cartan distribution $\cC^\ell$. The result can be generalized to arbitrary diffeomorphisms $\psi:J^0(\zp)\to J^0(\zp)$ (resp., vector fields $\Xi\in\Theta(\zp_0)$). More precisely, any diffeomorphism (resp., vector field) of $\zp$ can be lifted to a diffeomorphism (resp., vector field) of $\zp_\ell$ that (whose flow) respects the Cartan distribution. We refer to such distribution respecting diffeomorphisms and vector fields as {\bf Lie transformations} and {\bf Lie fields}, respectively (in the case $\ell=0$, any vector in $T_e E$ is tangent to a section, so $\cC^0_e=T_eE$, and Lie transformations (resp., Lie fields) are just diffeomorphisms (resp., vector fields)). The lift to $\zp_\ell$ of an arbitrary vector field of $\zp_0$, i.e., of \be\label{Vf}\Xi=\sum_j A^j(x^i,u^a)\p_{x^j}+\sum_b B^b(x^i,u^a)\p_{u^b}=\sum_j A^j(\p_{x^j}+u^b_j\p_{u^b})+\sum_b (B^b-A^ju^b_j)\p_{u^b}\;,\ee is locally given by the same formula (\ref{ProlVf}) as before \cite{Vit} (any Lie transformation (resp., Lie field) of $\zp_k$ can be lifted to a Lie transformation (resp., Lie field) of any $\zp_{k+\ell}$).
Conversely, any Lie transformation (resp., any Lie field) of $\zp_\ell$ is the lift of a diffeomorphism (resp., a vector field) of $\zp$, at least if $\op{rk}(\zp)>1$, \cite{KV}, \cite{Vit}.

\subsubsection{Classical symmetries II}

In view of what has been said above, a {\bf symmetry of an equation} $\zS^0\subset J^k(\zp)$ is a Lie transformation $\psi$ of $J^k(\zp)$ such that $\psi(\zS^0)=\zS^0$. As also mentioned before, we do in this text usually not insist on possible local aspects. For instance, we could consider here local symmetries of $\zS^0\subset J^k(\zp)$, i.e., Lie transformations $\psi$ of an open subset ${\cal U}\subset J^k(\zp)$ such that $\psi({\cal U}\cap \zS^0)={\cal U}\cap \zS^0$. The notion of {\bf infinitesimal symmetry of an equation} $\zS^0\subset J^k(\zp)$ is now clear as well. It is a Lie field $\zt$ of $J^k(\zp)$ that is tangent to $\zS^0$, i.e., such that $\zt_{\zk}\in T_\zk\zS^0,$ for all $\zk\in{\zS^0}$.

\subsubsection{Higher symmetries I}

Let us recall that we systematically assume that the considered equations are formally integrable (see Remark \ref{FromIntSol} and Subsection \ref{CC}). Just as a Lie transformation (resp., a Lie field) of $J^k(\zp)$ lifts to a Lie transformation (resp., a Lie field) of any $J^{k+\ell}(\zp)$, a symmetry (resp., an infinitesimal symmetry) of $\zS^0\subset J^k(\zp)$ lifts to a symmetry (resp., an infinitesimal symmetry) of any $\zS^\ell\subset J^{k+\ell}(\zp)$ (the converse is true as well) \cite[Prop. 3.23]{KV}. Hence, a symmetry (resp., an infinitesimal symmetry) of $\zS^0$ induces a symmetry (resp., an infinitesimal symmetry) of $\zS:=\zS^\infty$. To avoid diffeomorphisms of infinite dimensional spaces, {\it we consider in the following only infinitesimal symmetries and call them just symmetries}. Further, we will study not only the symmetries of $\zS$ that are implemented by symmetries of $\zS^0$ (such induced symmetries are Lie fields, i.e., the derivatives of the diffeomorphisms obtained from their flows respect the Cartan distribution), but `all symmetries' of $\zS$ (such `higher symmetries' will respect the Cartan distribution in a generalized sense).\medskip

Recall that a symmetry of $\zS=\zS^\infty$ is a vector field $T\in\Theta(\zp)$ of $J^\infty(\zp)$ that is tangent to $\zS$ and that is Lie. A {\bf higher symmetry of $\zS$} (or simply a symmetry of $\zS$ whenever no confusion is possible) is a vector field $T\in\Theta(\zp)$ that is tangent to $\zS$ and respects the Cartan distribution $\cC=\cC(\zp)$ of $J^\infty(\zp)$, not in the preceding sense that the derivatives of its flow respect $\cC$, but in the sense that \be\label{GenSym}[T,\cC\Theta(\zp)]\subset\cC\Theta(\zp)\;,\ee where $\cC\Theta(\zp)=\zG(\cC(\zp))$ is the space of Cartan fields.

\subsubsection{Symmetries of the Cartan distribution}

Just as above, where we omitted Condition (\ref{Sym1}), we will forget now temporarily the tangency condition, and study infinite jet space vector fields that satisfy the Cartan condition (\ref{GenSym}). These fields will be called in the following {\bf symmetries of $\cC$}. In view of the Jacobi identity, the space $\Theta_\cC(\zp)$ of symmetries of $\cC$ is a Lie $\R$-subalgebra of $\Theta(\zp)$. Since $\cC$ is integrable, Cartan fields $\cC\Theta(\zp)$ are {\bf trivial symmetries of $\cC$}, and, by definition, they thus form a Lie ideal of $\Theta_\cC(\zp)$. The quotient $$\op{sym}(\zp):=\Theta_\cC(\zp)/\cC\Theta(\zp)$$ is the Lie algebra of {\bf proper symmetries of $\cC$}. In view of the Cartan connection (\ref{Dist1}), we have the direct sum decomposition \be\Theta_\cC(\zp)=\cC\Theta(\zp)\oplus \op{E\Theta}(\zp)\;,\ee where \be\op{E\Theta}(\zp)=\{T\in\Theta^{v}(\zp):[T,\cC\Theta(\zp)]\subset\cC\Theta(\zp)\}\;.\ee It follows that \be \op{sym}(\zp)\simeq\op{E\Theta}(\zp)\;,\ee i.e., that any proper symmetry of $\cC$ is naturally represented by a vertical symmetry, or, still, by an {\bf evolutionary vector field}.\medskip

Although it is not difficult, we will not explain here that for any $V\in\Theta^v(\zp)$ the symmetry or the evolutionary condition is equivalent to $$[V,\cC(\Theta(X))]=0\;.$$ Since the lifts $\cC(\Theta(X))$ are locally generated over $\Ci(X)$ by the total derivatives and since the local form of a vertical vector field is completely defined by its values on the coordinate functions $u^a_\za$, this condition reads locally $[V,D_{x^i}]=0$, or, still, $[V,D_{x^i}](u^a_\za)=0$. Noticing that $D_{x^i}u^a_{\za}=u^a_{i\za}$, we finally obtain $$V^a_{i\za}=V(u^a_{i\za})=V(D_{x^i}u^a_\za)=D_{x^i}(V(u^a_\za))=D_{x^i}V^a_\za\;.$$ In other words, {\it $V\in\Theta^v(\zp)$ is a local symmetry or evolutionary field if and only if its coefficients satisfy} \be V^a_{i\za}=D_{x^i}V^a_\za\;.\label{EVf}\ee This shows that evolutionary vector fields $V\in\op{E\Theta}(\zp)$ are completely determined (locally, by their coefficients $V^a$, i.e., globally,) by their restriction $V|_{\cF_0}\in\op{Der}^v(\cF_0,\cF)$.\medskip

More precisely, there is a 1:1 correspondence between $\op{E\Theta}(\zp)$ and $\op{Der}^v(\cF_0,\cF)$. It is worth to further elaborate on this idea. Let ${\frak X}\in\op{Der}(\cF_0,\cF)$. Locally, this is a vector field $\frak X$ of $J^0(\zp)$ with coefficients in functions of $J^\infty(\zp)$: \be\label{GenVf}{\frak X}=\sum_j A^j(x^i,u^a_\za)\p_{x^j}+\sum_b B^b(x^i,u^a_\za)\p_{u^b}=\sum_j A^j(\p_{x^j}+u^b_j\p_{u^b})+\sum_b (B^b-A^ju^b_j)\p_{u^b}\;.\ee Such a field can be prolonged to a field of $J^\infty(\zp)$ in the way specified by formula (\ref{ProlVf}), exactly as in the particular cases (\ref{ProjVf}) and (\ref{Vf}) -- except that $\ell=\infty$ here. The prolonged vector field (\ref{ProlVf}) is the sum of a term in $\cC\Theta(\zp)$ (horizontal fields are locally generated over $\cF$ by the total derivatives) and a term in $\op{E\Theta}(\zp)$ (see Equation (\ref{EVf})). In particular, if we start from ${\cal X}\in\op{Der}^v(\cF_0,\cF)$, i.e., locally, from \be\label{VVf}{\cal X}=\sum_b B^b(x^i,u^a_\za)\,\p_{u^b}\;,\ee we obtain the {\it evolutionary vector field} \be\label{STh}\zd_{\cal X}=\sum_{b,\zb} D_x^\zb B^b\;\p_{u_\zb^b}\in\op{E\Theta}(\zp)\;.\ee Note that a local vertical derivation (\ref{VVf}) is the same as a local section $B=(B^b(x^i,u^a_\za))_b$ of $\zp^*_\infty(\zp)$. The point is that this {\it isomorphism} \be\label{VDerDO}\op{Der}^v(\cF_0,\cF)\simeq\zG(\zp^*_\infty(\zp))=\cF(\zp,\zp)=:\varkappa(\zp)\ee holds globally and that the local evolutionary fields (\ref{STh}), computed from the global ${\cal X}\in\op{Der}^v(\cF_0,\cF)$, can be glued to provide a global evolutionary field $\zd_{\cal X}\in\op{E\Theta}(\zp)$.\medskip

It is noteworthy that the {\it 1:1 correspondence} \be\label{Funda1:1}  \zd:\varkappa(\zp)\ni {\cal X}\mapsto \zd_{\cal X}\in\op{E\Theta}(\zp)\;\ee allows to push the $\cF(\zp)$-module structure of $\varkappa(\zp)$ forward to $\op{E\Theta}(\zp)$ (this multiplication is different (!) from that of vector fields of $\zp_\infty$ by functions of $\zp_\infty$) and to pull the Lie algebra structure of $\op{E\Theta}(\zp)$ back to $\varkappa(\zp)$.\medskip

Eventually, the 1:1 correspondence $\zd$ allows introducing a linearization of a not necessarily linear differential operator $D\in\op{DO}(\zp,\zp')\simeq \psi_D\in\cF(\zp,\zp')$ between two vector bundles $\zp$ and $\zp'\,$. For any ${\cal X}\in\varkappa(\zp)$, one can extend the action on $\cF(\zp)$ of $\zd_{\cal X}\in\op{E\Theta}(\zp)$ to an action on $\cF(\zp,\zp')$. Locally, this claim is obvious -- the point is that the extended action is actually a global one -- . The operator \be\label{Lin}\ell_D:\varkappa(\zp)\ni {\cal X}\mapsto \ell_D{\cal X}:=\zd_{\cal X}\psi_D\in\cF(\zp,\zp')\;\ee is the {\bf universal linearization operator} of $D$. In view of (\ref{STh}), we have \be\ell_D{\cal X}=\zd_{\cal X}\psi_D=\sum_{b,\zb} \p_{u^b_\zb}\psi_D \,D_x^\zb{\cal X}^b\;.\label{LinArg}\ee In fact, the partial derivatives $\p_{u^b_\zb}$ ($b\in\{1,\ldots,\op{rk}(\zp)\}$) act on the components $\psi_D^a$ ($a\in\{1,\ldots,\op{rk}(\zp')\}$) of $\psi_D$. In other words, the {\it coordinate expression of the linearization operator} is \be\label{LinLoc}\ell_D= \sum_\zb\lp \p_{u^b_\zb}\psi_D^a\rp_{{a,b}}D_x^\zb\;,\ee where $a$ (resp., $b$) refers to the row (resp., column). The linearization of any (not necessarily linear) differential operator $$D\in\op{DO}(\zp,\zp')$$ is a {\it $(\,$linear$\,)$ horizontal differential operator} \be\ell_D\in\cC\op{Diff}(\zp_\infty^*(\zp),\zp_\infty^*(\zp'))\;.\ee Observe also that the coefficients $\p_{u^b_\zb}\psi_D$ of the linearization of $D\simeq\psi_D$ or of $\ker\psi_D=\zS^0$ are coefficients of the equation of the tangent space of $\zS^0$.

\subsubsection{Higher symmetries II}

To upgrade an evolutionary vector field $V\in\op{E\Theta}(\zp)$ of $J^\infty(\zp)$ to a {\bf symmetry of $\zS^0$} (a proper generalized symmetry of the equation $\zS^0$), we must still add the requirement that $V_\zk\in T_\zk J^\infty(\zp)$ be tangent to the prolongation $\zS\subset J^\infty(\zp)$ when $\zk\in\zS$: $V_\zk\in T_\zk\zS$, for all $\zk\in\zS$. In other words, the considered evolutionary field is a symmetry of the equation $\zS^0$ if and only if it acts on functions $\cF(\zS)$ of the infinite prolongation $\zS$ of $\zS^0$. The space of all symmetries of $\zS^0$ is a Lie $\R$-algebra that we denote by $\op{E\Theta}(\zS)$.\medskip

To finish this review of symmetries, we ask what classical and higher symmetries mean locally, in coordinates, in the case the considered formally integrable equation $\zS^0$ is implemented by a differential operator $D\simeq\psi_D$, i.e., $\zS^0=\ker\psi_D\,.$\medskip

Let first $\zt\in\Theta(\zp_k)$ be a Lie field that is tangent to $\zS^0$. This Lie field is (if $\op{rk}(\zp)>1$) the lift $\zt=j^k\Xi$ of a vector field $\Xi\in\Theta(\zp_0)$. Further, the tangency property means locally that, for any $\zk_k\in\zS^0$, we have \be\label{ClasSymPhys}L_{j^k\Xi}\psi_D|_{\zk_k}\simeq\frac{1}{h}\left(\psi_D(\zk_k+h\zt_{\zk_k})-\psi_D(\zk_k)\right)=0\;.\ee This is exactly the concept of infinitesimal symmetry used in Physics (it means that the infinitesimal transformation induced by $\Xi$ transforms a solution into a solution up to terms of order $\ge 2$ in the infinitesimal parameter).\medskip

Consider now ${\cal X}\in\varkappa(\zp)$, as well as the corresponding proper symmetry $\zd_{\cal X}\in\op{E\Theta}(\zp)$ of $\cC$. When remembering that this field is a symmetry $\zd_{\cal X}\in\op{E\Theta}(\zS)$ of $\zS^0$ if and only if it acts on $\cF(\zS)$, we conclude rather easily that the $\zS^0$-symmetry condition for $\zd_{\cal X}$ is \be\label{HighSymPhys}(\zd_{\cal X}\psi_D)|_\zS=0\;,\ee or, still, \be\label{zSSymCondEVf}(\ell_D{\cal X})|_\zS=\ell_D|_\zS{\cal X}|_\zS=0\;,\ee since $\ell_D$ is a horizontal differential operator and can thus be restricted. In other words, if we denote the restriction of the linearization $\ell_D$ (resp., of the section $\cal X$) by $\ell_\zS$ (resp., ${\cal X}_\zS$), we get the

\begin{prop} Let $\zS^0$ be a formally integrable {\small PDE} in $\zp$, implemented by a differential operator and with infinite prolongation $\zS$. An evolutionary vector field $\zd_{\cal X}$ generated by ${\cal X}\in\varkappa(\zp)$ is a symmetry $\zd_{\cal X}\in\op{E\Theta}(\zS)$ of $\zS^0$ under the necessary and sufficient condition that \be\label{zSSymCondEVf2}{\cal X}_\zS\in\ker\,\ell_\zS\;.\ee\end{prop}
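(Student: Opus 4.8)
The field $\zd_{\cal X}$ generated by ${\cal X}\in\varkappa(\zp)$ is, by construction (see Equation (\ref{STh}) and the discussion of (\ref{Funda1:1})), already a proper symmetry of the Cartan distribution $\cC=\cC(\zp)$. Hence $\zd_{\cal X}$ belongs to $\op{E\Theta}(\zS)$ if and only if it is in addition tangent to $\zS\subset J^\infty(\zp)$, i.e.\ if and only if it descends to a derivation of the quotient algebra $\cF(\zS)=\cF/I(\zS)$, i.e.\ if and only if it preserves the ideal $I(\zS)$ of functions vanishing on $\zS$. The plan is to rewrite this ideal-preservation condition as the single linearization condition ${\cal X}_\zS\in\ker\ell_\zS$, the equivalences being reversible so that necessity and sufficiency come out together.

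\emph{Step 1: describe $I(\zS)$.} Since $\zS^0$ is implemented by $D\simeq\psi_D$ and is formally integrable, $\zS=\ker\psi^\infty_D$ is cut out by the equations $D^\za_x\psi^\zl_D=0$ (see (\ref{ProlEq}) and (\ref{ProlAlgE})). Using the same local normal-form/regularity argument as in Subsection \ref{KRS} (the ``fundamental consequence of regularity''), a function $F\in\cF$ vanishes on $\zS$ if and only if it is a finite $\cF$-combination $F=\sum F_{\za,\zl}\,D^\za_x\psi^\zl_D$. Thus $I(\zS)$ is the $\cF$-submodule generated by all $D^\za_x\psi^\zl_D$. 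Moreover $I(\zS)$ is a \emph{differential} ideal: by (\ref{FormIntCons2}) we have $\cC(\zS)=\cC(\zp)|_\zS$, so each total derivative $D_{x^i}$ is tangent to $\zS$, whence $D_{x^i}I(\zS)\subset I(\zS)$.

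\emph{Step 2: compute $\zd_{\cal X}(I(\zS))$.} Because $\zd_{\cal X}$ is evolutionary, $[\zd_{\cal X},D_{x^i}]=0$ (this is (\ref{EVf}) rewritten), hence $[\zd_{\cal X},D^\za_x]=0$ for every multi-index $\za$. Applying $\zd_{\cal X}$ to a generator therefore gives $\zd_{\cal X}(D^\za_x\psi^\zl_D)=D^\za_x(\zd_{\cal X}\psi^\zl_D)=D^\za_x\big((\ell_D{\cal X})^\zl\big)$, using the definition (\ref{Lin}) of the universal linearization. By the Leibniz rule for the derivation $\zd_{\cal X}$ applied to an arbitrary element $\sum F_{\za,\zl}D^\za_x\psi^\zl_D$ of $I(\zS)$, the image lies in $I(\zS)$ for all such elements if and only if $D^\za_x\big((\ell_D{\cal X})^\zl\big)\in I(\zS)$ for all $\za,\zl$. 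Since $I(\zS)$ is a differential ideal (Step 1), this holds for all $\za$ if and only if it holds for $\za=0$, i.e.\ iff $(\ell_D{\cal X})^\zl\in I(\zS)$ for all $\zl$, i.e.\ iff $(\ell_D{\cal X})|_\zS=0$ -- which is precisely Equation (\ref{HighSymPhys}). Finally, $\ell_D$ is a horizontal ($\cC$-)differential operator, $\ell_D\in\cC\op{Diff}(\zp^*_\infty(\zp),\zp^*_\infty(\zp'))$, so by Corollary \ref{RestHor} it restricts to $\ell_\zS$ with $(\ell_D{\cal X})|_\zS=\ell_\zS({\cal X}|_\zS)=\ell_\zS{\cal X}_\zS$; see (\ref{zSSymCondEVf}). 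Hence $\zd_{\cal X}\in\op{E\Theta}(\zS)$ iff ${\cal X}_\zS\in\ker\ell_\zS$, as claimed.

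\emph{Expected main obstacle.} Essentially all of the above is routine once Step 1 is in place; the only delicate point is the presentation of $I(\zS)$, namely that every function vanishing on $\zS$ is a \emph{finite $\cF$-linear combination} of the $D^\za_x\psi^\zl_D$ (and that $I(\zS)$ is stable under total derivatives). This rests on formal integrability together with the local triviality hypothesis in the spirit of Subsection \ref{KRS}, and one must check that ``vanishing on $\zS$'' genuinely forces such a finite presentation -- i.e.\ that the relevant flat-function pathologies do not occur -- which is where the smooth local coordinate normalization of the equations is used. The commutation $[\zd_{\cal X},D_{x^i}]=0$, the Leibniz manipulation, and the restriction of horizontal operators are then immediate from the results already recorded in Appendix A.
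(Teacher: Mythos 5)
Your argument is correct and follows the same route the paper compresses into ``we conclude rather easily'': tangency to $\zS$ is traded for preservation of $I(\zS)$, the evolutionary commutation $[\zd_{\cal X},D_{x^i}]=0$ turns the generators into $D_x^\za\big((\ell_D{\cal X})^\zl\big)$, and Corollary \ref{RestHor} converts $(\ell_D{\cal X})|_\zS=0$ into ${\cal X}_\zS\in\ker\ell_\zS$. The only remark worth making is that your Step 1 invokes the full regular presentation of $I(\zS)$ as the $\cF$-module generated by the $D_x^\za\psi_D^\zl$, which is slightly more than needed: since $\zS$ is assumed to be a submanifold cut out by these functions, tangency of $\zd_{\cal X}$ at each point of $\zS$ is already equivalent to the pointwise condition $\zd_{\cal X}(D_x^\za\psi_D^\zl)|_\zS=0$ on the defining functions, which sidesteps the flat-function issue you flag as the main obstacle.
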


\subsection{Higher symmetries in Gauge Theory}

\begin{rem}{\em We suggest to read this Subsection after Subsection \ref{RIGTDef}.}\end{rem}

We finally explain the gauge theoretical concepts of symmetry of the Euler-Lagrange equations, symmetry of the action, and gauge symmetry. As usual, we denote the coordinates of the considered trivial bundle $\zp:E=\R^n\times\R^r\to X=\R^n$ by $(x^i,u^a)$ and the Lagrangian of the theory by ${\cal L}(x^i,u^a_\za)\,$.\medskip

As mentioned above, a vector field ${\frak X}$ of $J^0(\zp)$ with coefficients in functions of $J^\infty(\zp)$ (see Equation (\ref{GenVf})) can be prolonged to a field of $J^\infty(\zp)$ in the way described by Equation (\ref{ProlVf}) (with $\ell=\infty$). This prolongation $j^\infty{\frak X}\in\Theta(\zp)$ is the sum of a horizontal vector field $A^jD_{x^j}\in\cC\Theta(\zp)$ and an evolutionary vector field $\zd_{\frak X}\in\op{E}\!\Theta(\zp)$.\medskip

In conformity with the symmetry conditions (\ref{ClasSymPhys}) and (\ref{HighSymPhys}), we say that the {\bf generalized vector field} ${\frak X}\in\op{Der}(\cF_0,\cF)$ is a {\bf symmetry of the Euler-Lagrange equations} $\zd_{u^a}{\cal L}|_{j^k\zf}=0\,,\forall a\,$, if \be\label{HighSymPhysEL}\zd_{\frak X}(\zd_{u^a}{\cal L})\approx 0\;,\forall a\;.\ee As said before, the requirement means that the infinitesimal transformation induced by ${\frak X}$ transforms a solution into a solution up to terms of order $\ge 2$ in the infinitesimal parameter.\medskip

As for the concept of symmetry of the action, remember first a well-known fact of Lagrangian Mechanics. In Electromagnetism, the gauge the transformation $$F'=F-\p_t\zy,\quad \vec A'=\vec A+\vec\nabla\zy\;$$ ($F$ and $\vec A$ are the scalar and vector potentials, $\zy$ is a function of time and positions, and $\vec\nabla$ is the gradient) modifies the generalized electromagnetic potential $U=e(F -\vec v\cdot\vec A)$ ($e$ is the charge and $\vec v$ the velocity of the considered particle) and thus leads to different Lagrangians ${\cal L}$ and ${\cal L}'$. However, it is easily seen that the latter differ by the total derivative ${\cal L}'-{\cal L}=d_t\,\jmath$ of a function $\jmath$ of time and positions, and that the Euler-Lagrange equations associated to ${\cal L}$ and ${\cal L}'$, hence, the dynamics, are therefore the same. This observation can be extended to the present field theoretic context. Two Lagrangians ${\cal L},{\cal L}'\in\widetilde\cF$ implement the same Euler-Lagrange equations if and only if they differ by a total divergence: $$\zd_{u^a}{\cal L}=\zd_{u^a}{\cal L}',\quad \forall a\quad \Leftrightarrow\quad
{\cal L}'-{\cal L}=D_{x^i}\jmath^i,\quad \jmath^i\in\widetilde\cF\;.$$ This indicates that two action functionals $S_{\cal L}$ and $S_{{\cal L}'}$, which are defined by Lagrangians ${\cal L}$ and ${\cal L}',$ coincide (on all compactly supported sections) if and only if the underlying Lagrangians ${\cal L},{\cal L}'$ differ by a total divergence. It is thus natural to identify the space of action functionals $S_{\cal L}$ with the space of classes $[\cal L]$ of functions ${\cal L}\in\widetilde\cF$ considered up to total divergence. Alternatively, an action can be viewed as a class $[{\cal L}\dd\!x]$, where $\dd\!x=\dd x^1\ldots \dd x^n$ and where $${\cal L}\dd\!x\simeq {\cal L}\dd\!x+D_{x^i}\jmath^i\dd\!x\;.$$

A {\bf symmetry of the action} is now a generalized vector field $\frak X$, such that $$\zd_{\frak X}[{\cal L}\dd\!x]=[0]\;.$$ This definition only makes sense, if we define how the prolongation $\zd_{\frak X}$ acts on the differential form $\dd\!x$ and show that its action on $[{\cal L}\dd\!x]$ is well-defined. We confine ourselves here to mentioning that the symmetry condition finally reads $$\zd_{\frak X}{\cal L}=D_{x^i}\jmath^i\;,$$ where $\jmath^i\in\cF$, i.e., just requires that $\zd_{\frak X}{\cal L}$ be a total divergence. Moreover, any symmetry of the action is a symmetry of the Euler-Lagrange equations (but the converse is not true).\medskip

Eventually, a {\bf gauge symmetry} is a symmetry \be\label{GenVfSec}{\frak X}(f)=A^j(x^i,u^a_\za)\p_{x^j}+B^b(x^i,u^a_\za)\p_{u^b}=A^j(\p_{x^j}+u^b_j\p_{u^b})+(B^b-A^ju^b_j)\p_{u^b}\;\ee of the action, whose coefficients $$A^j=A^j(f)=A^{j}_\za D_x^\za f\quad\text{and}\quad B^b=B^b(f)=B^{b}_\zb D_x^\zb f$$ are the values of some total differential operators on an arbitrary / a varying function $f\in\cF$.\medskip

Symmetries of the action (resp., symmetries of the action obtained as value of a gauge symmetry on a specific / a fixed function $f\in\cF$) are often termed as {\bf global symmetries} (resp., {\bf local symmetries}). Further, we call {\bf symmetry in characteristic form} a symmetry given by a vertical generalized vector field $${\cal X}=C^b(x^i,u^a_\za)\p_{u^b}\in\op{Der}^{{v}}(\cF_0,\cF)\;.$$ For all types of symmetry (symmetry of the Euler-Lagrange equations, symmetry of the action, or gauge symmetry), any symmetry $\frak X$ (see Equation (\ref{GenVfSec})) provides a symmetry $${\cal X}=(B^b-A^ju^b_j)\p_{u^b}$$ in characteristic form (note that $\cal X$ is a symmetry, since $\zd_{\frak X}=\zd_{\cal X}$).

\subsection{Noether's theorems}

Einstein qualified Noether's results as a monument of mathematical thinking. The tight relationship between symmetries and conserved quantities is part of each course in Classical Mechanics. More precisely, Noether's theorems claim that there exists a 1:1 correspondence between (equivalence classes of) symmetries of the action in characteristic form and (equivalence classes of) `conserved currents', and that there exists a 1:1 correspondence between gauge symmetries in characteristic form and Noether identities \cite{Noether}, \cite{Yvette}.\medskip

The latter correspondence is via formal adjoint operators. More precisely, if $N^a_\za D_x^\za\zd_{u^a}{\cal L}\equiv 0$ is a Noether identity, we consider the total differential operator $N$ with components $N^a=N^a_\za D_x^\za$, and define the corresponding gauge symmetry in characteristic form ${\cal X}(f)=C^a(f)\p_{u^a}$ as the adjoint $N^{+}$ of $N$, i.e., by $C^a(f)=N^{a+}(f)=(-D_x)^\za\lp N^a_\za f\rp$. The converse association is similar. It follows that non-trivial Noether identities correspond to non-trivial gauge symmetries in characteristic form.

\subsection{Compatibility complex, formal exactness, formal integrability}\label{CC}

\subsubsection{Compatibility complex and formal exactness}

An overdetermined system is a system of linear equations that are not independent, so that the existence of a solution is subject to {\it compatibility conditions}.\medskip

The simplest example of an {\bf overdetermined system} is a system of linear equations $LX=C$, where $L\in\op{gl}(p\times n,\R)$, $X\in\R^n$, and $C\in\R^p$, whose rank $\zr(L)\neq p$. This means that, between the ({\small LHS}-s of the) equations, i.e., between the rows $L_{i\,\star}$ of $L$, there do exist non-trivial linear relations. In the following, we assume for simplicity that there is exactly one such relation, $L_{p\,\star}=\sum_{j=1}^{p-1}\zl_jL_{j\,\star}$, with $\zl_j\in\R$. This existence of {\bf non-trivial linear relations between the equations} is equivalent to the existence of a non-zero linear operator, in the considered case, the {\bf non-zero linear operator $\zL=(\zl_1,\ldots,\zl_{p-1},-1)\in\op{gl}(1\times p,\R)$, such that $\zL\circ L=0$}. Hence, the existence of a solution $X$ requires that $C$ satisfies the {\it compatibility condition} $C\in\ker\zL$, i.e., $C_p=\sum_{j=1}^{p-1}\zl_jC_j$. In this case, the {\bf original system reduces} to $L'X=C'$, with self-explaining notation, and, in view of our assumption, we have $\zr(L')=p-1$. Of course, a {\bf homogeneous system} always reduces. The most general {\sl solution} then depends on $n-(p-1)\ge 0$ parameters, so that $C\in\op{im}L$ and the complex $$\R^n\stackrel{L}{\longrightarrow}\R^p\stackrel{\zL}{\longrightarrow}\R$$ is {\sl exact}.\medskip

Another basic example is integration in $\R^n$, which corresponds to the system of linear {\small PDE}-s $\dd_0 f=\zw$, where $\dd_0:\Ci(\R^n)\to\zW^1(\R^n)$ is the de Rham differential. The {\bf non-trivial linear partial differential relations \be\label{NO1}\p_{x^j}\p_{x^i}f-\p_{x^i}\p_{x^j}f=0\ee between the {\small PDE}-s} can be equivalently written {\bf as $\dd_1\dd_0=0$, where the non-zero linear partial differential operator $\dd_1$} is the de Rham operator on 1-forms: $$\Ci(\R^n)\stackrel{\dd_0}{\longrightarrow} \zW^1(\R^n)\stackrel{\dd_1}{\longrightarrow} \zW^2(\R^n)\;.$$ The existence of a solution implies that the {\it compatibility condition} $\,\zw\in\ker\dd_1$ holds. Since the complex is {\sl exact}, we then have $\zw\in\op{im}\dd_0$, i.e., the considered {\small PDE} admits a {\sl solution}.\medskip

More generally, let $D\in\op{Diff}(\zp,\zp')$ be a linear differential operator between smooth sections of vector bundles $\zp:E\to X$ and $\zp':E'\to X$ over a manifold $X$. The linear (homogeneous) {\small PDE} implemented by $D\simeq\psi_D$ is called {\bf overdetermined}, if there exists a non-zero linear differential operator $\zD\in\op{Diff}(\zp',\zp''),$ such that $$\zG(\zp)\stackrel{D}{\longrightarrow}\zG(\zp')\stackrel{\zD}{\longrightarrow}\zG(\zp'')\;$$ is a complex (of $\Ci(X)$-modules). We then say that $\zD$ is a {\bf compatibility operator} for $D$, if the pair $(\zD,\zp'')$ is universal in the obvious sense.\medskip

Just as the original operator $D$ can be overdetermined (non-trivial linear differential relations between the corresponding equations -- compatibility operator), a compatibility operator $\zD$ can itself be overdetermined (relations between the relations -- new compatibility operator). This then leads to a {\bf compatibility complex} of the original operator $D\;$:  $$\zG(\zp)\stackrel{D}{\longrightarrow}\zG(\zp')\stackrel{\zD_1}{\longrightarrow}\zG(\zp'')\stackrel{\zD_2}{\longrightarrow}\zG(\zp^{'''})\stackrel{\zD_3}{\longrightarrow}\ldots\;$$

In fact, any $D\in\op{Diff}_k(\zp,\zp')$ admits a compatibility complex in the abelian category ${\tt Mod}(\cO)$ of modules over $\cO=\Ci(X)$, but not necessarily in the non-abelian category ${\tt r\Ci VB}(X)$ of finite rank smooth vector bundles over $X$. Indeed, for any $k_1\in\N$, the algebraicized $k_1$-prolongation $\psi^{k_1}_D\in\op{Hom}_\cO(\zG(\zp_{k+k_1}),\zG(\zp'_{k_1}))$ of $D$ admits a cokernel $\psi\in\h_\cO(\zG(\zp'_{k_1}),\cP_2)$ in ${\tt Mod}(\cO)$, which represents a differential operator $\zD_1\in\op{Diff}_{k_1}(\zp',\cP_2)$. Since $\psi$ is the cokernel of $\psi^{k_1}_D$, the operator $\zD_1$ satisfies \be\label{Diag}\zD_1\circ D=\psi\circ j^{k_1}\circ D=\psi\circ\psi^{k_1}_D\circ j^{k+k_1}=0\;.\ee In fact $\zD_1$ is universal and is thus a compatibility operator of $D$. When turning the crank again and again, we obtain a compatibility complex of $D$: \be\label{CompaCompl}\zG(\zp)\stackrel{D}{\longrightarrow}\zG(\zp')\stackrel{\zD_1}{\longrightarrow}\cP_2\stackrel{\zD_2}{\longrightarrow}\cP_3\stackrel{\zD_3}{\longrightarrow}\ldots\;\ee Here we actually use the algebraic approach -- in the frame of $\cO$-modules -- to differential operators, see for instance \cite{KV}, \cite{GKuP}, \cite{GKoP}. However, the $\cO$-modules $\cP_2,\cP_3,\ldots$ are not necessarily projective of finite rank, i.e., they are not necessarily modules $\zG(\zp''),\zG(\zp'''),\ldots$ of sections of vector bundles of finite rank.\medskip

In the following, we stay within the setting of algebraic differential operators and consider a diagram of the type we just used to construct a compatibility operator (see Equations (\ref{Diag}), (\ref{ProlPDECoorFree}), (\ref{ProlPDE})): \be\label{Basic}\begin{array}{ccccccccc}\cdots & \longrightarrow &  \cP_{i-1}&\stackrel{\zD_{i-1}}{\longrightarrow}&  \cP_i & \stackrel{\zD_i}{\longrightarrow} &  \cP_{i+1} &\longrightarrow &\cdots\\[0.2cm]  &&j^{\scriptscriptstyle{k_{i-1}+k_i+\ell}}\left\downarrow\rule{0cm}{.4cm}\right.\phantom && j^{\scriptscriptstyle{k_i+\ell}}\left\downarrow\rule{0cm}{.4cm}\right.\phantom\downarrow &&j^{\scriptscriptstyle\ell} \left\downarrow\rule{0cm}{.4cm}\right.\phantom &&\\[0.2cm] \cdots& \longrightarrow  & \cJ^{\scriptscriptstyle{k_{i-1}+k_i+\ell}}(\cP_{i-1})&\stackrel{\psi^{\scriptscriptstyle{k_i+\ell}}_{\zD_{i-1}}}{\longrightarrow} & \cJ^{\scriptscriptstyle{k_i+\ell}}(\cP_i) & \stackrel{\psi^{\ell}_{\zD_i}}{\longrightarrow}  & \cJ^{\ell}(\cP_{i+1}) &\longrightarrow &\cdots\\[0.2cm]\end{array}\ee
Here $\cP_{i-1},\cP_i,\cP_{i+1}$ are $\cO$-modules, $\zD_{i-1}\in\op{Diff}_{k_{i-1}}(\cP_{i-1},\cP_i)$, $\zD_{i}\in\op{Diff}_{k_{i}}(\cP_{i},\cP_{i+1})$, $\ell\in\N$, and $\cJ^k(\cP)$ is the algebraic counterpart of $\zG(J^k(P))$, where $P\to X$ is a vector bundle and $J^k(P)$ is the ordinary $k$-jet bundle (`algebraic counterpart' means that, in the geometric case $\cP=\zG(P)$, we have $\cJ^k(\cP)=\zG(J^k(P))$).\medskip

The bottom row of (\ref{Basic}) is made of prolonged algebraicized operators, or, still, prolonged formal operators (acting on formal derivatives, i.e., on jet space coordinates). The study of formal operators is referred to as the {\bf formal theory}. Note that the word `formal' appears naturally here and refers to the algebraicized or jet space setting.\medskip

It is clear (see above) that one of the main questions in the context of compatibility complexes is exactness (exactness of the top row in (\ref{Basic})), i.e., `the question whether the considered equation admits a solution whenever the compatibility condition is satisfied'. The question of exactness can of course also be considered in the (simpler) formal theory (exactness of the bottom row).\medskip

More precisely, a compatibility complex (top row) is called {\bf formally exact}, if the corresponding formal complex (bottom row) is exact, for any $\ell\in\N$. In this case, the main task is to look for criteria for exactness of the original (top row) complex.\medskip

We will not investigate the latter problem. On the other hand, it is important to know that \cite{KV}, for any sufficiently large $k_1\in\N$, the compatibility complex (\ref{CompaCompl}) is formally exact, for any operator $D$. We actually have the

\begin{prop} Any linear differential operator $D\in\op{Diff}(\zp,\zp')$ admits a formally exact compatibility complex. The same is true for any horizontal linear differential operator $D\in\cC\op{Diff}(\zp^*_\infty(\zh),\zp^*_\infty(\zh'))$.\end{prop}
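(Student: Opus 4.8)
The plan is to produce the compatibility complex by the iterated-cokernel construction already indicated around Equation~(\ref{CompaCompl}), and then to establish its formal exactness by descending to symbols, as in \cite{KV}; the horizontal assertion is obtained afterwards by a transfer argument.

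First I would fix $D\in\op{Diff}_k(\zp,\zp')$ and, for an integer $k_1$ to be chosen large later, form the algebraicized $k_1$-prolongation $\psi^{k_1}_D\in\op{Hom}_\cO(\zG(\zp_{k+k_1}),\zG(\zp'_{k_1}))$. Since ${\tt Mod}(\cO)$ is abelian, this map has a cokernel $\psi_1\colon\zG(\zp'_{k_1})\to\cP_2$, which represents a differential operator $\zD_1\in\op{Diff}_{k_1}(\zp',\cP_2)$; by the cokernel property one gets $\zD_1\circ D=0$ (Equation~(\ref{Diag})), and $\zD_1$ is universal among operators annihilating $D$, hence a compatibility operator. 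Iterating --- replacing $(\zp,\zp',D)$ by $(\zp',\cP_2,\zD_1)$, then by $(\cP_2,\cP_3,\zD_2)$, and so on --- yields the complex (\ref{CompaCompl}) in ${\tt Mod}(\cO)$. This step is purely formal; one only has to accept that the tail modules $\cP_2,\cP_3,\ldots$ need not be modules of sections of finite-rank vector bundles.

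The substantive point is formal exactness, i.e.\ exactness, for every $\ell\in\N$, of the bottom (prolonged, formal) row of a diagram of the shape~(\ref{Basic}). Here I would pass to symbols: at each point $x\in X$ the symbol sequences of the prolonged operators are sequences of $\cO$-linear maps which assemble, over the symmetric algebra $ST^*_xX$, into finitely generated graded modules. Noetherianity of $ST^*_xX$ forces the relevant syzygy data to stabilize after finitely many prolongations; equivalently, for all $k_1$ large enough the symbol of $D$ is $2$-acyclic (the pertinent Spencer cohomology vanishes), so every symbol sequence occurring along the complex is exact in the range that matters. A diagram chase based on the $\delta$-Poincar\'e lemma, using the short exact jet sequences $0\to S^{k+\ell}T^*X\0_\cO\cP\to\cJ^{k+\ell}(\cP)\to\cJ^{k+\ell-1}(\cP)\to 0$ together with induction on $\ell$, then lifts exactness from the symbol level to the prolonged complex itself, for every $\ell$. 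Fixing one such $k_1$ simultaneously for the whole complex produces the formally exact compatibility complex. I expect this uniform choice of $k_1$ to be the main obstacle: one must guarantee that a single prolongation order makes all the symbol maps along the a priori infinite complex simultaneously well-behaved, and it is precisely here that the finiteness encoded in the Noetherian property of $ST^*_xX$ is indispensable.

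For the horizontal statement I would invoke the transfer principle recorded just before the proposition: a horizontal jet bundle is an ordinary jet bundle over the enlarged base carrying the coordinates $(x^i,u^a_\za)$, with the total derivatives $D_{x^i}$ playing the role of the partial derivatives $\p_{x^i}$, and $\cC\op{Diff}(\zp^*_\infty(\zh),\zp^*_\infty(\zh'))$ identified, via Equation~(\ref{LDOCoeffGen}), with differential operators over this base. The whole formal machinery used above --- prolongations, jet sequences, symbols, Spencer cohomology --- is insensitive to the nature of the base, hence carries over verbatim; applying the first part of the statement in this setting yields a formally exact compatibility complex of horizontal operators, which is the second assertion.
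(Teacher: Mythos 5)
Your construction of the complex by iterated cokernels in ${\tt Mod}(\cO)$ is exactly the paper's (Equations (\ref{Diag})--(\ref{CompaCompl})). For the formal exactness itself the paper offers no argument --- it simply cites \cite{KV} --- so your symbol-theoretic sketch ($\zd$-Poincar\'e lemma, stabilization of the Spencer cohomology via Noetherianity of $S\,T^*_xX$, induction on $\ell$) is the standard proof standing behind that citation rather than a genuinely different route; it is a correct outline, and note that the cokernels $\cP_2,\cP_3,\ldots$ remain finitely generated $\cO$-modules even when they are not projective, so the passage to finitely generated graded symbol modules is legitimate. Two small corrections. First, the ``uniform $k_1$'' you single out as the main obstacle is not actually required: in the diagram (\ref{Basic}) each operator $\zD_i$ carries its own prolongation order $k_i$, chosen sufficiently large at its own stage of the iteration, so no single order needs to control the whole (a priori infinite) complex at once. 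Second, the horizontal transfer does not work because the machinery is ``insensitive to the nature of the base'' --- the base $J^\infty(\zp)$ is infinite-dimensional --- but because a horizontal jet bundle records only the $n$ total derivatives $D_{x^i}$ (the coordinates $u^a_\za$ enter merely as parameters in the coefficients), so the symbol algebra is still a polynomial ring in $n$ variables, now with coefficients in $\cF$, and the Noetherian stabilization argument persists verbatim; this is precisely the content of the paper's remark, following \cite{Ver}, that horizontal jet bundles are ordinary jet bundles with extra coordinates in the base.
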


\subsubsection{Formal integrability}

We now briefly comment on formal integrability of a linear partial differential equation $\zS^0$ or linear differential operator $D$.\medskip

The first observation is that the category ${\tt r\Ci VB}(X)$ is not Abelian. Indeed, kernels, like e.g., $\zS^\ell=\ker\psi^\ell_D$, are not necessarily vector bundles over $X$. The reason is that, if $\psi:E\to E'$ is a map of vector bundles over $X$, the rank $\zr(\psi_m)$ of the linear map $\psi_m:E_m\to E'_m$ may vary with $m\in X$. Then, the kernel $\ker\psi:=\coprod_{m\in X}\ker\psi_m$ is a bundle of vector spaces of varying dimension $\op{rk}(E)-\zr(\psi_m)$. However, if the rank $\zr(\psi)$ is constant, it is easily seen that the kernel $\ker\psi$ is a vector bundle over $X$. Therefore, it is natural to ask that $D\simeq\psi_D$ be {\bf regular}, i.e., that the rank $\zr(\psi^\ell_D)$ be constant, for any $\ell\in\N$, or, still, that $\zS^\ell=\ker\psi^\ell_D$ be a vector bundle over $X$, for any $\ell\in\N$.\medskip

The second remark is that, if $D$ is of order $k$, the prolongation $\zS^\ell$ is the kernel in $J^{k+\ell}(E)$ of the differential consequences $\psi^\ell_D$ up to order $\ell$ of the equation $\psi_D=0$. It follows that any solution in $J^{k+\ell+1}(E)$ of the system $\psi^{\ell+1}_D=0$ (differential consequences up to order $\ell+1$) projects by $\zp_{k+\ell,k+\ell+1}$ to a solution in $J^{k+\ell}(E)$ of the system $\psi^\ell_D=0$ (differential consequences up to order $\ell$): $$\zp_{k+\ell,k+\ell+1}\zS^{\ell+1}\subset \zS^\ell\;.$$ On the other hand, any family $j^{k+\ell}_m\zf$ ($m\in X$) of solutions of $\psi^\ell_D=0$ can be extended to a family $j^{k+\ell+1}_m\zf$ ($m\in X$) of solutions of $\psi^{\ell+1}_D=0$. Of course, the best situation is when any solution of $\psi^\ell_D=0$ can be extended to a solution of $\psi^{\ell+1}_D=0$, i.e., when $$\zp_{k+\ell,k+\ell+1}\zS^{\ell+1}=\zS^\ell\;.$$ This shows that the {\bf existence of extended formal solutions}, i.e., formal integrability, is a simplifying requirement.\medskip

Actually we say that a linear differential operator $D\simeq\psi_D$ is {\bf formally integrable}, if it is regular and if extended formal solutions do exist, i.e., more precisely, if $\zS^\ell$ is a vector bundle, for all $\ell\in\N$, and the vector bundle map $\zp_{k+\ell,k+\ell+1}:\zS^{\ell+1}\to\zS^\ell$ is surjective, for all $\ell\in\N$. In the present text, all partial differential equations $\zS^0$, even those that are not implemented by a differential operator, are assumed to be {\bf formally integrable in the sense of Remark \ref{FromIntSol}} \cite{KV}.

\section{Appendix B: Partial differential equations in algebraic $\cD$-geometry}\label{Jet}

\begin{rem}{\em This section should be read together with Section \ref{KTRHT}, where notation and motivation are explained.}\end{rem}

\subsection{A proof of Proposition \ref{VBFunAlg}}\label{ProofVBFunAlg}

Proposition \ref{VBFunAlg}, which states roughly speaking that the function algebra of the total space of a vector bundle can be viewed as an algebra over the function algebra of the base, is almost obvious. We nevertheless check the details carefully.\medskip

Let $\zp:E\to X$ be an affine morphism of schemes (i.e., a locally ringed space morphism $(\zp,\zp^\sharp): (E,\cO_E)\to (X,\cO_X)$ such that there is an affine cover of $X$ whose preimages by $\zp$ are affine), in particular a vector bundle. In the following, we consider the sheaf $\cO_E\in{\tt Sh}(E)$ as sheaf $\cO^E_X:=\zp_*\cO_E\in{\tt Sh}(X)$, where $\zp_*$ denotes the direct image of sheaves. It is known \cite{Har} that $\zp_*$ induces an equivalence of the categories ${\tt qcMod}(\cO_E)$ and ${\tt qcMod}(\cO_X)\cap\,{\tt Mod}(\cO^E_X)\,,$ with self-explaining notation. It follows that $\cO^E_X\in{\tt qcMod}(\cO_X)$. Moreover, $\cO^E_X$ is clearly a (sheaf of) commutative unital ring(s). To see that $\cO^E_X\in{\tt qcCAlg}(\cO_X)$, recall first that such an algebra is a commutative monoid in ${\tt qcMod}(\cO_X)$, i.e., that it is an object in ${\tt qcMod}(\cO_X)$ that carries an associative commutative unital multiplication, which is a morphism in ${\tt qcMod}(\cO_X)$ (and similarly for the unit). It suffices to examine the $\cO_X$-linearity of the multiplication (and of the unit -- what is also simple). Start with noticing that, for any open $V\subset X$, $f\in\cO_X(V)$ and $F\in\cO^E_X(V)=\cO_E(\zp^{-1}(V))$, the action of $f$ on $F$ is defined via the ring morphism $\zp^\sharp:\cO_X(V)\to \cO_E(\zp^{-1}(V))$ by $$f\cdot F:=\zp^\sharp(f)\star F\;,$$ where $\star$ is the ring multiplication. Hence, the multiplication $\star$ is $\cO_X(V)$-bilinear, i.e., $$\star:\cO^E_X(V)\0_{\cO_X(V)}\cO^E_X(V)\to\cO^E_X(V)$$ is $\cO_X(V)$-linear, and this presheaf morphism induces a sheaf morphism $\star:\cO^E_X\0_{\cO_X}\cO^E_X\to\cO^E_X\,$ in $\cO_X$-modules.

\subsection{Jet functor}\label{ConsJetFunc}

We give some information about the construction of the jet functor $${\cal J}^\infty:{\tt qcCAlg}({\cO}_X)\to {\tt qcCAlg}({\cD}_X):\op{For}\;$$ as left adjoint of the forgetful functor. We assume that the smooth scheme $X$ is a smooth affine algebraic variety, so that we can substitute global sections to sheaves and thus avoid sheaf-theoretic subtleties -- but the same proof goes through in the general case. We denote by $\cO$ (resp., $\cD$) the algebra $\cO_X(X)$ (resp., $\cD_X(X)$).\medskip

The functor ${\cal J}^\infty$ must be left adjoint to the forgetful functor $\op{For}$, i.e., for $B\in\cO{\tt A}:={\tt CAlg}(\cO)$ and $A\in{\cD}{\tt A}:={\tt CAlg}(\cD)$, we must have \be\label{JetAlg}\h_{ \cD{\tt A}}({\cal J}^{\infty}B,A)\simeq \h_{\cO{\tt A}}(B,\op{For}A)\;,\ee functorially in $A,B$. The construction of ${\cal J}^{\infty}B$ is quite natural. We start from the $\cD$-module $\cD\0_{\cO} B$ (in the tensor product we consider $\cD$ as endowed with its right $\cO$-module structure), and consider the $\cD$-algebra $\cS_{\cO}(\cD\0_{\cO}B)$ over $\cD\0_{\cO}B$ ($\cS$ is the symmetric tensor algebra functor). Since Equation (\ref{JetAlg}) suggests the existence of an $\cO$-algebra morphism $B\to {\cal J}^{\infty}B$, we define ${\cal J}^{\infty}B$ as the quotient of the $\cD$-algebra $\cS_{\cO}(\cD\0_{\cO}B)$ by a $\cD$-ideal such that the natural inclusion $$i:B\ni b\mapsto 1\0 b\in\cS_{\cO}(\cD\0_{\cO}B)$$ becomes an $\cO$-algebra morphism $\zP\circ i:B\to {\cal J}^{\infty}B$ when composed with the natural projection $\zP$. Since an $\cO$-algebra morphism is an $\cO$-linear map (a condition that is automatically satisfied here) that respects the multiplications and the units, we must ensure that $$\zP(1\0(bb'))=\zP(1\0 b)\odot\zP(1\0 b')=\zP((1\0 b)\odot (1\0 b'))\quad\text{and}\quad \zP(1\0 1_B)=\zP(1)\;,$$ where $1$ (resp., $1_B$) denotes the unit in $\cO$ (resp., $B$) and where $\odot$ is the symmetric tensor product (we denote the product of two residue classes by the same symbol). Hence, we consider the $\cD$-ideal $K$ generated by the elements $$D\cdot\left((1\0 b)\odot (1\0 b')-1\0(bb')\right)\in\cS_{\cO}(\cD\0_{\cO}B)\quad\text{and}\quad D\cdot\left(1\0 1_B-1\right)\in\cS_{\cO}(\cD\0_{\cO}B)\;,$$ where $D\,\cdot$ denotes the action of an arbitrary differential operator $D\in\cD$.

It now suffices to show that $${\cal J}^{\infty}: \cO{\tt A}\ni B\mapsto {\cal J}^{\infty}B:=\cS_{\cO}(\cD\0_{\cO}B)/K\in\cD{\tt A}$$ possesses the adjointness property (\ref{JetAlg}).

If $f:{\cal J}^{\infty}B\to A$ is a $\cD$-algebra morphism, the map $$\tilde{f}:B\ni b\mapsto f(\zP(1\0 b))\in \op{For}A$$ is obviously an $\cO$-algebra morphism.

Conversely, let $g:B\to \op{For}A$ be an $\cO$-algebra morphism. The map $$\bar g:\cD\0_{\cO}B\ni D\0 b\mapsto D\cdot(g(b))\in A\;$$ is a well-defined $\cD$-module morphism. Since $\cS_\cO(\cD\0_\cO B)$ is the free $\cD$-algebra over the $\cD$-module $\cD\0_\cO B$, the $\cD$-module morphism $\bar g$ can be uniquely extended to a $\cD$-algebra morphism $\bar g:\cS_{\cO}(\cD\0_{\cO}B)\to A$. As $\bar g$ vanishes on $K$ (note that $\bar g(1)=1_A$, where $1_A$ is the unit in $A$), it descends to the quotient ${\cal J}^{\infty}B$. Hence, the searched $\cD$-algebra morphism $\bar g:{\cal J}^{\infty}B\to A$.\medskip

Consider the example of a trivial line bundle $\zp:E=\R^2\ni(t,x)\mapsto t\in X=\R$ and set $\cO=\cO_X(X):=\R[t]$ and $B:=\cO_X^E(X)=\cO_E(E):=\R[t,x]\in\tt\cO A$. It is easily seen that the symmetric algebra $\cS_{\cO}(\cD\0_{\cO}B)$ coincides with the polynomial algebra $\R[t,\p_t^i\0 x^j]$, where $i,j\in\N$. When dividing the ideal $K$ out, we obtain $$\cJ^\infty(B)=\R[t,x,\p_t\0 x,\p_t^2\0 x,\ldots]\in\tt\cD A\;.$$ Indeed, the initial generator $\p_t\0 x^2$ (resp., $\p_t\0 1_B$), for instance, coincides in the quotient with $$\p_t\0 x^2=\p_t\cdot ((1\0 x)\odot (1\0 x))\quad (\text{resp.,}\;\,\p_t\0 1_B=\p_t\cdot 1)\;.$$ This generator is thus a polynomial in $\p_t\0 x$ and $1\0 x\simeq x$ (resp., is thus equal to 0, since $\p_t$ acts on the element $1$ of the $\cD$-module $\cO$) and can therefore be omitted in the quotient. Hence, the announced result. When setting $x^{(k)}:=\p_t^k\0 x$, we get $$\cJ^\infty(B)=\R[t,x,x^{(1)},x^{(2)},\ldots]\in\tt\cD A\;,$$ i.e., we obtain indeed the polynomial function algebra of the infinite jet space of $\zp$.\medskip

Observe also that by definition $\p_t\cdot x^{(k)}=x^{(k+1)}$, i.e., that \be\label{ActBaseNatActHor} \p_t\cdot x^{(k)}=(\p_t + x^{(1)}\p_x + x^{(2)}\p_{x^{(1)}}+\ldots)\,x^{(k)}=D_t\,x^{(k)}\;,\ee where $D_t$ is the total derivative. Since the vector field $\p_t\in\cD$ acts on a function in $\cJ^\infty(B)$ as derivation, the action of a differential operator of the base on a function in $\cJ^\infty(B)$ coincides with the natural action of the corresponding total differential operator.

\subsection{Differential graded algebras over differential operators with coefficients in a $\cD$-algebra}\label{CoefDO}

Let $X$ be a smooth scheme and let ${\cal A}\in{\tt qcCAlg}(\cD_X)$ with multiplication $\star$ (let us recall that $\cD_X$ is generated by the sheaf $\cO_X$ of functions and the sheaf $\Theta_X$ of vector fields). We denote the action on $a\in {\cal A}$ by $f\in\cO_X$ (resp., $\zy\in\Theta_X$) by $f\cdot a$ (resp., $\nabla_\zy\, a$).\medskip

The ${\tt qcCAlg}(\cD_X)$-morphism $\zvf:\cO_X\to \cA$, which is defined by $$\zvf(f) = \zvf(f\cdot 1_{\cO_X}) = f\cdot \zvf(1_{\cO_X}) = f\cdot 1_\cA\;,$$ is injective, since it is the composition of the injective ${\tt qcCAlg}(\cD_X)$-morphism $\cO_X\ni f\mapsto f\0 1_\cA\in\cO_X\0_{\cO_X}\cA$ and the bijective ${\tt qcCAlg}(\cD_X)$-morphism $\cO_X\0_{\cO_X}\cA\ni f\0 a\mapsto f\cdot a\in\cA\,$. Hence, an element $f\in\cO_X$ is viewed as an element in ${\cal A}$ via the identification $f\simeq f\cdot1_{\cal A}\,$, and \be\label{Uniq1}f\cdot a=f\cdot(1_{\cal A}\star a)=(f\cdot 1_{\cal A})\star a\simeq f\star a\;.\ee

The ring ${\cal A}[\cD_X]$ of {\bf differential operators on $X$ with coefficients in ${\cal A}$} is the $\cD_X$-module $${\cal A}[\cD_X]:={\cal A}\0_{\cO_X}\cD_X\;,$$ endowed with the associative unital $\R$-algebra structure $\circ$ defined, for $a,a'\in {\cal A}$, $\zy\in\Theta_X$, and $D\in\cD_X$, by \be\label{Mult1} (a\0 1_\cO)\circ (a'\0 D)=(a\star a')\0 D\ee and \be\label{Mult2} (1_{\cal A}\0 \zy)\circ(a'\0 D)=(\nabla_\zy\,a')\0 D+a'\0 (\zy\circ D)\;.\ee This multiplication is canonically extended to a first factor of the type $$a\0 (f\circ \zy\circ \zy')=((a\star f)\0 1_\cO)\circ (1_{\cal A}\0 \zy)\circ (1_{\cal A}\0 \zy')\;.$$ It is straightforwardly checked that the usual relations like, e.g., $\zy\circ\zy'=\zy'\circ\zy +[\zy,\zy']$, do not lead to any contradiction. Moreover, the embedding $${\cal A}\ni a\mapsto a\0 1_\cO\in {\cal A}[\cD_X]$$ is an associative unital algebra morphism (i.e., ${\cal A}$ is a subalgebra of ${\cal A}[\cD_X]$), whereas the embedding $$\Theta_X\ni\zy\mapsto 1_{\cal A}\0\zy\in {\cal A}[\cD_X]$$ is a Lie algebra morphism (i.e., $\Theta_X$ is a Lie subalgebra of ${\cal A}[\cD_X]$). These inclusions extend to an associative unital algebra morphism $$\cD_X\ni D\mapsto 1_{\cal A}\0 D\in {\cal A}[\cD_X]\;.$$

Let us now focus on the category ${\tt DG_+qcCAlg}(\cA[\cD_X])$ of differential non-negatively graded $\cO_X$-quasi-coherent commutative unital $\cA[\cD_X]$-algebras. As already mentioned in Equation (\ref{DGADAX}):

\begin{defi}\label{DGJDADefi} A {\bf differential non-negatively graded $\cA[\cD_X]$-algebra} is an object of the category ${\tt CMon(DG_+qcMod}(\cA[\cD_X]))$ of commutative monoids in the category of differential non-negatively graded $\cO_X$-quasi-coherent $\cA[\cD_X]$-modules, i.e., it is a differential graded commutative $\cA$-algebra, as well as a differential graded $\cD_X$-module ${\frak A}_\bullet\in {\tt DG_+qcMod}(\cD_X)$, such that vector fields act as derivations on the $\cA$-action on ${\frak A}_\bullet$ and on the multiplication of $\frak A_\bullet\,.$ A morphism of differential graded $\cA[\cD_X]$-algebras is a morphism of differential graded $\cD_X$-modules that is $\cA$-linear and respects the multiplications and the units. The category of differential graded $\cA[\cD_X]$-algebras and morphisms between them will be denoted by ${\tt DG_+qcCAlg}(\cA[\cD_X])\,.$ \end{defi}

In other words, a differential graded $\cA[\cD_X]$-algebra is a differential graded $\cA$-algebra, as well as a differential graded $\cD_X$-algebra, such that the $\cA$-action and the $\cD_X$-action are compatible in the sense that vector fields $\Theta_X\subset \cD_X$ act on the $\cA$-action $\triangleleft$ as derivations.

\begin{ex}\label{DGJDAEx}{\em Let $\cA$ be, as above, a $\cD_X$-algebra. Any differential graded $\cD_X$-algebra morphism $f:\cA\to {\cB}_\bullet$ allows to endow ${\cB}_\bullet$ with a differential graded $\cA[\cD_X]$-algebra structure, i.e., to view ${\cB}_\bullet$ as an object ${\cB}_\bullet\in {\tt DG_+qcCAlg}(\cA[\cD_X])$. Indeed, it suffices to set $$a\triangleleft b:=f(a)\star_{\cB} b\;,$$ with self-explaining notation. Verifications are straightforward. In particular, $\cA$ can be interpreted as differential graded $\cA[\cD_X]$-algebra with $\cA$-action $\triangleleft$ given by the $\cA$-multiplication $\star_\cA\,$. }\end{ex}

\subsection{Construction of non-split relative Sullivan $\cD$-algebras}\label{RSDA}

For convenience, we recall Lemma 22 of \cite{BPP4}, which is needed in the main part of this text.

\begin{lem}\label{LemRSA} Let $(T,d_T)\in\tt DG\cD A$, let $(g_j)_{j\in J}$ be a family of symbols of degree $n_j\in \N$, and let $V=\bigoplus_{j\in J}\cD\cdot g_j$ be the free non-negatively graded $\cD$-module with homogeneous basis $(g_j)_{j\in J}$.\smallskip

(i) To endow the graded $\cD$-algebra $T\0\cS V$ with a differential graded $\cD$-algebra structure $d$, it suffices to define \be\label{CondRSADiff}d g_j\in T_{n_j-1}\cap d_T^{-1}\{0\}\;,\ee to extend $d$ as $\cD$-linear map to $V$, and to equip $T\0\cS V$ with the differential $d$ given, for any $t\in T_p,\,v_1\in V_{n_1},\,\ldots,\,v_k\in V_{n_k}\,$, by \be\label{DefRSADiff}d({t}\0 v_1\odot\ldots\odot v_k)=$$ $$d_T({t})\0 v_1\odot\ldots\odot v_k+(-1)^p\sum_{\ell=1}^k(-1)^{n_\ell\sum_{j<\ell}n_j}({t}\ast d(v_\ell))\0v_1\odot\ldots\widehat{\ell}\ldots\odot v_k\;,\ee where $\ast$ is the multiplication in $T$. If $J$ is a well-ordered set, the natural map $$(T,d_T)\ni {t}\mapsto {t}\0 1_\cO\in (T\0\cS V,d)$$ is a relative Sullivan $\cD$-algebra.\smallskip

(ii) Moreover, if $(B,d_B)\in{\tt DG\cD A}$ and $p\in{\tt DG\cD A}(T,B)$, it suffices -- to define a morphism $q\in{\tt DG\cD A}(T\0\cS V,B)$ (where the differential graded $\cD$-algebra $(T\0\cS V,d)$ is constructed as described in (i)) -- to define \be\label{CondRSAMorph}q(g_j)\in B_{n_j}\cap d_B^{-1}\{p\,d(g_j)\}\;,\ee to extend $q$ as $\cD$-linear map to $V$, and to define $q$ on $T\0\cS V$ by \be\label{DefRSAMorph}q({t}\0 v_1\odot\ldots\odot v_k)=p({t})\star q(v_1)\star\ldots\star q(v_k)\;,\ee where $\star$ denotes the multiplication in $B$.\end{lem}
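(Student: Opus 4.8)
\emph{Part (i).} The plan is to exhibit $d$ as the unique degree $-1$ graded derivation of the $\cO$-algebra $T\0\cS V$ that restricts to $d_T$ on $T\0 1_\cO$ and to $v\mapsto d(v)\0 1_\cO$ on $1_T\0 V$, and then to verify the three defining properties of a $\tt DG\cD A$-structure. First I would check that the right-hand side of the displayed formula is compatible with the graded-symmetric relations of $\cS V$: transposing two adjacent factors $v_\ell,v_{\ell+1}$ multiplies the monomial by the Koszul sign $(-1)^{n_\ell n_{\ell+1}}$, and the signs $(-1)^{n_\ell\sum_{j<\ell}n_j}$ in the sum transform by exactly the same factor, so $d$ descends from the free tensor algebra to $T\0\cS V$ and is a well-defined graded derivation. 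Since $T$ and $V$ generate $T\0\cS V$ as an $\cO$-algebra and a graded derivation is determined by its values on algebra generators, this $d$ is the unique derivation agreeing with the formula.

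Next I would establish $\cD$-linearity, which is the structural heart of the statement. The $\cD$-module structure on the tensor product $T\0\cS V$ (over $\cO$) is the one for which vector fields act as derivations across both the tensor product and the symmetric product, extending the $\cD$-algebra action on $T$ and the free-module action on $V$. For each $\zy\in\Theta_X$ the graded commutator $[\zy,d]$ is again a degree $-1$ derivation, so it suffices to check that it vanishes on generators: on $T$ it equals $[\zy,d_T]=0$ because $T\in\tt DG\cD A$, and on a basis symbol $g_j$ it equals $\zy\cdot(dg_j)-d(\zy\cdot g_j)=0$, since $d$ was extended $\cD$-linearly on $V$. As $\cD$ is generated by $\cO$ and $\Theta_X$ and $d$ is visibly $\cO$-linear, $d$ is $\cD$-linear. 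With $\cD$-linearity in hand, $d^2=\tfrac12[d,d]$ is a degree $-2$ derivation that one checks on generators: on $T$ it gives $d_T^2=0$, and on $g_j$ it gives $d_T(dg_j)=0$ by the hypothesis $dg_j\in d_T^{-1}\{0\}$, which propagates to all of $V$ because $d^2$ is $\cD$-linear. Finally, the inclusion $\imath\colon t\mapsto t\0 1_\cO$ is an $\cO$-algebra morphism, is $\cD$-linear, and is a chain map because $d(t\0 1_\cO)=d_T t\0 1_\cO$; and it is a relative Sullivan $\cD$-algebra because $V$ is $\cD$-free with homogeneous basis $(g_j)_{j\in J}$ indexed by the well-ordered $J$, while the lowering property holds trivially, as $dg_j\in T_{n_j-1}\subset T\0\cS V_{<j}$.

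\emph{Part (ii).} Here the condition $q(g_j)\in d_B^{-1}\{p\,d(g_j)\}$ is precisely the chain-map equation $d_B(q(g_j))=p(d(g_j))$. I would first extend $q$ to $V$ by $\cD$-linearity (well-defined since $V$ is $\cD$-free) and then define $q$ on $T\0\cS V$ by the displayed formula, which is the unique $\cO$-algebra morphism restricting to $p$ on $T$ and to $q$ on $V$; it is well-defined on $\cS V$ because $B$ is graded-commutative and $q$ preserves degree, so the Koszul signs match those of the symmetric product. That $q$ is $\cD$-linear follows exactly as in (i): $p$ is $\cD$-linear, $q|_V$ is $\cD$-linear by construction, and an algebra morphism into the $\cD$-algebra $B$ that intertwines the vector-field actions on generators intertwines them everywhere. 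It remains to see $q\circ d=d_B\circ q$; both sides are $q$-derivations $T\0\cS V\to B$ (as $q$ is an algebra morphism and $d,d_B$ are derivations), hence determined by their values on the generators. On $T$ one gets $q(d_T t)=p(d_T t)=d_B(p(t))$ since $p$ is a chain map, and on $g_j$ one gets $q(dg_j)=p(dg_j)=d_B(q(g_j))$ using $dg_j\in T$ and the defining condition, so the two morphisms agree.

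The main obstacle is not any single hard step but the sign bookkeeping guaranteeing that $d$ and $q$ descend to the graded-symmetric algebra, together with the airtight verification of $\cD$-linearity. The subtlety is that the $\cD$-action on $T\0\cS V$ is induced by letting vector fields act through a graded Leibniz rule across the tensor and symmetric products, so one must reconcile three a priori different $\cD$-actions -- on the $\cD$-algebra $T$, on the free $\cD$-module $V$, and the induced one on the product. The device of checking each identity on algebra generators, reducing everything to statements about $T$ and the $g_j$, is what keeps these verifications routine once it is set up correctly.
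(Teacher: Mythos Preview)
Your proposal is correct and follows precisely the line the paper indicates: the paper does not actually prove this lemma here (it is recalled from \cite{BPP4}), and the only commentary offered is that $d$ is the unique differential restricting to $d_T$ on $T$ and sending $V$ into $T$, and that $q$ is the unique $\tt DG\cD A$-morphism restricting to $p$ on $T$. Your argument fleshes out exactly these uniqueness remarks with the routine checks on generators (Koszul-sign compatibility for well-definedness on $\cS V$, $\cD$-linearity via $[\zy,d]=0$ on $T$ and on the $g_j$, $d^2=0$ from $d_T(dg_j)=0$, and the chain-map identity $q\circ d=d_B\circ q$ from $q(dg_j)=p(dg_j)$), so there is nothing to add.
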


Lemma \ref{LemRSA} is natural. Indeed, the differential $d$ (Equation (\ref{DefRSADiff})) is the unique differential that restricts to $d_T$ on $T$, maps $V$ to $T,$ and provides a differential graded $\cD$-algebra structure on the graded $\cD$-algebra $T\0\cS V$. Similarly, the morphism $q$ (Equation (\ref{DefRSAMorph})) is the unique $\tt DG\cD A$-morphism $q:(T\0\cS V,d)\to (B,d_B)$ that restricts to $p:(T,d_T)\to (B,d_B)$ on $T$. Since Lemma \ref{LemRSA} allows to build relative Sullivan $\cD$-algebras, a similar construction might exist in Rational Homotopy Theory (in any case, we found this canonical construction independently).

\section{Acknowledgments}

The authors are grateful to Jim Stasheff for having read the first version of the present paper. His comments and suggestions allowed to significantly improve their text. They would also like to thank Gennaro di Brino for a discussion on an algebraic geometric issue. Further, the authors acknowledge the systematic use of online encyclopedias.

\vskip1cm
\noindent Damjan Pi\v{s}talo\\Email: damjan.pistalo@uni.lu\\ Norbert Poncin\\ Email: norbert.poncin@uni.lu \\\\University of Luxembourg\\Mathematics Research Unit\\Maison du nombre\\
6, Avenue de la Fonte\\
4364 Esch-sur-Alzette\\
Luxembourg

\end{document}